\newcommand{\R}{\mathbb R}
\def\be#1\ee{\begin{equation}#1\end{equation}}
\newtheorem{proposition}{\bf Proposition}[section]
\newcommand{\bq}{\begin{equation}}
\newcommand{\eq}{\end{equation}}
\def\bqa{\begin{eqnarray}}
\def\eqa{\end{eqnarray}}
\def\e{\epsilon}
\def\bQ{{\bf Q}}
\def\bbf{{\bf f}}
\def\bP{{\bf K}}
\newcommand{\bd}{\begin{displaymath}}
\newcommand{\ed}{\end{displaymath}}
\newcommand{\ba}{\begin{eqnarray}}
\newcommand{\ea}{\end{eqnarray}}
\def\R{\mathbb{R}}
\newenvironment{equations}{\equation\aligned}{\endaligned\endequation}
\begin{document}



\title{Impact of opinion formation phenomena in epidemic dynamics: kinetic modeling on networks}

\author{G. Albi\thanks{Department of Computer Science,
		University of Verona, Strada le Grazie 15, 37134 Verona, Italy. (giacomo.albi@univr.it)}\and E. Calzola \thanks{Department of Computer Science,
		University of Verona, Strada le Grazie 15, 37134 Verona, Italy. (elisa.calzola@univr.it)}  \and
G. Dimarco\thanks{Department of Mathematics and Computer Science \& Center for Modeling, Computing and Statistics (CMCS), University of Ferrara, via Machiavelli 30, 44121 Ferrara, Italy. (giacomo.dimarco@unife.it)} \and M. Zanella\thanks{Department of Mathematics, University of Pavia, via Ferrata 5, 27100 Pavia, ITALY. (mattia.zanella@unipv.it)}}



\maketitle

\begin{abstract}
After the recent COVID-19 outbreaks, it became increasingly evident that individuals' thoughts and beliefs can have a strong impact on disease transmission. It becomes therefore important to understand how information and opinions on protective measures evolve during epidemics. To this end, incorporating the impact of social media is essential to take into account the hierarchical structure of  these platforms. 
In this context, we present a novel approach to take into account the interplay between infectious disease dynamics and socially-structured opinion dynamics. Our work extends a conventional compartmental framework including behavioral attitudes in shaping public opinion and promoting the adoption of protective measures under the influence of different degrees of connectivity. 
The proposed approach is capable to reproduce the emergence of epidemic waves. Specifically, it provides a clear link between the social influence of highly connected individuals and the  epidemic  dynamics.
Through a heterogeneity of numerical tests we show how
this comprehensive framework offers a more nuanced understanding of epidemic dynamics in the context of modern information dissemination and social behavior.
\end{abstract}
{\bf Keywords:}  	multi-agent systems, epidemiology, opinion dynamics, 
Boltzmann equation, Fokker-Planck equation.\\
\textbf{MSC codes}: 35Q91, 91D30, 35Q84, 82B40, 92D30
\tableofcontents

\section{Introduction}\label{sec:intro}
Several mathematical approaches have provided valuable insights to understand the dynamics of infectious diseases. Traditional compartmental models are commonly based on the subdivision of the population into epidemiologically relevant classes, whose size changes in time through the introduction of suitable transition rates, see e.g. \cite{DH,HWH00,kermack1927} and the references therein. 
Such ideas have been further extended during the recent pandemic to mimic more closely realistic features of the SARS-CoV-2 infection, in this direction we mention  \cite{buonomo2020,Bruno,kantner2020, XLiu, peirlinck2020,lolipiccolomini2020,tang2020}.

However, classical models are typically not derived from first principles and are able to describe the temporal evolution of the epidemic spread only in terms of the average numerical density of individuals in each compartment, thus neglecting other relevant mechanisms related between-hosts dynamics, such the spatial movement \cite{SBKT,barbera2013,bertaglia2021,boscheri2020,colombo2020, Das24,Dutta2020, Sun,viguerie2021} and their heterogeneities \cite{BBT,GFMDC12,Pugliese,Bere}, and in-host dynamics which provides a temporal change of infectivity due to environmental factors \cite{bondesan_etal,DMLT,LGT}. We mention other important factors to take into account, like the agents attitude in relation to protective measures and non-pharmaceutical interventions \cite{KGFPS, fake2,Zanella_m3as}, together with other social features characterising the transmission dynamics, like wealth and number of  social contacts among others, see \cite{Barth, Bon, DPTZ, DPeTZ, DTZ, Nielsen, xiang18}. Uncertainty and data driven approaches also plays a relevant role and it has been discussed and addressed in several works \cite{APZ, APZ2,bertaglia2021b, bertaglia2021a,FMZ24,Rob}.

In this direction, the recent COVID-19 pandemic highlighted the need to consider not only biological and environmental factors but also the impact of the social reaction shaped by the diffusion of the disease  \cite{AM,Block,Bon,Bonde}. Among the most critical social aspects, the formation of consensus about protective measures, such as wearing masks, social distancing, and vaccination plays undoubtedly a pivotal role. How such interventions can mitigate the impact of the disease in a population, also in relation with the spread of fake news \cite{ fake2,fake1} is also a direction that is worth studying. In this work, we will further explore the kinetic approach presented in \cite{bonandin,Zanella_23}, where opinion polarization dynamics connected with the individuals response to threat is coupled with epidemiological transitions. In particular, since opinion formation is heavily influenced by social media platforms, where information and misinformation can spread rapidly and widely, we incorporate opinion dynamics in a population of agents with a connectivity structure based on the the approach presented in \cite{Albi_Calzola}. Indeed, social-media platforms can shape public perceptions and behaviors significantly, as individuals often rely on shared content to form their opinions \cite{cinelli2020}. The role of social media in this context is particularly pronounced when discussing about non-pharmacological interventions, where public adherence to recommended behaviors can drastically alter the course of an epidemic. Among the myriad of voices on social media, certain individuals, commonly referred to as influencers, also play a pivotal role in shaping public opinion. These influencers can drive individual and collective behavior towards specific actions, whether beneficial or detrimental \cite{Albi_Calzola}, and their impact on public health measures can be profound, as they can either encourage widespread adoption of protective behaviors or fuel resistance and skepticism. The resulting interactions will couple behavioral characteristics with epidemiological models to shed light on the role of highly connected individuals in shaping opinion-driven infectious dynamics. 

Starting from the above consideration and inspired by the recent development concerning kinetic models for consensus-type phenomena  \cite{Albi5,Albi4,  DFWZ, DW_15,LRT,PT13,TTZ}, we develop an integrated approach to connect the formation of opinion about preventive measures and the spreading of a disease in a multi-agent system. More in details, we first introduce, through the use of probabilistic and statistical tools, a microscopic dynamics of opinion formation with the presence of leaders and then we upscale such dynamics at the bound of observable quantities, see \cite{Cer, CPS, FPTT}, such as the number of infected. Even if, in the sequel, we concentrate ourselves on a specific compartmental model for the disease transmission, namely the SEIR dynamics, we stress that the ideas here described are not only linked to that model which can be intended as an example of how it is possible to integrate different aspects of human behavior into epidemiological dynamics. Concluding, through this presented approach, we seek to explore the interplay between disease dynamics and social behavior, providing a more comprehensive tool for managing current and future epidemics.

The rest of the work is structured as follows. In Section \ref{sec:epidyn} we present the integrated compartmental model which takes into account the opinion of the agents shared on social media, whose evolution is due two different mechanisms: a one to one interaction between agents and the influence from the social background acting on each single individual. This latter is supposed, in turns, to depend on the trend of the disease within the population. In the same part, we compute a so-called Fokker-Planck asymptotics for the process of opinion formation in presence of social network contacts which permits to recover an equation governing the time evolution of the joint density of opinions and contacts. This allows us to analytically compute, under suitable hypothesis, the steady state of the opinion distribution. Section \ref{sec:macroepi} analyzes the evolution of a surrogate kinetic epidemiological model macroscopic observable quantities. Section \ref{sec:numerics} presents several numerical simulations that show the different disease evolution for various scaling and choices of the model parameters, and we also show the effect of influent agents in shaping the public's opinion and, possibly, modifying the course of the disease. A last Section  
permits to draw some conclusions and discuss some future investigations.

\section{A new model of opinion-driven epidemic dynamics with connectivity structure}\label{sec:epidyn}
The recent COVID-19 outbreaks have shown that personal opinions on the effectiveness of protective masks and social distancing play a pivotal role in shaping the evolution of the disease. To that aim, in this section, we introduce a new compartmental model that incorporates crucial mechanisms capable of modifying the spread of an epidemic, namely individuals' opinions on protective measures and the influence of key figures on large groups. In the following, we will adopt the approach presented in \cite{Zanella_23} to embed consensus formation phenomena into an epidemic model. 

To that aim, let $c >0$ be the number of social media followers (contacts) of an agent and $v\in\Omega =[-1,1]$ the agent's opinion about the importance of using protective measures and/or to keep social distance whenever necessary. Within the above choice, we mean that when $v\equiv -1$ the agent is assumed to be totally against the use of any protection, while the case $v\equiv1$ means that an individual strongly believes on the necessity of protection and acts consequently.
In the following, we also identify the so-called influencers as the agents with a very large number of social contacts $c \gg 1$. We will stick to a compartmentalization  based on the classical Susceptible-Exposed-Infectious-Recovered point of view (SEIR from now on) for what concerns the dynamics of an infectious disease \cite{HWH00}. Such model divides the population into four distinct compartments, each representing a different stage in the progression of the disease: the Susceptible (S), individuals that are healthy and not infected, and that have the possibility to contract the disease if they come into contact with infectious individuals, the Exposed (E) individuals, who have been exposed to the pathogen but are not yet infectious themselves, the Infectious (I) which are both infected and capable of transmitting the disease to susceptible individuals, the Recovered or Removed (R) individuals, who can no longer spread the disease.

Let now $f_J(v,c,t)$, $J \in \mathcal{C}$, $\mathcal C = \{S,E,I,R\}$, denote the distribution of individuals with a given opinion $v$ and a given number of contacts $c$ at time $t>0$ belonging to the $J$th compartment. In this setting, the compartment identifies the individual status with respect to the pathogen agent. We then define $f(v,c,t)$ through
\[
\sum_{J \in \mathcal C}  f_J(v,c,t) = f(v,c,t), \qquad \int_{\Omega \times\mathbb R_+}f(v,c,t)\mathrm{d}v\mathrm{d}c =1. 
\]
to be the joint density of opinions and contacts at time $t \geq 0$ for the population irrespectively of their status with respect to the pathogen agent while  we assume that the total mass of individual is normalized. The mass fraction of the population in each compartment is therefore obtained as
\[
\int_{\Omega \times \mathbb{R}_+} f_J(v,c,t) \mathrm{d}v\,\mathrm{d}c = \rho_J(t),
\]
the mean opinion and the mean number of social contacts at time $t\geq 0$ as
\[
\rho_J m_{J,v}(t) = \int_{\Omega \times \mathbb{R}_+} v f_J(v,c,t) \mathrm{d}v \mathrm{d}c, \quad\mbox{and}\quad
\rho_J m_{J,c}(t) = \int_{\Omega \times \mathbb{R}_+} c f_J(v,c,t) \mathrm{d}v \mathrm{d}c,
\]
respectively. 
In this setting, the time evolution of the functions $f_J(v,c,t)$, $J\in \mathcal D$ is obtained by integrating the compartmentalization epidemiological description with the simultaneous characterization of the formation of a social network and opinions in a society \cite{DPTZ,Zanella_23}. This merging can be expressed by the system
\begin{equation}
	\label{new}
	\partial_t \bbf(v,c,t) = \bP(v,c,\bbf(v,c,t)) + \frac{1}{\tau_1} \bQ^1(\bbf(v,c,t))+\frac{1}{\tau_2} \bQ^2(\bbf(v,c,t)).
\end{equation}
In the above expression, $\bP$ is the $n$-dimensional vector whose components $K_J(v,c,\bbf)$, $J\in \mathcal C$, are the transition rates between compartments (now depending on the social contact variable $c$, on the opinion $v$ and on the vector $\bbf(v,c,t)=(f_J(v,c,t))_{J\in \mathcal{C}} $, while $\bQ^1$ and $\bQ^2$ are $n$-dimensional vectors whose components $Q^1_J(f_J)$, $Q^2_J(f_J)$, $J\in \mathcal C$ are suitable interaction operators describing, for the underlying compartments, the formation of the opinions in the presence of a social network under two different mechanisms which will be detailed later. Last, $\tau_1>0$ and $\tau_2>0$ measure the time-scales at which these interactions take place with respect to the time evolution of the epidemic. 
The dynamics encapsulated in the model incorporate a broader range of influences, particularly the socio-behavioral aspects that can significantly alter the course of an epidemic. By integrating individuals' opinions on protective measures and the role of influential figures in shaping public behavior, this model, as shown later, is able to capture the feedback loop between disease spread and human behavior. This approach allows for the examination of how social attitudes and communication networks can amplify or mitigate the spread of an infection. By assuming, as opposite, that the pathogen transmission does not depend upon the opinion with respect to protective measures which, in turn, depends on the social status, it is easy to see that the proposed description \eqref{new} collapses to a standard compartmentalization approach. 

Let observe that, in this work, our aim is to stress the role played by opinion and social network. For this reason, we do not consider other effects such as the physical contacts in our model and thus when referring to a contact in the sequel, we simply mean a relation which permits to exchange ideas between individuals. 
Given the general modeling ideas described in \eqref{new}, we now consider a specific compartmental model which we will then use in the rest of the paper for describing the time evolution of the outbreak. 


Let now $f_J(v,c,t)$,  with $J \in \mathcal{C}$, following the general approach \eqref{new}. Then, the time evolution for the distributions $f_J(v,c,t)$ is governed by the following set of differential equations
\begin{equation}\label{eq:seiropicont}
	\begin{array}{rcl}
		\partial_t f_S  &=&\displaystyle -K(f_S,f_I)(v,c,t) + \frac{1}{\tau_1}\sum_{H \in \mathcal{C} }Q^1_S(f_S,f_H)(v,c,t)+\frac{1}{\tau_2} Q^2_S(f_S,\rho_I)(v,c,t),\\[12pt]
		\partial_t f_E  &=&\displaystyle K(f_S,f_I)(v,c,t) - \sigma_E f_E + \frac{1}{\tau_1}\sum_{H \in \mathcal{C}}Q^1_E(f_E,f_H)(v,c,t)+ \frac{1}{\tau_2}Q^2_E(f_E,\rho_I)(v,c,t),\\[12pt]
		\partial_t f_I &=&\displaystyle \sigma_E f_E - \gamma f_I +  \frac{1}{\tau_1}\sum_{H \in \mathcal{C}} Q^1_I(f_I,f_H)(v,c,t)+\frac{1}{\tau_2} Q^2_I(f_I,\rho_I)(v,c,t),\\[12pt]
		\partial_t f_R &=&\displaystyle \gamma f_I + \frac{1}{\tau_1}\sum_{H \in \mathcal{C}}Q^1_R(f_R,f_H)(v,c,t) + \frac{1}{\tau_2} Q^2_R(f_R,\rho_I)(v,c,t),
	\end{array}
\end{equation}
where $\sigma_E > 0$  is such that $1/\sigma_E$ is the measure of the latent period of the disease and $\gamma > 0$  the recovery rate. The transmission of the disease is ruled by the local incidence rate given by
\begin{equation}\label{eq:sincidence}
	K(f_S,f_I)(v,c,t) = f_S(v,c,t)\int_{\Omega \times \mathbb{R}_+} \kappa(v,v_*;c,c_*)f_I(v_*,c_*,t)\mathrm{d}c_*\mathrm{d}v_*,
\end{equation}
with $\kappa(v,v_*;c,c_*)$ a nonnegative function that measures the impact of the protective attitudes between the different groups. Referring to \cite{Zanella_23}, an example for the function $\kappa(v,v_*;c,c_*)$ can be obtained assuming 
\begin{equation}\label{eq:sincidence_ex}
	\kappa(v,v_*;c,c_*) = \frac{\beta_1}{4^{\beta_2}}(1-v)^{\beta_2}(1-v_*)^{\beta_2},
\end{equation}
where $\beta_1>0$ is the baseline transmission rate of the epidemics and $\beta_2>0$ is a coefficient that depends on the efficacy of the protective measures. The above choice reflects the fact that the more an individual is against protective measures, meaning $v\approx-1$, the larger is the possibility to get infected, this effect being amplified if both individuals getting in contact share the same negative opinion. The opposite situation verifies when either $v\approx1$ or $v_*\approx1$ meaning that, in this case, the probability of getting infected is lower and tends towards zero. Let us observe that in \eqref{eq:sincidence_ex}, there is not any explicit dependence on the social contacts. This is natural since as previously explained $c$ does not measure the physical contacts but instead the size of the network of each person in a community. Thus, this information does not modify directly the local incidence rate $\kappa(v,v_*;c,c_*)$, however its effect is implicitly present in the process of opinions $(v,v_*)$ formation. In fact, as it will be precised later, the social network and the positioning of the influencers with respect to protective measures and social distancing causes modifications of the distribution of opinions among the population. Before ending this section, let us observe that in the simple case $\beta_2 =1$ one gets the following form for the incidence rate
\begin{equation}\label{eq:K}
	K(f_S,f_I)(v,c,t) = f_S(v,c,t)\frac{\beta_1}{4}(1-v)(1-m_{I,v}(t))\rho_I(t) \geq 0,
\end{equation}
meaning that the probability of infection depends upon the average opinion of the infectious compartment $I$: the larger is this latter, the larger becomes the probability of getting infected.


\subsection{Kinetic models for opinion formation dynamics}\label{sec:opiformation}
We now focus on the opinion formation. We associate the opinion of each agent on protective measures and/or social distance to a variable $v \in \Omega = [-1,1]$, as well as a number of contact $c\in\R_+$ representing her/his connectivity within a social-network. We start by looking at the one to one interactions among individuals. As mentioned before (Section \eqref{sec:intro}), we model two different interaction operators. The operator $Q_J^1(f_J,f_H)$ takes into account the interaction between two agents belonging to two compartments $J, H\in \mathcal{C}$. The post-interaction opinions-contact $(v^\prime,c,v_*^\prime,c) \in (\Omega\times \R_+)\times(\Omega\times \R_+)$ are defined through the binary scheme  
\begin{equation}
	\begin{split}
		\label{eq.trules}
		v^\prime &     =  v + \alpha_J P(v,v_*,c,c_*)(v_*-v) + \xi^{JH} D(v),\\
		c^\prime &= c, \\
		v^\prime_* & =  v_* + \alpha_{H} P(v_*,v,c_*,c)(v-v_*) + \xi^{HJ} D(v_*), \\
		c_*^\prime &= c_*,
	\end{split}
\end{equation}
where contacts  $c,c_*$ are assumed to be stationary.
The introduced scheme encapsulates two different processes:  compromise, weighted by the propensity function $P(\cdot,\cdot,\cdot,\cdot) \in [0,1]$ with $\alpha_J,\alpha_H>0$, and self-thinking, obtained through the centered independent and identically distributed random variables $\xi^{JH}$, $J,H \in \mathcal{C}$, with finite variance and weighted by the function $D(v)\ge0$. For all $J,H \in \mathcal{C}$, the random variables $\xi^{JH}$  are such that $\langle \xi^{JH}  \rangle = 0$ and with finite variance $\langle (\xi^{JH})^2 \rangle = \langle (\xi^{HJ})^2 \rangle =\sigma_{JH}^2$. Let us observe that in \eqref{eq.trules}, the post-interaction {$(v^\prime,v_*^\prime)$} opinions depend upon the number of connections $(c,c_*)$ of both participants to the interaction. In detail, the precise choice of the {interaction} function $P(\cdot)$ will reflect the fact that, the larger is the network of an individual, the larger is the influence that this individual has on other people and, correspondingly, the smaller is the influence that other individuals have on him/her, such as
\begin{equation}\label{eq:kernel_P}
	P(v,v_*,c,c_*) = \chi(|v-v_*|\leq \Delta(c,c_*))W(c,c_*),
\end{equation}
where $\chi(\cdot)$ is the indicator function, where $\Delta(c,c_*)\in[0,2]$, and $W(c,c_*)\in[0,1]$ a weighting function.  This choice consists in a bounded confidence model for $P$, where the threshold and the magnitude of the interactions are weighted by different measures of the relative number of contacts $(c,c_*)$. More specific forms of the function $P(\cdot)$  obeying to the above general principle will be discussed later in the numerical Section \ref{sec:numerics}.

The second {interaction operator $Q_J^2(f_J,\rho_I)$ takes into account the opinion changes due to the interaction with the epidemic state. } 
This is modeled by a {linear interaction scheme of the type proposed in} \cite{Bonde}. In particular, we consider 
\begin{equation}\label{eq:opinion_infected}
	v'' =  v + \tilde \alpha_J \tilde P(\rho_I,v,c)\left(G(\rho_I) - v\right) + \zeta^J \tilde D(v),\\
\end{equation}
where $\tilde \alpha_J$ is a positive real number, which in general depends on the compartment, $\tilde P:(0,1) \times\Omega \times (0,\infty) \to [0,1]$ is a compromise function that, in general, depends  on the opinion and on the number of connections of each agent.  We assume $\tilde P$ to be a decreasing function with respect to the absolute value of the first variable: the more extremal is the agent's opinion, the smaller is the compromise effect. The function $\tilde P$ is also assumed to be decreasing with respect to the second variable, namely influencers have strong opinions and they are less inclined to change idea through interaction with the rest of the population.
We can exemplify this function as follows
\begin{align}\label{eq:kernel_Ptilde}
	\tilde P(\rho_I,v,c) &= \psi(\rho_I)(1-v^2)\tilde W(c,\bar c),
\end{align}
where $\psi(\cdot)$ is a non decreasing function of $\rho_I$, e.g. $\psi(\rho_I)=\rho_I^r,\, r\geq 0$, measuring the impact of the background, $\tilde W(\cdot)$ is a weighting function and $\bar c$ is a fixed number of contacts.

The function $G(\cdot)$ represents the average population opinion, namely the one conveyed by the media, information channels and the government. It is defined in such a way that the higher the number of infected is, the stronger is the convincement of individuals towards protective measures. Conversely, as $\rho_I(t)$ decreases, $G(\cdot)$ moves closer to the value of $-1$: the population becomes less inclined towards social distances and the use of protective masks. For example we can consider
\begin{align}\label{eq:Gfunction}
	G(\rho_I) &= 2\chi(\rho_I\geq\overline \rho)-1,
\end{align}
with  $\overline \rho\in[0,1]$ an alert parameter based on a certain percentage of infected.  More specific forms following to the above general principles will be discussed later in the numerical Section \ref{sec:numerics}.
We finally consider, as for the case \eqref{eq.trules}, a certain amount of randomness to be involved in this interaction, that it is modeled by the diffusion function $\tilde D$ multiplied by a random variable $\zeta_J$ with mean $\langle \zeta_J \rangle = 0$ and variance $\langle \zeta_J^2 \rangle = \theta_J^2$. 

Since $v$ belongs to the bounded domain $[-1,1]$, it is important to only consider interactions that remain in this domain. A sufficient condition to preserve the bounds is {provided in the following result.}
\begin{proposition}
	{Let us consider, for any $c \in \mathbb R_+$  and $D(\pm 1) = 0$, a pre-interaction opinion pair such that $(v,v_*)\in\Omega \times\Omega$, then, the binary interaction scheme \eqref{eq.trules} is such that the post-interaction opinions preserve the bounds, i.e. $(v',v'_*)\in\Omega \times \Omega$, provided}
	\begin{equation}\label{bounds}
		0 < P(v,v_*,c,c_*) \leq 1, \quad 0<\alpha_J \leq 1/2, \quad |\xi^{JH}| \leq(1-\lambda^*)d
	\end{equation} 
	where
	\begin{equation}\label{bounds2}
		\lambda^* = \min_{\substack{J\in \mathcal{C}, \\v,v_* \in [-1,1], \\c,c_*>0}}  \alpha_J P(v,v_*,c,c_*), \quad d = \min_{\substack{v\in[-1,1], \\ c>0}} \left\{ \frac{1-|v|}{D(v)}, D(v) \neq 0\right\}.
	\end{equation} 
	Similarly, if $v\in[-1,1]$ and $\tilde D(\pm 1) = 0$, the binary interaction \eqref{eq:opinion_infected} preserves the bounds, i.e. $v''\in[-1,1]$, if 
	\begin{equation}\label{bounds3}
		0 < \tilde P(\rho_I,v,c) \leq 1, \quad 0< \tilde \alpha_J <1/2, \quad |\zeta^J| \leq(1-\tilde \lambda^*)\tilde d 
	\end{equation} 
	where
	\begin{equation}\label{bounds4}
		\tilde \lambda^* = \min_{\substack{J\in \mathcal{C}, \\\rho_I\in [0,1 ]\\c>0}}  \tilde \alpha_J \tilde P(\rho_I,v,c),\quad \tilde d = \min_{v\in[-1,1]} \left\{ \frac{1-|v|}{\tilde D(v)}, \tilde D(v) \neq 0\right\}.
	\end{equation} 
\end{proposition}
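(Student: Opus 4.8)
The plan is to exploit the affine, convex-combination structure of each binary rule: every update is a weighted average of admissible opinions plus a small stochastic term that is forced to vanish at the endpoints $v=\pm1$. For the rule \eqref{eq.trules} I would first fix $J,H\in\mathcal C$ and $c,c_*$, introduce the local compromise coefficient $\mu:=\alpha_J P(v,v_*,c,c_*)$, and rewrite the update as
\begin{equation}
v' \;=\; (1-\mu)\,v \,+\, \mu\, v_* \,+\, \xi^{JH} D(v).
\end{equation}
Under $0<P\le1$ and $0<\alpha_J\le 1/2$ one has $\mu\in(0,1/2]$, so the deterministic part $v_D:=(1-\mu)v+\mu v_*$ is a genuine convex combination of $v,v_*\in[-1,1]$ and therefore already lies in $[-1,1]$. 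The whole question thus reduces to controlling the additive perturbation $\xi^{JH}D(v)$ so that it cannot push $v_D$ outside $[-1,1]$.

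The key reduction I would carry out is to the scalar condition $|\xi^{JH}|\,D(v)\le 1-|v_D|$, which is exactly what guarantees $v'\in[-1,1]$ (treating the upper and lower bounds symmetrically). Using the triangle inequality $|v_D|\le(1-\mu)|v|+\mu$ one obtains the clean lower bound on the remaining room,
\begin{equation}
1-|v_D| \;\ge\; (1-\mu)\bigl(1-|v|\bigr).
\end{equation}
On the other side, whenever $D(v)\neq0$ the definition of $d$ in \eqref{bounds2} yields $D(v)\le(1-|v|)/d$, so that
\begin{equation}
|\xi^{JH}|\,D(v) \;\le\; |\xi^{JH}|\,\frac{1-|v|}{d} \;\le\; (1-\mu)\bigl(1-|v|\bigr),
\end{equation}
the last step using the calibrated noise amplitude from \eqref{bounds} with $\lambda^*$ the extremal value of the compromise coefficient. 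Comparing the two displays closes the estimate; the case $D(v)=0$ (in particular $v=\pm1$, where $D(\pm1)=0$) is trivial since then there is no perturbation at all.

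For the second rule \eqref{eq:opinion_infected} the argument is identical in spirit. Writing $\tilde\mu:=\tilde\alpha_J\tilde P(\rho_I,v,c)\in(0,1/2)$ and observing that the target $G(\rho_I)\in[-1,1]$ (indeed $G(\rho_I)\in\{-1,1\}$ for the choice \eqref{eq:Gfunction}), the update reads
\begin{equation}
v'' \;=\; (1-\tilde\mu)\,v \,+\, \tilde\mu\, G(\rho_I) \,+\, \zeta^{J}\tilde D(v),
\end{equation}
again a convex combination of points in $[-1,1]$ perturbed by a boundary-vanishing noise with $\tilde D(\pm1)=0$. One then repeats the two scalar estimates verbatim with $(\tilde\mu,\,G(\rho_I),\,\zeta^{J},\,\tilde D,\,\tilde d,\,\tilde\lambda^*)$ playing the roles of $(\mu,\,v_*,\,\xi^{JH},\,D,\,d,\,\lambda^*)$, so that conditions \eqref{bounds3}--\eqref{bounds4} substitute for \eqref{bounds}--\eqref{bounds2}.

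The main obstacle, and the only genuinely delicate point, is the coupling between the compromise shift and the noise. The compromise step can move the opinion closer to an endpoint, shrinking the admissible room for the stochastic fluctuation, whereas $D$ is evaluated at the pre-interaction value $v$; one must therefore verify that the \emph{post-compromise} headroom, which is of order $(1-|v|)$ near the boundary, still dominates $|\xi^{JH}|\,D(v)$. This is precisely where the hypotheses interact: $\tilde D(\pm1)=0$ and $D(\pm1)=0$ force the perturbation to vanish at least linearly in $1-|v|$, a rate quantified by $d$ (resp. $\tilde d$), while the bound $\alpha_J\le1/2$ (resp. $\tilde\alpha_J<1/2$) controls the extremal compromise strength entering through $\lambda^*$ (resp. $\tilde\lambda^*$). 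Once the reduction $1-|v_D|\ge(1-\mu)(1-|v|)$ is in hand, everything else is a routine sign analysis of $v_D$ and a distinction on whether $D(v)$ vanishes.
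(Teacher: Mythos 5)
Your proposal is correct in approach and essentially reproduces the paper's own argument: the paper proves only the second rule explicitly (deferring the first to the reference for \eqref{eq.trules}), and your convex-combination decomposition, triangle-inequality bound on the ``headroom'' $1-|v_D|\ge(1-\mu)(1-|v|)$, and case split on whether $D(v)$ vanishes are the same computation as the paper's chain $|v''|\le|v|(1-\tilde\lambda)+\tilde\lambda+|\zeta^J|\tilde D(v)$, merely rearranged. One caveat you inherit verbatim from the statement and from the paper's proof: your last step needs $|\xi^{JH}|(1-|v|)/d\le(1-\mu)(1-|v|)$, i.e.\ $1-\lambda^*\le 1-\mu$, whereas $\lambda^*$ in \eqref{bounds2} is the \emph{minimum} of $\alpha_J P$, so $\lambda^*\le\mu$ and the comparison runs the wrong way unless the extremum is attained; the paper's own proof has the identical issue (it imposes $|\zeta^J|\le(1-\tilde\lambda^*)(1-|v|)/\tilde D(v)$ where the derivation actually requires $(1-\tilde\lambda)$ with $\tilde\lambda\ge\tilde\lambda^*$), and a strictly sufficient condition would replace $\lambda^*$ by the corresponding maximum of $\alpha_J P$ (or simply by $1/2$). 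Since this is a feature of the proposition as stated rather than of your reasoning, it does not count against your proof relative to the paper's.
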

\begin{proof}
	We refer to \cite{Albi_Calzola} for the proof of the first statement, i.e. that \eqref{eq.trules} preserves the bounds under \eqref{bounds} and \eqref{bounds2}. The proof of the rest of the Proposition follows the same path: let $\tilde \lambda = \tilde \alpha_J\tilde P(\rho_I,v,c)$ when $\tilde D(v) = 0$ we have 
	\[
	\left| v'' \right| = \left| v + \tilde \lambda\left(G(\rho_I(t)) - v\right)  \right| \leq \left| v \right|(1-\tilde \lambda) +\tilde \lambda\left|G(\rho_I(t))\right| \leq 1
	\]
	since $\left| v \right|, \left|G(\rho_I(t))\right| \leq 1$ and $\tilde \lambda\in (0,1)$. When $\tilde D(v) \neq 0$
	\[
	\begin{array}{rcl}
		\left|v'' \right| &=& \left| v + \tilde \lambda\left(G(\rho_I(t)) - v\right) + \zeta^J \tilde D(v)\right| \leq \left| v \right|(1-\tilde \lambda) + \tilde \lambda\left|G(\rho_I(t)\right| + \left|\zeta^J \right|\tilde D(v) \\
		&\leq& \left| v \right|(1-\tilde \lambda) + \tilde \lambda + \left|\zeta^J \right|\tilde D(v)
	\end{array}
	\]
	leading to
	\[
	\left|\zeta^J \right| \leq \frac{(1-\tilde \lambda^*)(1-\left|v \right|)}{\tilde D(v)}, \quad \mbox{with}\quad\tilde \lambda^* = \min_{\substack{J\in \mathcal{C}, \\\rho_I\in [0,1 ]\\c>0}}  \tilde \alpha_J \tilde P(\rho_I,v,c)
	\]
	in order to guarantee that $\left|v'' \right|\leq 1$. 
\end{proof}

Furthermore, we observe that the interaction \eqref{eq.trules} is a contraction for $\alpha_J,\alpha_{J_*} \in(0,1/2]$. Specifically, using the shorthand notation $P=P(v,v_*,c,c_*)$ and $P_*=P(v_*,v,c_*,c)$, we compute 
\begin{equation}\label{contraction}
	\begin{split}
		|\langle v'-v'_*\rangle| &=|\langle v(1-(\alpha_J P+\alpha_{J_* }P_*)) -v_*(1-(\alpha_J P+\alpha_{J_*} P_*)) - \xi^{JH} D(v)+\xi^{HJ} D(v_*)\rangle|\\
		&=|v(1-(\alpha_J P+\alpha_{J_* }P_*)) -v_*(1-(\alpha_J P+\alpha_{J_*} P_*))|\leq | v-v_*|,
	\end{split}
\end{equation}
we have that the post-interaction opinions are on average closer than the pre-interaction ones. The same holds true for the $v''$ in \eqref{eq:opinion_infected}, namely we have
\begin{equation}\label{contraction2}
	\begin{split}
		&	|\langle v''-G(\rho_I(t))\rangle| =|\langle v(1-\tilde \alpha_J \tilde P(\rho_I(t),v,c)) -G(\rho_I(t))(1-\tilde \alpha_J \tilde P(\rho_I(t),v,c)) +\zeta^J \tilde D(v)\rangle| \\
		&\qquad \qquad=|v(1-\tilde \alpha_J \tilde P(\rho_I(t),v,c)) -G(\rho_I(t))(1-\tilde \alpha_J \tilde P(\rho_I(t),v,c)) |\leq | v-G(\rho_I(t))|.
	\end{split}
\end{equation}

Now, by introducing the probability density $f_J(v,c,t)$, the time evolution of agents with underlying binary interactions given by \eqref{eq.trules} and mean field interaction \eqref{eq:opinion_infected} is determined by the Boltzmann-like equation
\begin{equation}\label{eq:boltz_J}
	\partial_t f_J(v,c,t) = \frac{1}{\tau_1}\sum_{H \in \mathcal{C}} Q^1_J(f_J,f_H)(v,c,t)+\frac{1}{\tau_2} Q^2_J(f_J,\rho_I)(v,c,t),
\end{equation} 
with $\tau_1, \tau_2 > 0$ two distinct frequencies of interaction. The Boltzmann operator $Q^1_J$, in weak form, reads as
\begin{equation} \begin{aligned}
		\label{kine-ww}
		& \int_{\Omega \times \mathbb{R}_+} \varphi(v,c)Q^1_J(f_J,f_H)(v,c,t)\,\mathrm{d}v\,\mathrm{d}c  =\cr&\qquad\qquad\qquad\qquad \displaystyle
		\Big \langle\int_{\Omega^2\times \mathbb{R}_+^2}\bigl(\varphi(v',c)
		-\varphi(v,c) \bigr) f_H(v_*,c_*,t)f_J(v,c,t)
		\,\mathrm{d}v\,\mathrm{d}v_*\,\mathrm{d}c\,\mathrm{d}c_* \Big \rangle,
	\end{aligned}
\end{equation}
where $(v',v_*')$ are defined in \eqref{eq.trules}. The other operator $Q^2_J$, instead, can be written in weak form as
\begin{equation} 
	\label{kine-ww2}
	\int_{\Omega \times \mathbb{R}_+} \varphi(v,c)Q^2_J(f_J,\rho_I)(v,c,t)\,\mathrm{d}v\,\mathrm{d}c  =\displaystyle\Big \langle\int_{\Omega \times \mathbb{R}_+}\bigl(\varphi(v'',c) -\varphi(v,c) \bigr) f_J(v,c,t) \,\mathrm{d}v\,\mathrm{d}c \Big \rangle,
\end{equation}
with $v''$ defined in \eqref{eq:opinion_infected}. 

We stress that the interaction operators $Q^1_J$ and $Q^2_J$ do not contain any direct variation with respect to the network shape since the distribution of the social connections is stationary.
Here we will assume that the contact are distributed according to a fixed density that we will denoted by $h(c)$. In particular, we can express the joint density as  $f_J(v,c,t)  = g_J(v|c,t)h(c)$ where $g_J(v|c,t)$ denotes the conditional density of opinions for a given level of contact within the compartment $J$. Among various choice of the density $h(c)$ we are interested in capturing the structure of network in social media, 
such as the following log normal distributions  Among the various choices of density $h(c)$ we will consider a lognormal distribution of the following form
\begin{equation}\label{eq:logn}
	h(c) = \frac{1}{\sqrt{2 \pi \Gamma} c}\text{exp} \left\{ -\frac{(\text{ln} c - \Lambda)^2}{2 \Gamma}\right\}, \qquad c\in(0,+\infty), \Gamma, \Lambda >0 .
\end{equation} 
Remarkably, the shape of the distribution \eqref{eq:logn}  fits real network structures particularly well, as detailed in \cite{Albi_Calzola}. Indeed, \eqref{eq:logn} has been obtained as the stationary solution of a Fokker-Planck equation, modeling the evolution of social connections on online platforms on a social network. In particular the study done in \cite{Albi_Calzola} refers to networks on Twitter (now $\mathbb{X}$). 
Thus, the choice we have done in this work of using this model for social connections is motivated by the fact that, among various platforms, Twitter is primarily focused on information dissemination and this makes it particularly suitable for studying opinion formation, especially in relation to the spread of epidemics and the relative role of influential individuals in shaping ideas in a community.

\subsection{Derivation of Fokker-Planck-type asymptotics}\label{sect:fp}
The dynamics described in \eqref{kine-ww}-\eqref{kine-ww2} is able to characterize the formation of opinion in a society. However, in real situations, it happens that a single interaction determines only a small change of the opinion $v$. Moreover, from the mathematical point of view studying the qualitative behaviors of \eqref{kine-ww}-\eqref{kine-ww2} and consequently getting some insights on the this model under the knowledge of the network \eqref{eq:logn} is very difficult. For both reasons, in order to investigate in detail the behavior of the system, one can rely on a simplified approach obtained by rescaling both the interaction and diffusion parameters in \eqref{kine-ww}-\eqref{kine-ww2} through a quasi invariant limit as done for instance in the classical Boltzmann equation for deriving the well-known Fokker-Planck equation \cite{Cer}.

Thus, in order to model the fact that, the formation of the opinions is due to a large number of interactions, each one producing a small change in the point of view of individuals up to the moment in which the final opinion is reached, we rely on the quasi-invariant scaling
\begin{equation}\label{eq:scaling}
	\alpha_J\to\e\alpha_J,\qquad \sigma_{JH}^2  \to \e \sigma_{JH}
	^2,\quad\mbox{and}\quad
	\tilde \alpha_J\to \e\tilde \alpha_J,\qquad \theta_J^2  \to \e \theta_J
	^2,
\end{equation} 
for the small parameter $\e\ll 1$.

We assume that the scaled random variables $\xi_{\e}^J$, $\xi_{\e*}^{J_*}$ and $\zeta_{\e}^J$ are independent, 
and that the following relations hold true
\begin{equation}
	\langle\xi_{\e}^{JH}\rangle =\langle\xi_{\e}^{HJ}\rangle= 0,\quad \langle(\xi_{\e}^{JH})^2\rangle =\langle(\xi_{\e}^{HJ})^2\rangle = \e\sigma_{JH}^2,\quad  \langle(\xi_{\e}^{JH})^3\rangle=\langle(\xi_{\e}^{HJ})^3\rangle < +\infty,
\end{equation} 
and that 
\begin{equation}
	\langle\zeta_{\e}^J\rangle = 0,\quad \langle(\zeta_{\e}^J)^2\rangle = \e\theta_J^2, \quad  \langle(\theta_{\e}^J)^{3}\rangle<+\infty.
\end{equation}

Hence, we consider the time parametrization $t_\e = \epsilon t$ for the weak equations \eqref{eq:boltz_J} as follows
\begin{equation}\label{eq:boltz_J_scaled}
	\begin{aligned}
		\frac{d}{dt_\e}\int_{\Omega\times\mathbb{R}_+} f_{J,\e}(v,c,t)\varphi(v,c)\,\mathrm{d}v\,\mathrm{d}c &= \frac{1}{\e\tau_1}\sum_{H \in \mathcal{C}} \int_{\Omega \times \mathbb{R}_+} Q^1_J(f_{J,\e}f_{H,\e})(v,c,t)\varphi(v,c)\,\mathrm{d}v\,\mathrm{d}c\cr
		&\qquad\qquad+  \frac{1}{\e\tau_2}	\int_{\Omega \times \mathbb{R}_+}Q^2_J(f_{J,\e},\rho_I)(v,c,t)\varphi(v,c)\,\mathrm{d}v\,\mathrm{d}c,
	\end{aligned}
\end{equation} 
and to retrieve an asymptotic Fokker-Planck model we consider the scaling \eqref{eq:scaling} and the resulting limit for $\e \to 0$.
In what follows we describe the asymptotics related to the operators $Q_J^1$ and $Q_J^2$, and  eventually the resulting Fokker-Planck model.
\paragraph{Grazing limit for $Q^1_J$}
Using the properties of the random quantities $\xi^{JH}, \xi^{HJ}$ and \eqref{eq.trules}, we have that
by expanding the smooth function $\varphi$ in Taylor series up to order two we get
\begin{equation}\label{eq:Tayex}
	\begin{aligned}
		&\langle \varphi(v',c)-\varphi(v,c) \rangle =
		\e\left( \alpha_J P(v,v_*,c,c_*)(v_*-v)\partial_v \varphi  + \frac 12 \sigma_{JH}^2 D(v)^2 \partial^2_v \varphi \right)
		\cr
		&+ \frac {\e^2}2 \alpha_J^2P(v,v_*,c,c_*)^2(v_*-v)^2 \partial^2_v \varphi+ \e^2R_\e(v,v_*,c,c_*),
	\end{aligned}
\end{equation} 
where the reminder of the Taylor expansion $R_\e(v,v_*,c,c_*)$ is expressed as follows
\begin{equation}\label{eq:Tayrem}
	\begin{aligned}
		R_\e(v,v_*,c,c_*) = &\displaystyle  \frac{1}{6} \partial^3_v \varphi  (\hat v,c) \left( \e\alpha_J^3P(v,v_*,c,c_*)^3(v_*-v)^3 +  \e^{-1/2}\varrho D^3(v)\right.\cr
		& \left.+ 3 \alpha_J\sigma_{JH}^2 P(v,v_*,c,c_*)(v_*-v)D(v)^2\right)
	\end{aligned}
\end{equation}
for $\hat v = \vartheta_v v'+(1-\vartheta_v)v$ with $\vartheta_v\in[0,1]$.  The time scaled interaction operator $Q^1_J$ in weak form writes
\begin{equation}\label{eq:weakfp}
	\begin{aligned}
		&\displaystyle\frac{1}{\e}\int_{\Omega\times\R_+} Q^1_J(f_{J,\e},f_{H,\e})\varphi(v,c)\,\mathrm{d}v\,\mathrm{d}c  =\frac{1}{\e }\int_{\Omega\times\R_+}\int_{\Omega\times\R_+}\langle \varphi(v',c)-\varphi(v,c) \rangle\,\mathrm{d}v\,\mathrm{d}c\mathrm{d}v_*\,\mathrm{d}c_*  \\
		&\quad\displaystyle
		=	\int_{\Omega\times\R_+} \left( \mathcal{P}_J[f_{H,\epsilon}](v,c,t_\e)\partial_v \varphi+ \frac 12 \sigma_{JH}^2 D^2(v) \partial^2_v \varphi \right)f_{J,\epsilon}(v,c,t_\e)\,\mathrm{d}v\,\mathrm{d}c + \mathcal {R}^1_\e(\varphi),
	\end{aligned}
\end{equation} 
where we introduced the following notation for the non-local operator
\[
\mathcal{P}_J[f_{H,\epsilon}](v,c,t_\e) =\alpha_J \int_{\Omega\times\R_+} P(v,v_*,c,c_*)(v_*-v) f_{H,\epsilon}(v_*,c_*,t_\e)\,\mathrm{d}v_*\,\mathrm{d}c_*,
\]
and where the scaled reminder $\mathcal{R}^1_\e(\varphi)$ vanishes in the limit $\e\to 0$. Hence, reverting to strong form we can define the Fokker-Planck operator as follows
\begin{equation}\label{eq:ope1_scaled}
	\overline Q^1_J(f_J,f_H)(v,c,t) := -\partial_v\left(   \mathcal{P}_J[f_H](v,c,t) f_J \right) + \frac 12 \sigma_{JH}^2 \partial^2_v (D^2(v)f_J),
\end{equation}
where we indicate with a slight abuse of notation the density $f_J$ in the limit to zero for $\e$.

\paragraph{Grazing limit for $Q_J^2$} Let us now focus on the grazing limit for the operator $Q_J^2$. 
Similarly to the previous paragraph, we compute the Taylor expansion up to order three of $\varphi(v,c)$ around $(v,c)$ with respect to $v$ 
and we get
\begin{equation}\label{eq:Tayex2}
	\begin{aligned}
		\langle \varphi(v'',c)-\varphi(v,c) \rangle = \displaystyle \e\left( \tilde \alpha_J\tilde P(\rho_I,v,c)\left(G(\rho_I(t)) - v\right)\partial_v \varphi + \frac 12 \theta^2_J \tilde D^2(v) \partial^2_v \varphi \right)  \\
		\qquad\displaystyle + \frac {\e^2}2 \left( \tilde \alpha_J^2\tilde P^2(\rho_I,v,c)\left(G(\rho_I(t)) - v\right)^2 \partial^2_v \varphi \right) +  \e^2\tilde R_{ \e}(v,c),
	\end{aligned}
\end{equation} 
with remainder of the expansion expressed as
\[
\begin{aligned}
	\tilde R_{\e}(\hat v,c) =\displaystyle\frac{1}{6} \partial^3_v \varphi(\hat v,c) \left(\e\tilde \alpha_J^3\tilde P^3(\rho_I, v,c)\left(G(\rho_I(t)) - v\right)^3 + 3\tilde \alpha_J\tilde P(\rho_I, v,c) \left(G(\rho_I(t)) -  v\right)\theta^2_J \tilde D^2( v)\right),
\end{aligned}
\] 
for $\hat v = \tilde\vartheta_v v''+(1-\tilde\vartheta_v)v$, with $\tilde\vartheta_v \in [0,1]$. The time scaled operator $Q^2_J$ in weak form writes
\begin{equation}\label{eq:weakfp2}
	\begin{aligned}
		&\displaystyle\frac{1}{\e}\int_{\Omega\times\R_+} Q^2_J(f_{J,\e},\rho_I)\varphi(v,c)\,\mathrm{d}v\,\mathrm{d}c  =\frac{1}{\e }\int_{\Omega\times\R_+}\langle \varphi(v'',c)-\varphi(v,c) \rangle\,\mathrm{d}v\,\mathrm{d}c  \\
		&\qquad=	\int_{\Omega\times\R_+} \left( \tilde \alpha_J\tilde P(\rho_I,v,c)\left(G(\rho_I(t)) - v\right)\partial_v \varphi + \frac 12 \theta^2_J \tilde D^2(v) \partial^2_v \varphi \right)f_{J, \epsilon}(v,c,t_{\e})\,\mathrm{d}v\,\mathrm{d}c + \mathcal {\tilde R}_{\e}(\varphi),
	\end{aligned}
\end{equation}
where the scaled integral reminder 
$\mathcal {\tilde R}_{\e}(\varphi)$ vanishes for $\epsilon \to 0$, and reverting to strong form we can write the Fokker-Planck operator as follows
\begin{equation}\label{eq:ope2_scaled}
	\overline Q^2_J(f_J,\rho_I)(v,c,t) := -\partial_v \left(\tilde \alpha_J\tilde P(\rho_I,v,c)\left(G(\rho_I(t)) - v\right) f_J \right) +\frac{\theta^2_J}{2} \partial^2_v \left(\tilde D^2(v) f_J \right).
\end{equation}
where we indicate with a slight abuse of notation the density $f_J$ in the limit to zero for $\e$.
\paragraph{Fokker-Planck model}  
Following the previous discussions, considering the scaling \eqref{eq:scaling} the limiting Fokker-Planck model for \eqref{eq:boltz_J} reads as follows
\begin{equation}\label{eq:fp_full}
	\partial_t f_J(v,c,t) = \frac{1}{\tau_1}\sum_{H \in \mathcal{C}} \overline Q^1_J(f_J,f_H)(v,c,t)+\frac{1}{\tau_2} \overline Q^2_J(f_J,\rho_I)(v,c,t),
\end{equation}
with  $ \overline Q^1_J$ and $ \overline Q^2_J$  defined respectively in \eqref{eq:ope1_scaled} and \eqref{eq:ope2_scaled}. Furthermore, we assume that model \eqref{eq:fp_full} is complemented the following set of  zero-flux boundary conditions for $c\in\mathbb{R}_+$
\begin{align}\label{eq:bc}
	\displaystyle&\left(\frac{1}{\tau_1}\sum_{H \in \mathcal{C}}\mathcal{P}_J[f_H](v,c,t)  +\frac{1}{\tau_2}(G(\rho_I)-v)\right)f_J
	- \partial_v\left[ \left( \frac{1}{2\tau_1} \sigma_J^2 D^2(v) + \frac{1}{2\tau_2}\theta_J^2 \tilde D^2(v) \right)f_J\right]\bigg|_{v = \pm 1} = 0, \\
	\displaystyle&\qquad\qquad \left(\sigma_J^2D^2(v)+ \theta_J^2\tilde D^2(v)\right)f_J\bigg|_{v = \pm 1} = 0,
\end{align}
for $c\in\mathbb{R}_+$
for all $t \geq 0$, $J\in \mathcal{C}$ and where we introduce the parameter $\sigma_J^2:=\sum_{H\in\mathcal{C}} \sigma_{JH}^2$.
\subsubsection{Steady state of the opinion distribution}\label{sec:steadystate}
We now consider the two above detailed mechanisms of opinion formation in the grazing collision limit together. In this general framework, the steady solution of \eqref{eq:fp_full} is {very hard to compute analytically}. Under the following simplifying assumptions
\begin{align*}
	&P(v,c)=\tilde P(\rho_I,v,c) =1, \qquad D(v)=\tilde D(v)= \sqrt{1-v^2},
\end{align*}
we can assume separation of variables for $f_J(v,c,t) = h(c)g_J(v,t)$, and obtain explicit equilibrium for the opinion marginal $g_J(v,t)$. Indeed,  
exploiting the separation of variables we rewrite \eqref{eq:fp_full} as follows
\begin{equation}\label{eq:fp_opinioni2part}
	\begin{split}
		&\partial_t g_J =-\partial_v\left(\left(\frac{1}{\tau_1}\alpha_J\left(M_v(t) - v\right)+\frac{1}{\tau_2}\tilde\alpha_J(G(\rho_I)-v)\right)g_J\right)
		+\left(\frac{\tau_2\sigma_J^2+\tau_1\theta_J^2}{2\tau_1\tau_2}\right)\partial^2_v\left( \left(1-v^2\right)g_J\right),
	\end{split}
\end{equation}
where we introduced the momentum $M_v(t)$ as the sum of over the momentum of each population as follows
\begin{align}\label{eq:momentum_opi}
	M_v(t) = \sum_{H\in\mathcal{C}}\rho_Hm_{H,v}(t) = \sum_{H\in\mathcal{C}}\int_{-1}^1v g_H(v,t)\ dv.
\end{align}
Hence, integrating against $v$ equation \eqref{eq:fp_opinioni2part} and considering boundary conditions \eqref{eq:bc} we retrieve conservation of mass $\rho_J$. The evolution of the momentum $\rho_Jm_{J,v}(t)$ is obtained by multiplying \eqref{eq:fp_opinioni2part} by $v$ and integrating against $v$ as follows
\begin{equation}\label{eq:fp_momentum}
	\begin{split}
		&\partial_t (\rho_Jm_{J,v}(t)) =\frac{1}{\tau_1\tau_2}\left(\tau_2\alpha_J\left(\sum_{H\in\mathcal{C}}\rho_H m_{H,v}(t) - m_{J,v}(t)\right)+\tau_1\tilde\alpha_J\left(G(\rho_I)-m_{J,v}(t)\right)\right)\rho_J,
	\end{split}
\end{equation}
which is not a conserved quantity. Nevertheless, if we further assume that $\alpha_J = \alpha$, $\tilde \alpha_J = \tilde \alpha$, we can sum  \eqref{eq:fp_momentum} over the compartments to obtain the total momentum as
\begin{equation}\label{eq:fp_momentum_tot}
	\begin{split}
		&\frac{d}{dt} M_v(t) =\frac{\tilde\alpha}{\tau_2}(G(\rho_I)-M_v(t)).
	\end{split}
\end{equation}
%
%
In order to find the stationary solution $g_{J,\infty}$ of \eqref{eq:fp_opinioni2part} we solve
\[
\frac{1}{2}\left(\frac{\sigma_J^2\tau_2+\theta_J^2\tau_1}{\tau_1\tau_2}\right)\frac{\mathrm{d}\left(\left(1-v^2\right)g_{J,\infty}\right)}{\mathrm{d}v} = \left(\frac{\alpha\tau_2 M^\infty_{v} + \tilde \alpha\tau_1 G(\rho_I)}{\tau_1\tau_2} - v\left(\frac{\alpha\tau_2+\tilde \alpha\tau_1}{\tau_1\tau_2}\right)\right)g_{J,\infty},
\]
where $M^\infty_v$ corresponds to the equilibrium of \eqref{eq:fp_momentum_tot} for $t\to\infty$ corresponding to $M_v^\infty =G(\rho_I)$. 
Thus, we can rewrite the previous relation as
\begin{equation}\label{eq:ode_beta}
	\frac{\mathrm{d}\left(\left(1-v^2\right)g_{J,\infty}\right)}{\mathrm{d}v} = \frac{2\left(\alpha\tau_2+\tilde \alpha\tau_1\right)}{\sigma_J^2\tau_2+\theta_J^2\tau_1}\left(G(\rho_I) - v\right)g_{J,\infty}.
\end{equation}
The solution to equation \eqref{eq:ode_beta} is a Beta-type distribution that can be written as 
\begin{equation}\label{eq:sol_beta}
	g_{J,\infty}(v) = \rho_{J}\frac{4^{(L_J-1)/L_J}}{\mbox{B}\left(\frac{1+G(\rho_I)}{L_J},\frac{1-G(\rho_I)}{L_J}\right)} \left(\frac{1}{1+v}\right)^{1-\frac{1+G(\rho_I)}{L_J}}\left(\frac{1}{1-v}\right)^{1-\frac{1-G(\rho_I)}{L_J}}
\end{equation}
where we introduced the parameter
\begin{equation}\label{eq:LM}
	L_J= \frac{\sigma_J^2\tau_2+\theta_J^2\tau_1}{\left(\alpha\tau_2+\tilde \alpha\tau_1\right)}.
\end{equation}
We highlight that the first two moments of this Beta-type distribution are given by
\begin{equation}\label{eq:Beta-moments}
	\begin{aligned}
		\rho_Jm_{v,J}^\infty &= \int_{-1}^1 v g_{J,\infty}(v)\mathrm{d}v = G(\rho_I)\rho_{J}, \\
		\rho_Jm^{(2),\infty}_{v,J}& =\quad\int_{-1}^1 v^2 g_{J,\infty}(v) \mathrm{d}v = \left(\frac{L_J}{2+L_J} + \frac{2G(\rho_I)^2}{2+L_J}\right)\rho_{J}. 
	\end{aligned}
\end{equation}	

\section{Macroscopic contacts and opinion-based epidemic dynamics}\label{sec:macroepi}

As discussed in Section \ref{sect:fp}, in the quasi-invariant limit we have the possibility to study the equilibrium distribution of the operators $\overline{Q}^1_J (\cdot, \cdot)$, $\overline{Q}^2_J (\cdot, \cdot)$ defined in \eqref{eq:ope1_scaled} and \eqref{eq:ope2_scaled}. Hence, we can study the macroscopic behavior of a surrogate kinetic epidemic model by computing the evolution of observable macroscopic equations with opinion-based incidence rate. Indeed, thanks to the derivation of the Fokker-Planck-type operators , we can rewrite system \eqref{eq:seiropicont} expressing the time evolution of a disease under the combined effect of the opinion and social network obtaining
\begin{equation}\label{eq:seiropicont_fp}
	\begin{aligned}
		\partial_t f_S  &=\displaystyle -K(f_S,f_I)(v,c,t) + \frac{1}{\tau_1}\sum_{J \in \mathcal{C}}\overline Q^1_S(f_S,f_J)(v,c,t)+\frac{1}{\tau_2}\overline Q^2_S(f_S,\rho_I)(v,c,t),\\
		\partial_t f_E  &=\displaystyle K(f_S,f_I)(v,c,t) - \sigma_E f_E + \frac{1}{\tau_1}\sum_{J \in \mathcal{C}}\overline Q^1_E(f_E,f_J)(v,c,t)+ \frac{1}{\tau_2}\overline Q^2_E(f_E,\rho_I)(v,c,t),\\
		\partial_t f_I &=\displaystyle \sigma_E f_E - \gamma f_I +  \frac{1}{\tau_1}\sum_{J \in \mathcal{C}} \overline Q^1_I(f_I,f_J)(v,c,t)+\frac{1}{\tau_2}\overline Q^2_I(f_I,\rho_I)(v,c,t),\\
		\partial_t f_R &=\displaystyle \gamma f_I + \frac{1}{\tau_1}\sum_{J \in \mathcal{C}}\overline Q^1_R(f_R,f_J)(v,c,t) + \frac{1}{\tau_2}\overline Q^2_R(f_R,\rho_I)(v,c,t).
	\end{aligned}
\end{equation}
It becomes particularly interesting now to observe the temporal evolution of the disease independently of the temporal dynamics of both opinion and formation of a social network. By integrating the above system \eqref{eq:seiropicont_fp} both with respect to the opinion $v$ and the contact variable $c$, we can study the disease progression by taking these factors into account implicitly. By performing such operation, we can gain a clearer understanding of the course of the outbreak and underlying mechanisms. This can also help in identifying key stages in the disease progression, critical points for intervention, and potential outcomes under various scenarios.

Thus by using as operator $K$ the one defined in \eqref{eq:K}, we get the evolution of the mass functions $\rho_J(t)$, namely the relative number of susceptible, exposed, infected and recovered, which reads 
\begin{equation}\label{eq:seiropicont_rho}
	\begin{aligned}
		\frac{\mathrm{d}}{\mathrm{d}t} \rho_S(t) &=-\frac{\beta_1}{4}\left( 1 - m_{S,v}(t)\right)\left(1 - m_{I,v}(t)\right)\rho_S(t)\rho_I(t), \\
		\frac{\mathrm{d}}{\mathrm{d}t} \rho_E(t) &=\frac{\beta_1}{4}\left( 1 - m_{S,v}(t)\right)\left(1 - m_{I,v}(t)\right)\rho_S(t)\rho_I(t)  - \sigma_E\rho_E(t), \\
		\frac{\mathrm{d}}{\mathrm{d}t} \rho_I(t) &=\sigma_E\rho_E(t) - \gamma \rho_I(t), \\
		\frac{\mathrm{d}}{\mathrm{d}t} \rho_R(t) &= \gamma \rho_I(t).
	\end{aligned}
\end{equation}
However, one can notice that the above system is not closed, since the evolution of $\rho_J$ depends on the evolution of $m_{J,v}$, i.e. the first order moments with respect to the opinion variable. The closure of system \eqref{eq:seiropicont_fp} can be formally obtained thanks to the following considerations. The first is about the time scale at which individuals shape their ideas with respect to the time scale at which the epidemic evolves. It has been observed for instance during the COVID-19 outbreak that people used to react differently with respect to the use of protective masks and social distance during the course of the year. This observation permits to  take $\tau_1,\tau_2 \ll 1$ in \eqref{eq:seiropicont} and suppose that each compartment $J\in \mathcal{C}$ reaches its local equilibrium with a given mass fraction $\rho_J$ and with a local mean opinion $m_{J,v}$.
Under the above hypothesis, we can get the time evolution of the mean values, following the procedure in \cite{Zanella_23}. Thus we multiply \eqref{eq:seiropicont_fp} by the opinion $v$ and we integrate over $\Omega$ and $\mathbb{R}_+$ for the number of contacts, to obtain a system of equations for $\rho_Jm_{J,v}(t)$, which again depends on the higher order moment $\rho_Sm^{(2)}_{S,v}(t)$ and for $J=S$ reads as
\begin{equation*}
	\begin{aligned}
		&\displaystyle\frac{\mathrm{d}}{\mathrm{d}t}\left( \rho_S(t) m_{S,v}(t) \right) =\displaystyle-\frac{\beta_1}{4}\left( 1  - m_{I,v}(t) \right)\rho_I(t)\rho_S(t)\left(m_{S,v}(t)-\int_{\mathbb R^+}\int_{-1}^1v^2f_S(v,c,t)\mathrm{d}v\,\mathrm{d}c \right)\cr
		&\qquad\quad\displaystyle+\frac{1}{\tau_1}\sum_{J\in\mathcal{C}}\rho_J(t)\rho_S(t)\left(m_{J,v}(t) - m_{S,v}(t)\right) 
		+ \frac{1}{\tau_2}\rho_S(t)\left(G(\rho_I) - m_{S,v}(t)\right).
	\end{aligned}
\end{equation*}
Hence, in order to close the system we consider an equilibrium closure approach based on the Beta distribution \eqref{eq:sol_beta} whose first two moments are defined in \eqref{eq:Beta-moments}. In detail, we get
\[
\int_{\mathbb R^+}\int_{[-1,1]}v^2 f_S(v,c,t)dv\,dc \approx \int_{\mathbb R^+} \int_{[-1,1]} v^2 g_{S,\infty}(v)h(c)\mathrm{d}v \,\mathrm{d}c.
\] 
Combining this with the mass system \eqref{eq:seiropicont_rho} we can write the system of equation for $m_J(t)$ as follows
	\begin{equation}\label{eq:seiropicont_mw}
		\begin{aligned}
			&\displaystyle\frac{\mathrm{d}}{\mathrm{d}t}m_{S,v}(t) = \displaystyle-\frac{\beta_1}{4}\left( 1  - m_{I,v}(t)\right)\rho_I(t) \left(m_{S,v}^2-m_{S,v}^{(2)} \right)
			\cr
			&\qquad\quad \displaystyle+\frac{1}{\tau_1}\sum_{J\in\mathcal{C}}\rho_J(t)\left(m_{J,v}(t) - m_{S,v}(t)\right)+ \frac{1}{\tau_2}\left(G(\rho_I) - m_{S,v}(t)\right), \\
			&\displaystyle\frac{\mathrm{d}}{\mathrm{d}t} m_{E,v}(t)=\displaystyle\frac{\beta_1}{4}\frac{\rho_S(t) \rho_I(t)}{\rho_E(t)}\left( 1  - m_{I,v}(t) \right) \left[m_{S,v}(t) - m_{S,v}^{(2)} - m_{E,v}(t)(1-m_{S,v}(t) )\right] \cr
			&\qquad\quad \displaystyle +\frac{1}{\tau_1}\sum_{J\in\mathcal{C} }\rho_J(t)\left(m_{J,v}(t) - m_{E,v}(t)\right)+ \frac{1}{\tau_2}\left(G(\rho_I) - m_{E,v}(t)\right), \\
			&\displaystyle\frac{\mathrm{d}}{\mathrm{d}t}m_{I,v}(t) =\displaystyle\sigma_E \frac{\rho_E(t)}{\rho_I(t)}\left(m_{E,v}(t) - m_{I,v}(t) \right) \cr
			&\qquad\quad \displaystyle + \frac{1}{\tau_1}\sum_{J\in\mathcal{C} }\rho_J(t)\left(m_{J,v}(t) - m_{I,v}(t)\right)  + \frac{1}{\tau_2}\left(G(\rho_I) - m_{I,v}(t)\right), \\
			&\displaystyle\frac{\mathrm{d}}{\mathrm{d}t}m_{R,v}(t)= \displaystyle\gamma \frac{\rho_I(t)}{\rho_R(t)} \left(m_{I,v}(t)-m_{R,v}(t) \right)\cr
			&\qquad\quad + \frac{1}{\tau_1}\sum_{J\in\mathcal{C}}\rho_J(t)\left(m_{J,v}(t) - m_{R,v}(t)\right)+ \frac{1}{\tau_2}\left(G(\rho_I) - m_{R,v}(t)\right).
		\end{aligned}
	\end{equation}
	The coupled system \eqref{eq:seiropicont_rho}-\eqref{eq:seiropicont_mw} deserves some remarks. Let us observe in fact that the first part of system \eqref{eq:seiropicont_rho} is very similar to that of a standard SEIR model, the main difference being related to the infection rate which in our case depends upon the average opinion of susceptible and infected. If these individual on average are prone to use protective masks and keep social distance then the probability of getting infected decreases toward zero. Let also observe that even if the number of contacts does not appear explicitly into the system, influencers have an important role in shaping the opinion of individuals since their position with respect to the use of protective measures impacts on the average opinion $m_{J,v}$ more than the opinion of other individuals.
	The second set of equations \eqref{eq:seiropicont_mw} for the average opinions retains the coupling with the epidemic model, as well as the terms for the alignment of opinion (multiplied by $\tau_1^{-1}$), and the interaction with the background opinion $G(\rho_I)$ (multiplied by $\tau_2^{-1}$).

\section{{Numerical results}}\label{sec:numerics}
In order to {approximate numerically the solution to} system \eqref{eq:seiropicont_fp}, we choose a time discretization step $\epsilon=\Delta t > 0$ of the time interval $[0,T]$ and we are interested in the approximation $f_J^n(v,c)$ of $f_J(v,c,t^n)$, where $t^n = n\Delta t$. Following \cite{Zanella_23}, we introduce a splitting between the opinion evolution step $f^*_J = \mathcal{I}_{\Delta t}(f_J^*,f^*)$ 
\begin{equation}\label{eq:step_1}
	\begin{cases}
		\displaystyle\partial_t f^*_J = \frac{1}{\tau_1}\sum_{H \in \mathcal{C}} Q^1_J(f^*_J,f^*_H) + \frac{1}{\tau_2}Q_J^2(f^*_J,\rho_I), \\
		f_J^*(v,c,0) = f_J^n(v,c),
	\end{cases}
\end{equation}
for all $J\in \mathcal{C}$, and the epidemiological step $f^{**}_J = \mathcal{E}_{\Delta t}(f_J^{**})$
\begin{equation}\label{eq:step_2}
	\begin{cases}
		\partial_t f^{**}_S = -K(f^{**}_S,f^{**}_I)(v,c,t), \\
		\partial_t f^{**}_E = K(f^{**}_S,f^{**}_I)(v,c,t)-\sigma_E f^{**}_E,\\
		\partial_t f^{**}_I = \sigma_E f^{**}_E - \gamma f_I^{**}, \\
		\partial_t f^{**}_R  = \gamma f_I^{**},\\
		f_J^{**}(v,c,0) = f_J^*(v,c,\Delta t),
	\end{cases}
\end{equation}
with incidence rate $K$ given by \eqref{eq:K}. The solution at time $t^{n+1}$ is given by the combination of \eqref{eq:step_1} and \eqref{eq:step_2}, i.e.
\[
f_J^{n+1}(v,c) = \mathcal{E}_{\Delta t}\left( \mathcal{I}_{\Delta t} \left(f_J^n(v,c)\right)\right),
\]
for all $J\in\mathcal{C}$. Both \eqref{eq:step_1} and \eqref{eq:step_2} are solved using a Monte Carlo approach, as explained, respectively, in Algorithms \ref{alg:opi} and \ref{alg:seir}. {For $\xi^{JH}$ and $\zeta^{J}$, with $J,H\in \mathcal{C}$, the choice fell on uniformly distributed random variables.}

\begin{algorithm}
	\caption{Asymptotic particle-based algorithm (Nanbu-like algorithm) for approximating \eqref{eq:step_1}}\label{alg:opi}
	\begin{algorithmic}[h!]
		\State Fix $\tilde\epsilon = \tau_1\tau_2/(\tau_1 + \tau_2)$ and $N_s$.
		\State Sample $N_s$ particles $\left\{v_i^n,c_i\right\}_{i=1}^{N_s}$ from the initial distribution $f_J^n(v,c)$.
		\State Randomly choose $N_1 = \tilde\epsilon/\tau_1 < N_s$ particles $\left\{v_i^n,c_i\right\}_{i=1}^{N_1}$ from the initial distribution $f_J^n(v,c)$.
		\For{$k = 0:\Delta t$ with step $\tilde \epsilon$}	
		\State Set $N_c = round(N_1/2)$.
		\State Select $N_c$ random pairs $(i, i_*)$ uniformly, without repetition among all possible pairs.
		\For {$i = 1:N_c$}
		\State  Sample $\xi^n_i,\xi^n_{i_*}$ from a uniform distribution $\mathcal{U}\left[-1,1\right]$  and compute
		\begin{equation}\label{eq:trules_scaled}
			\begin{aligned}
				v_i^{n+1}         &=         v_i^{{n}}          + \tilde\epsilon \alpha_J P(v^n_i,v^n_{i_*},c_i,c_{i_*})(v_{i_*}^n - v_{i}^n) + \sqrt{3\tilde\epsilon}\sigma_J D(v^n_i,c_i) \xi^n_i,\cr
				v_{i_*}^{n+1}  &=          v_{i_*}^{{n}}  + \tilde\epsilon \alpha_J P(v^n_{i_*},v^n_i,c_{i_*},c_i)(v_{i}^n - v_{i_*}^n) + \sqrt{3\tilde\epsilon}\sigma_J D(v^n_{i_*},c_{i_*}) \xi^n_{i_*}.
			\end{aligned}
		\end{equation}
		\EndFor
		\EndFor
		\State Choose the remaining $N_2 = \tilde\epsilon/\tau_2 < N_s$ particles $\left\{v_i^n,c_i\right\}_{i=N_1+1}^{N_s}$ from $f_J^n(v,c)$.
		\For{$k = 0:\Delta t$ with step $\tilde \epsilon$}	
		\For {$i = N_1+1:N_s$}
		\State  Sample $\zeta^n_i$ from a uniform distribution $\mathcal{U}\left[-1,1\right]$ and compute
		\begin{equation}\label{eq:opinion_infected_scaled}
			v_i^{n+1} =  v_i^n + \tilde \epsilon\tilde \alpha_J\tilde P(\rho_I^n,v_i,c_i)\left(G(\rho_I^n,v_i^n) - v_i^n\right) + \sqrt{3\tilde \epsilon}\theta_J\zeta^n_i \tilde D(v_i^n).
		\end{equation}
		\EndFor
		\EndFor
	\end{algorithmic}
\end{algorithm}

\begin{algorithm}
	\caption{Asymptotic particle-based algorithm for approximating step \eqref{eq:step_2}}\label{alg:seir}
	\begin{algorithmic}[h!]
		\State Fix $N_s$.
		\State Sample $\left\{v_i,c_i,J_i\right\}_{i=1}^{N_s}$ from the initial distribution $f_0^{**}(v,c)$.
		\State Let $N_J \subseteq \{1,\dots,N_s\}$ be the set of indices of the particles with disease status $J$, $J\in \{S, E, I \}$.
		\For {$i \in N_S$}
		\State Compute $p = \beta_1(1-v_i)\rho_I(1-m_I)/4$.
		\State Sample from a Bernoulli distribution $\mathcal{B}(p)$.
		\If{the sample is 1}
		\State Let $\{v_i,c_i,J_i\} = \{v_i^{n},c_i,E\}$.
		\EndIf
		\EndFor
		\For {$i \in N_E$}
		\State Sample from a Bernoulli distribution $\mathcal{B}(\sigma_E)$.
		\If{the sample is 1}
		\State Let $\{v_i^{n+1},c_i,J_i\} = \{v_i^{n},c_i,I\}$.
		\EndIf
		\EndFor
		\For {$i \in N_I$}
		\State Sample from a Bernoulli distribution $\mathcal{B}(\gamma)$.
		\If{the sample is 1}
		\State Let $\{v_i^{n+1},c_i,J_i\} = \{v_i^{n},c_i,R\}$.
		\EndIf
		\EndFor
	\end{algorithmic}
\end{algorithm}

\subsection{Test 1: Evolution of opinion dynamics}
In this first test, we focus on the opinion dynamics, simulating  \eqref{eq:boltz_J}  via \eqref{eq:step_1} neglecting the epidemiological compartmentalization.
Hence, we solve numerically the following
\[
\partial_t f(v,c,t) = \frac{1}{\tau_1} Q^1(f,f)(v,c,t) + \frac{1}{\tau_2} Q^2(f,\rho)(v,c,t),
\]
with Algorithm \eqref{alg:opi}, 
where operator $Q^1$ and $Q^2$ are defined as in \eqref{kine-ww} and \eqref{kine-ww2}, and $\rho$ is a given function with values in $[-1,1]$, i.e. $\rho: \mathbb{R}_+ \to [-1,1]$. We consider as initial distribution
$f(v,c,0) = g_0(v,c;\Theta)h(c)$, where we introduce the following parametric function
\begin{align}\label{eq:initdata_g}
	g_\ell(v,c;\Theta) =  C_\ell\left[ \chi(v\in\Omega_\ell) \chi(c\leq 10 \bar c)+ \left(\Theta\mathcal{N}(v;-m,s^2)+(1-\Theta)\mathcal{N}(v;m,s^2)\right)\chi(c> 10 \bar c)\right],
\end{align}
where for the current case, i.e. $\ell=0$, the opinion domain is such that $\Omega_0 = [-1,1]$, and where the second part of the expression is a convex combination of two symmetric gaussians $\mathcal{N}(v;\pm m,s^2)$ with  $\Theta \in[0,1]$. We fix these parameters in the following way mean $|m|=0.7$ and standard deviation $s=0.05$, and  $\Theta =1/4$. Finally,
$C_\ell>0$ is a constant such that  $\int_{\mathbb{R}_+}\int_{-1}^1 g_0(v,c,0)h(c) \, \mathrm{d}v \,\mathrm{d}c = 1$.  
%
%
Hence, the initial data represents a population of agents with contact density at equilibrium $h(c)$. The distribution of opinions is divided as follows: for those with contact levels less or equal then $10\bar c$, the opinion density is uniformly spread over the interval $[-1,1]$; among agents with more than $10\bar c$ contacts, $3/4$ of the population has opinions normally distributed around the value $v=0.7$,  while the remaining $1/4$ of the population has opinions normally distributed around the value $v=-0.7$. In this setting, a majority of the agents with more contacts reflects a positive opinion rather than a negative one.

The interaction functions $P$ and $D$ in \eqref{eq:trules_scaled}, for  $v,v_* \in \Omega$ and $c,c_* \in \mathbb{R}_+$, follow the structure proposed in \eqref{eq:kernel_P} where
\begin{align}\label{eq:kern_T1a}
	\Delta(c,c_*) =  \delta\frac{c_*}{\min\{c,c_*\}} ,\quad W(c,c_*)= \frac{c_*}{c+c_*}, \quad D(v) = \sqrt{1-v^2} 
\end{align}
with parameter $\delta= 0.1$, and we set $\alpha=1$. 
The choice of the interaction kernel $P$ in \eqref{eq:kern_T1a} models the fact that agents tend to interact only with people in a certain proximity of their own opinion, but this ``proximity'' is larger if the interaction takes place with an agent whose number of connection is higher. Moreover, also the intensity of the interaction is greater if the second agent has a larger number of followers, meaning that influent people (the grade of influence in our setting is given by the popularity, i.e. the number of connections) tend to have a more significant impact on the opinion of the less influent ones.

For the dynamics in \eqref{eq:opinion_infected_scaled}, we set $\tilde \alpha = 0.25$ and the following functions $\tilde W$, $G$, and $\tilde D$ as
\begin{align}\label{eq:kern_T1b}
	\tilde W(c,\overline c) = \frac{\bar c}{\max\{c,\bar c\}},\quad G(\rho(t)) = \rho(t) - 0.5, \quad \tilde D(v) = \sqrt{1-v^2},
\end{align}
where $\rho(t) = \exp(-t)$. Thus, $\tilde P$ accounts only for the number of contacts and the agent opinion, while the value of function $G$, according with $\rho(t)$, varies with time, being $0.5$ at the initial time and tending monotonically to the value $-0.5$ as the time increases. The parameters associated to the random variables in \eqref{eq:trules_scaled} and \eqref{eq:opinion_infected_scaled} are fixed to $\sigma^2 =\theta^2 = 10^{-3}$.

We simulated three different scenarios with Algorithm \ref{alg:opi} considering the scaling parameter $\tilde \epsilon = \tau_1 \tau_2/(\tau_1+\tau_2)$ and different choices of $\tau_1$ and $\tau_2$. The first column of Figure \ref{fig:test1}, refers to the case $\tau_1 = 1$ and $\tau_2 = 10^{-2}$, so the interaction with the background is faster than the binary interaction between agents. In this case, influent agents are not capable of dragging the opinion of the less popular ones towards a positive value, instead we witness a slow consensus of the majority of the agents (especially the less influent ones) towards $v=-0.5$. In the second case we set $\tau_1=\tau_2=1$, in this scenario the interaction with the background, represented by $\rho$, and the binary interaction between agents have the same speed, and we can observe in the central column of Figure \ref{fig:test1}, how the consensus to $v=-0.5$ is reached much faster and also the influent agents converge towards that value. In the last simulation of this section, we set $\tau_1 = 10^{-2}$ and $\tau_1=1$. In this case, as shown in the last column of Figure \ref{fig:test1}, the speed of the binary interaction between agents is faster than the interaction with the background, and the influent agents are able to drag the opinion of the rest of the population towards their initial one.

\begin{figure}[h!]
	\centering
	\includegraphics[width=0.26\textwidth]{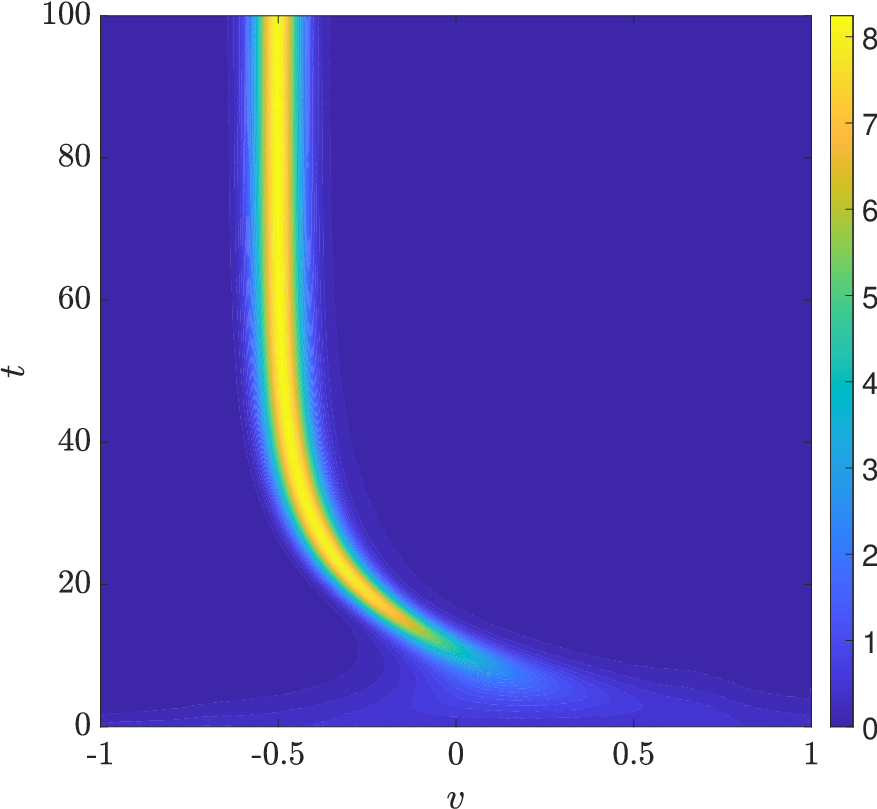}\quad \includegraphics[width=0.265\textwidth]{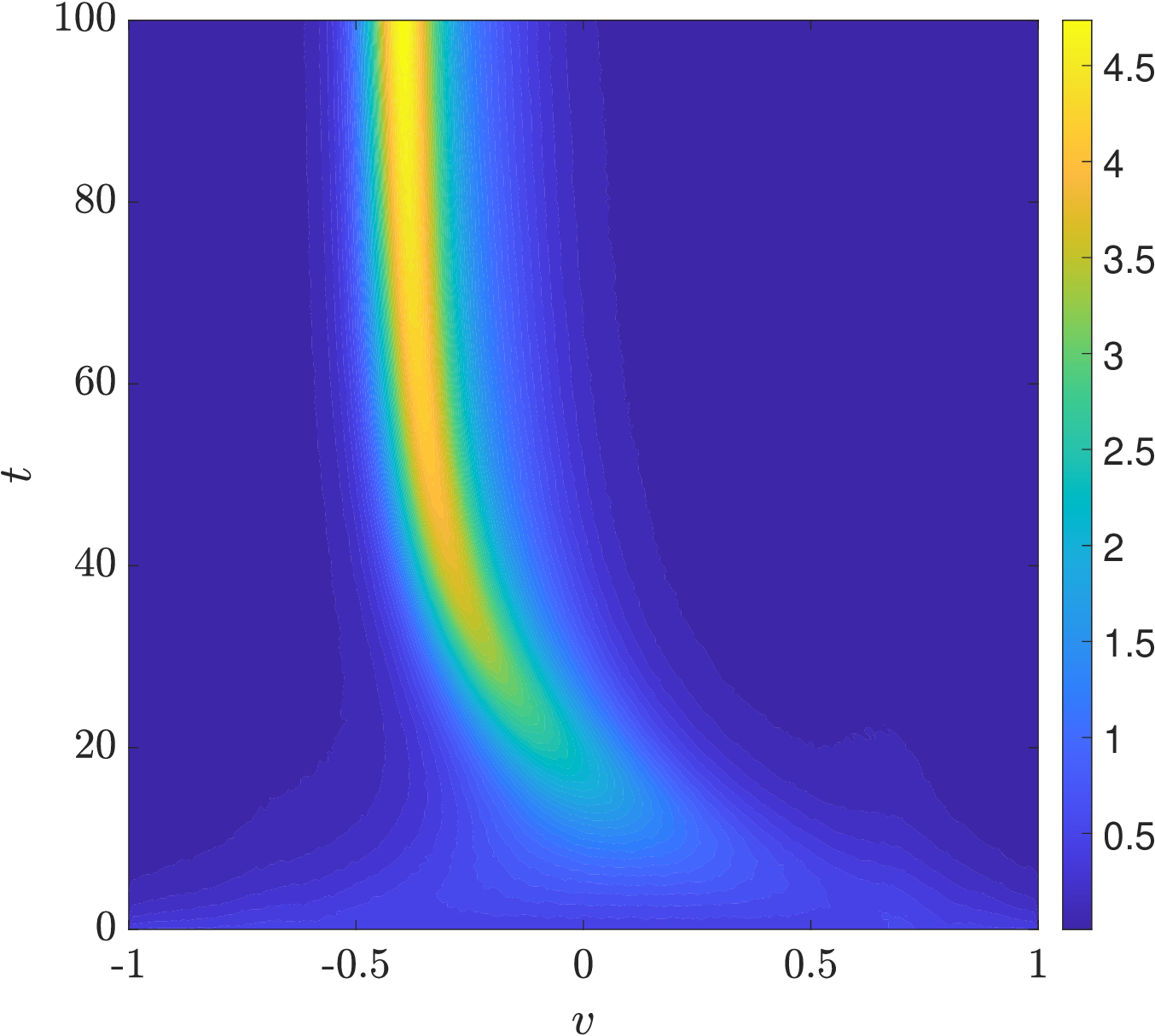}\quad\includegraphics[width=0.26\textwidth]{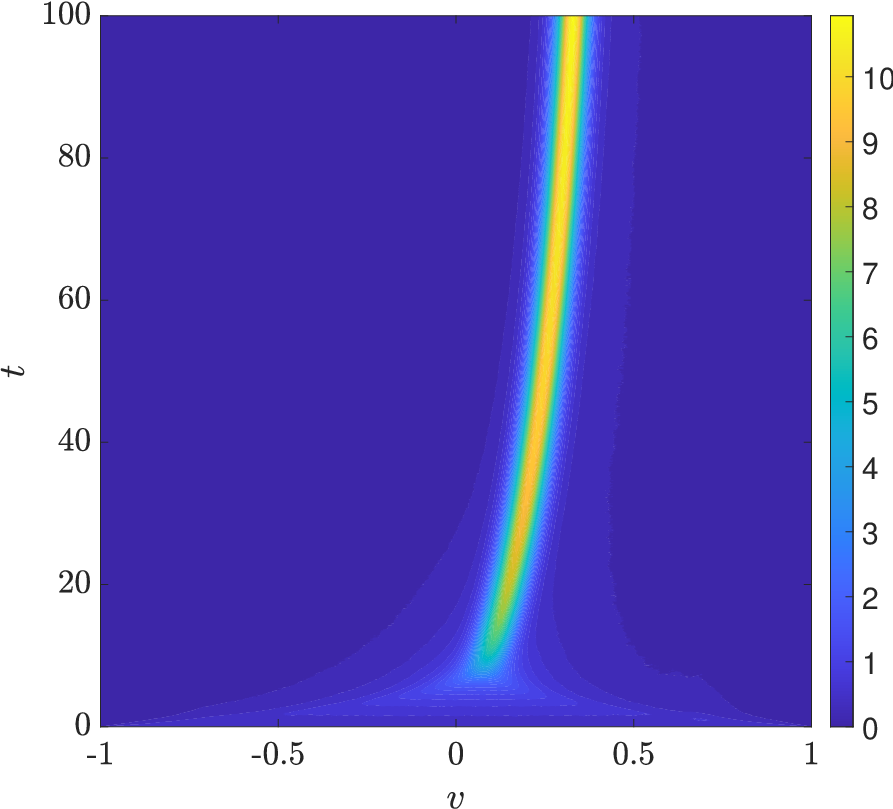}
	\centering
	\includegraphics[width=0.235\textwidth]{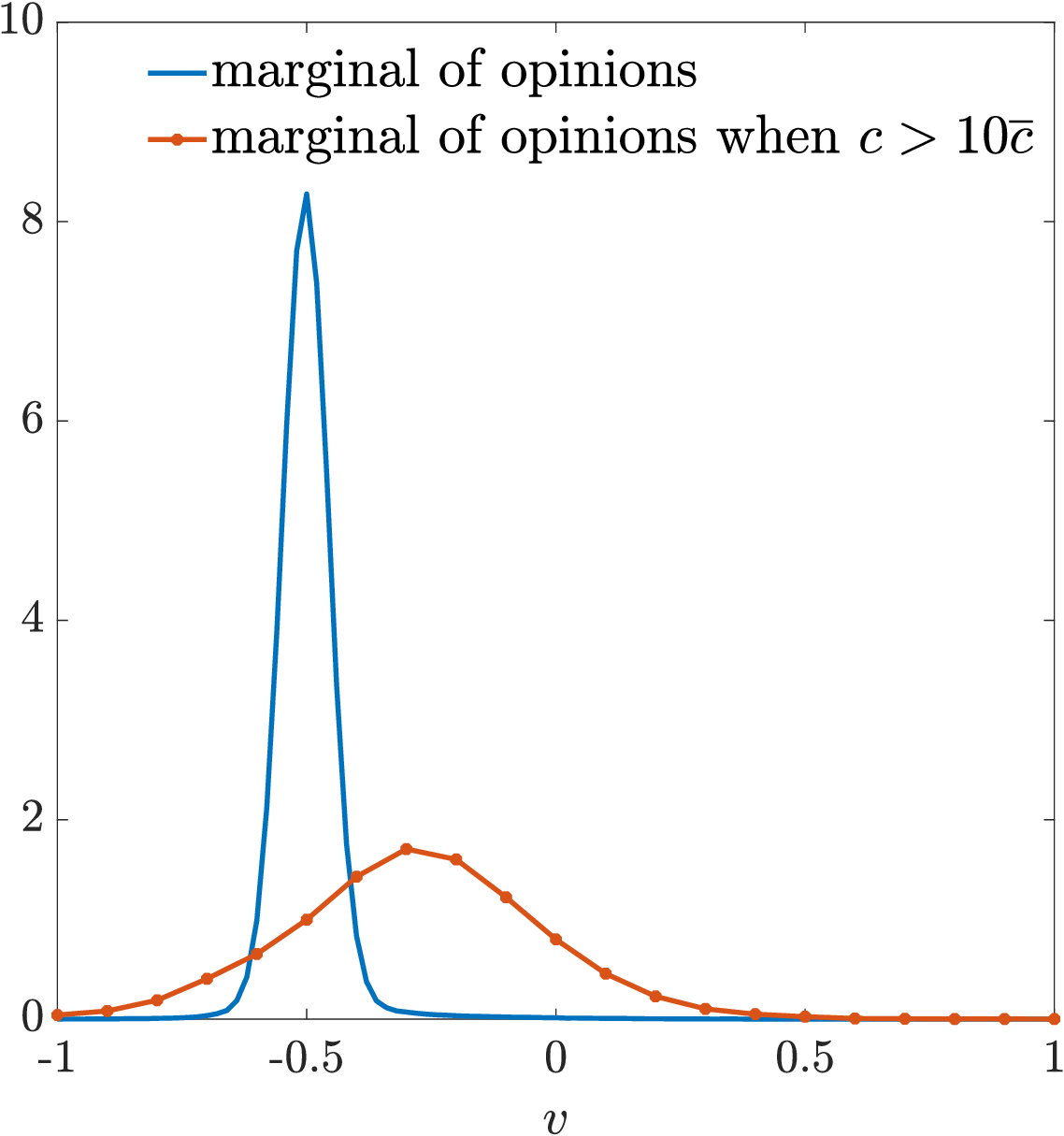}\quad\quad \includegraphics[width=0.23\textwidth]{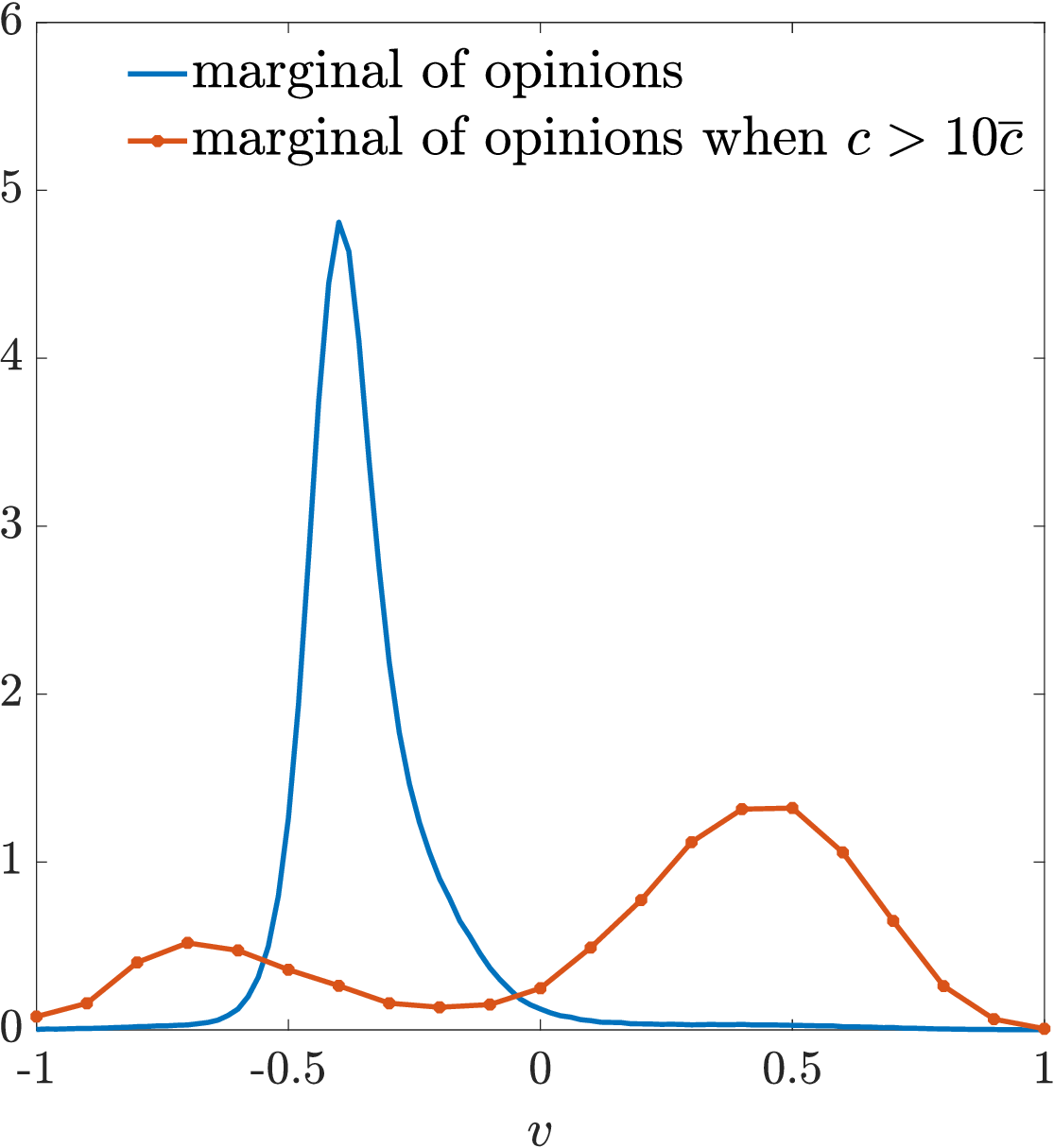}\quad\quad\includegraphics[width=0.235\textwidth]{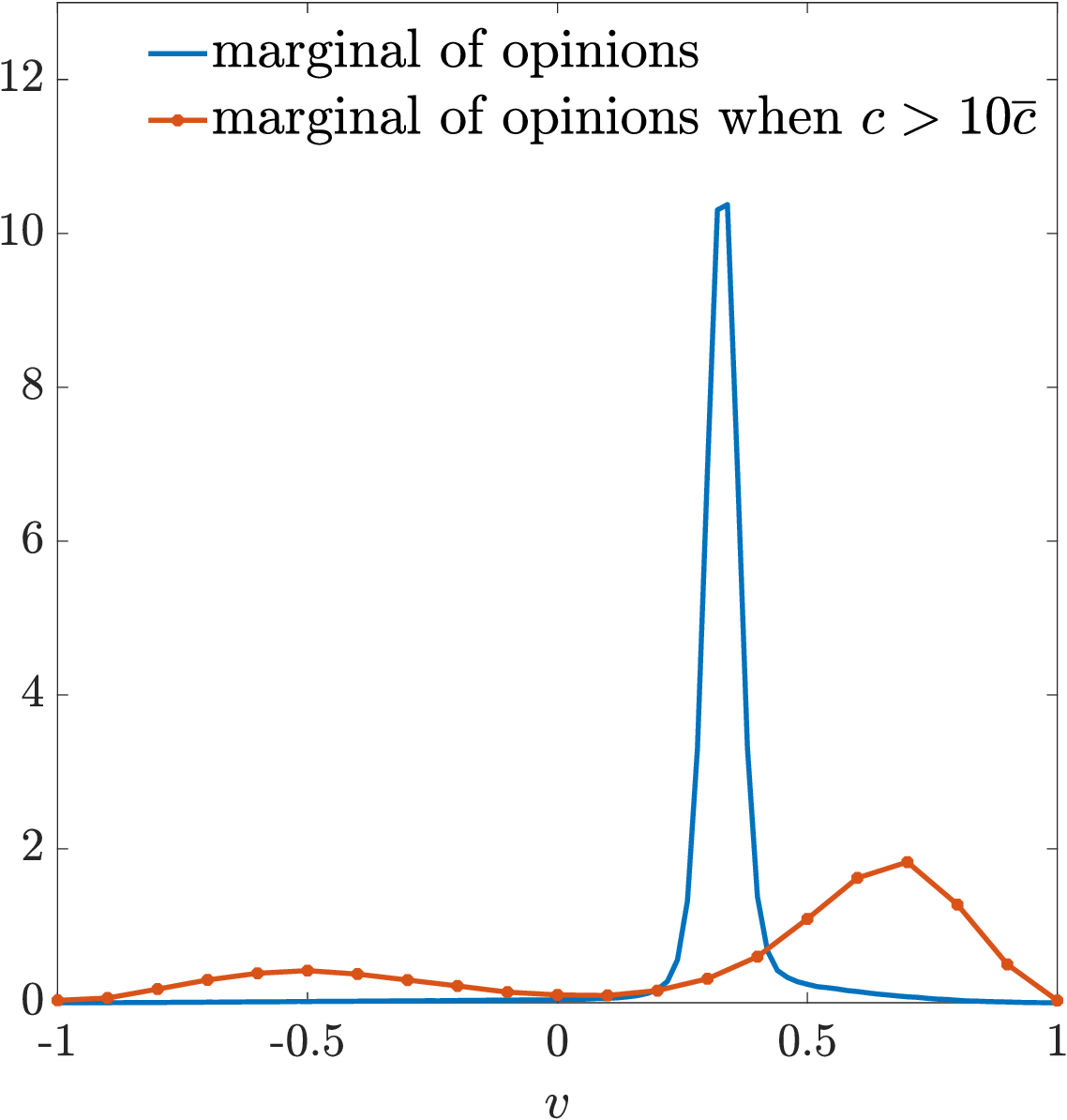}
	\caption{{\em Test 1.} Evolution in time of the marginal distribution of the opinions (upper row) and the normalized marginal distribution of opinions both for the entire population and for $c>10 \overline c$ at $t=100$  (bottom row), for the case $\tau_1=1$ and $\tau_2=10^{-2}$ (left), $\tau_1=1$ and $\tau_2=1$ (center), and  $\tau_1=10^{-2}$ and $\tau_2=1$ (right).}
	\label{fig:test1}
\end{figure}

\subsection{Test 2: Consistency of the macroscopic model}
We discuss the consistency of the macroscopic model \eqref{eq:seiropicont_rho}-\eqref{eq:seiropicont_mw}, with the macroscopic quantities reconstructed from the kinetic model \eqref{eq:seiropicont} in the Fokker-Planck regime \eqref{eq:seiropicont_fp}.

In order to address correctly this problem, we remark that the equilibrium closure previously introduced is based on the ansatz that the opinion dynamics evolve faster than the epidemic one. To exploit this behavior, we introduce the positive parameter $\lambda>0$, and we scale in model  \eqref{eq:seiropicont_fp} the parameter $\tau_1$ and $\tau_2$ as follows
\begin{equation}\label{eq:scaling_lambda}
	\tau_1\to \lambda\tau_1,\qquad \tau_2\to\lambda\tau_2.
\end{equation}
We consider
zero-flux boundary conditions \eqref{eq:bc} scaled by $\lambda$, and initial data
\begin{equations}
	f_S(v,c,0) &= \rho_S(0)\chi(v\in[-1,0])h(c), \quad &f_E(v,c,0) &= \rho_E(0)\chi(v\in[-1,0])h(c), \\
	f_I(v,c,0) &= \rho_I(0)\chi(v\in[0,1])h(c),\quad& f_R(v,c,0) &= \rho_R(0)\chi(v\in[0,1])h(c),
\end{equations}
where we set
\begin{equation}\label{eq:rho_init}
	\rho_E (0) = \rho_I (0) = \rho_R(0) =  10^{-2}, \quad \rho_S(0) = 1- \left(\rho_E(0)+\rho_I(0)+\rho_R(0)\right),
\end{equation}
and parameters $\tau_1 = 1$, $\tau_2 = 10$, $\beta_1 = 0.4$, $\sigma_E = 0.5$, $\gamma = 1/12$, and $\sigma_J^2= \theta_J^2 = 10^{-3}$ for all $J \in \mathcal{C}$. 
Thus, for a fixed $\lambda$, we introduce the reconstructed macroscopic quantities as
\begin{equation}\label{eq:rho_w_lambda}
	\rho_J^\lambda(t) = \int_{\Omega\times\R_+} f_J(v,c,t) dv dc,\qquad 
	m_J^\lambda(t) = \frac{1}{	\rho_J^\lambda(t)}\int_{\Omega\times\R_+} v f_J(v,c,t) dv dc,
\end{equation}
where  the solution of the macroscopic system \eqref{eq:seiropicont_rho}-- \eqref{eq:seiropicont_mw} is recovered for $\lambda \to 0^+$.

Finally, we compare the macroscopic quantities \eqref{eq:rho_w_lambda}, reconstructed from the numerical solution of \eqref{eq:seiropicont} under scaling \eqref{eq:scaling_lambda}, with the solution of the macroscopic system  \eqref{eq:seiropicont_rho}--\eqref{eq:seiropicont_mw}, solved with initial data \eqref{eq:rho_init} for the masses, $m_{S,v}(0) = -0.5$, $m_{E,v}(0) = -0.5$, $m_{I,v}(0) = 0.5$, $m_{R,v}(0) = 0.5$, for the first moments, and the same parameters chosen for the kinetic model. 
In Figure \ref{fig:test_macro} we depict the solution $(\rho_J^\lambda(t),m_{J,v}^\lambda(t))$ showing that  for decreasing values of $\lambda$, i.e. $\lambda=1, 10^{-3}$,  the solution of the macroscopic system $(\rho_J(t),m_{J,v}(t))$ is approached.
%

\begin{figure}[h!]
	\centering
	\includegraphics[width=0.22\textwidth]{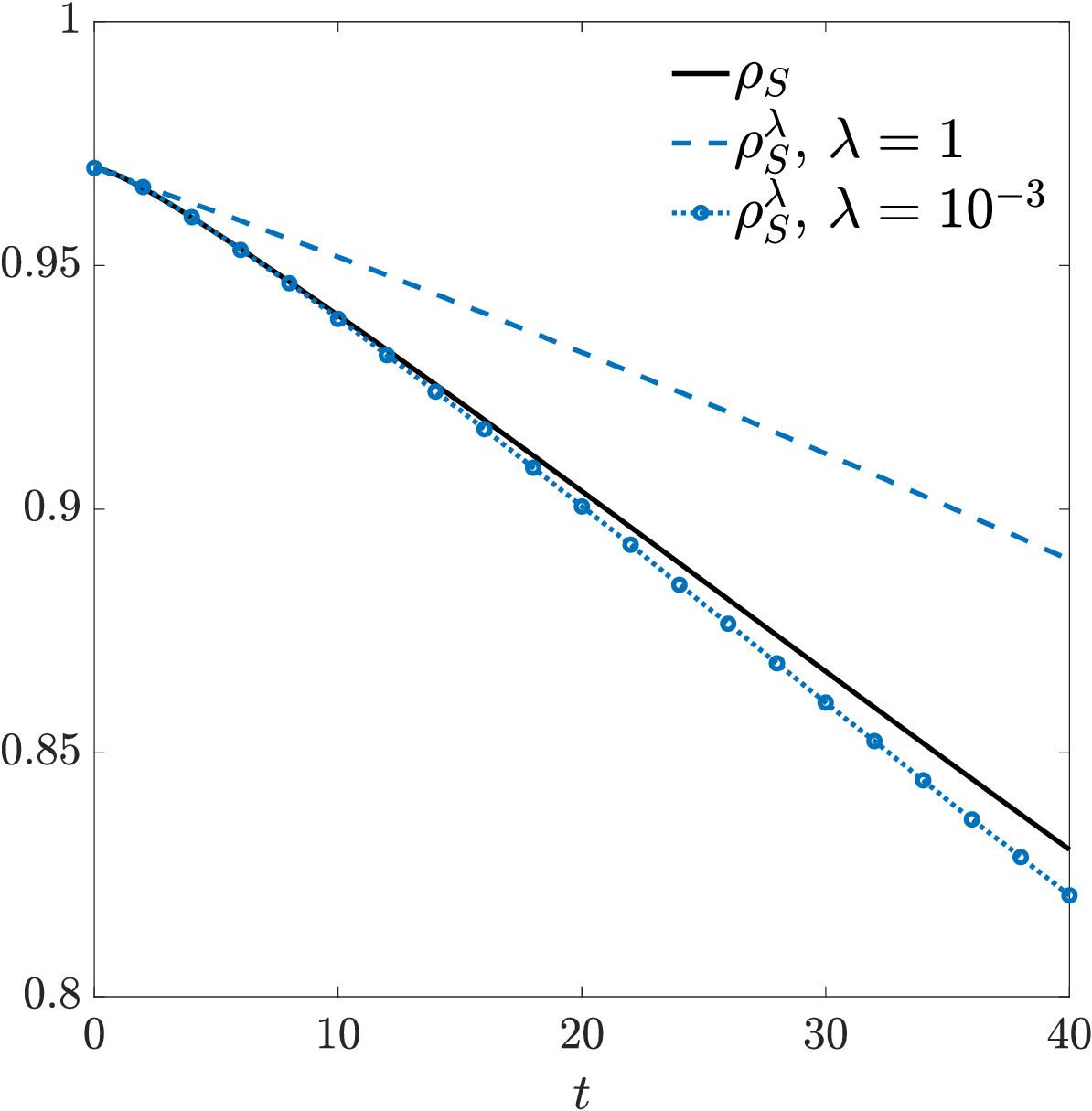}\quad\includegraphics[width=0.22\textwidth]{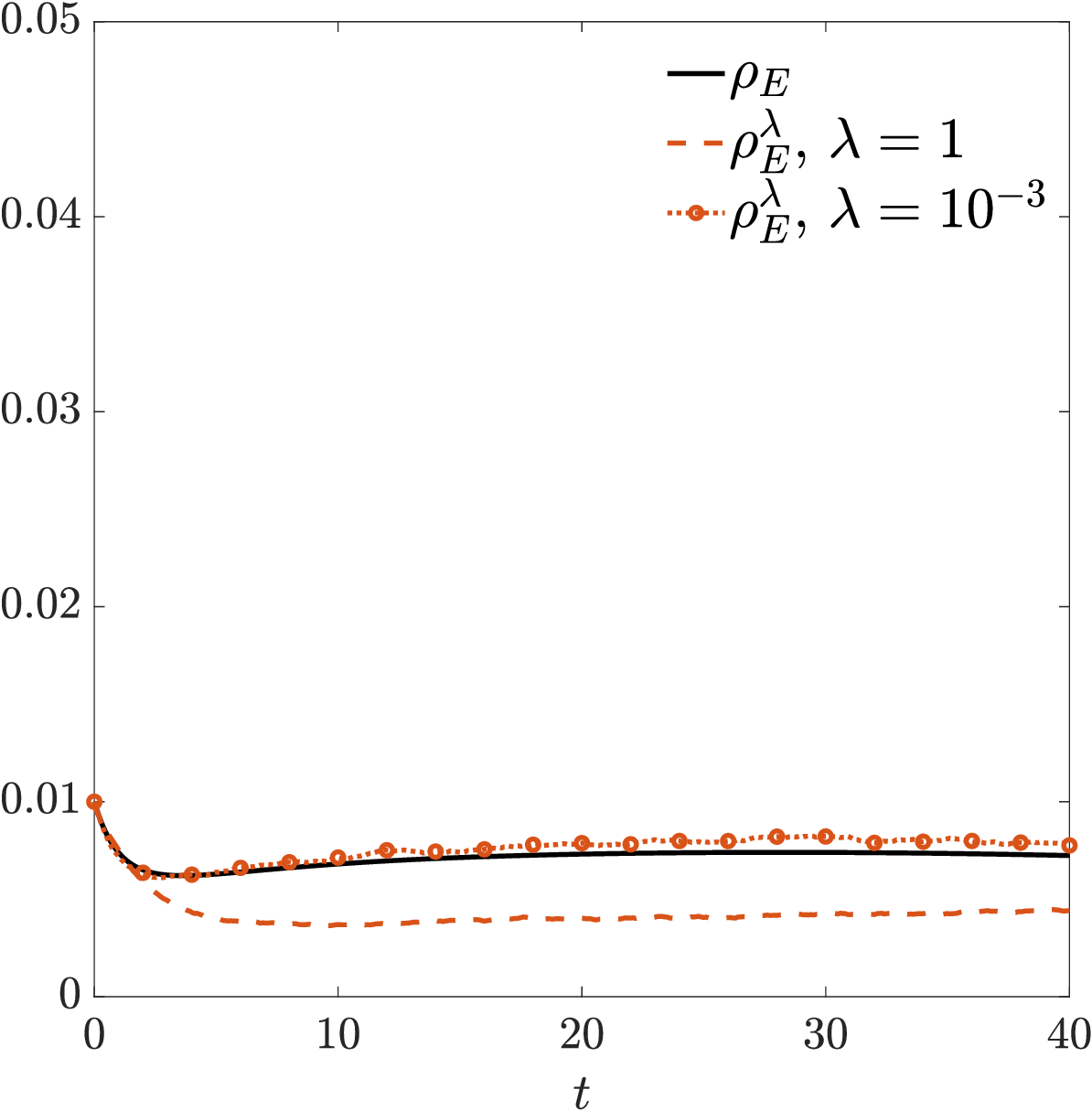}\quad\includegraphics[width=0.22\textwidth]{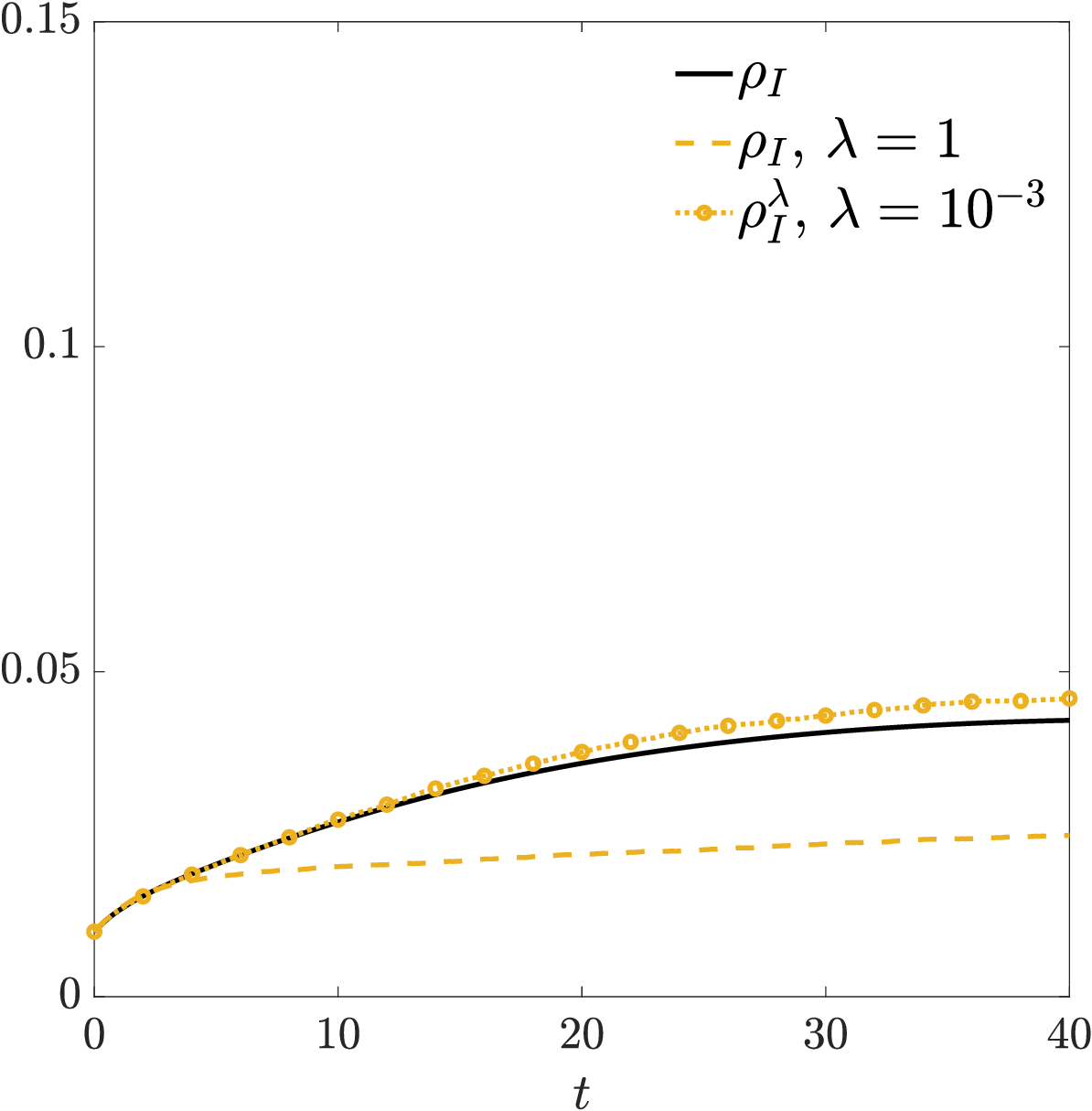}\quad\includegraphics[width=0.22\textwidth]{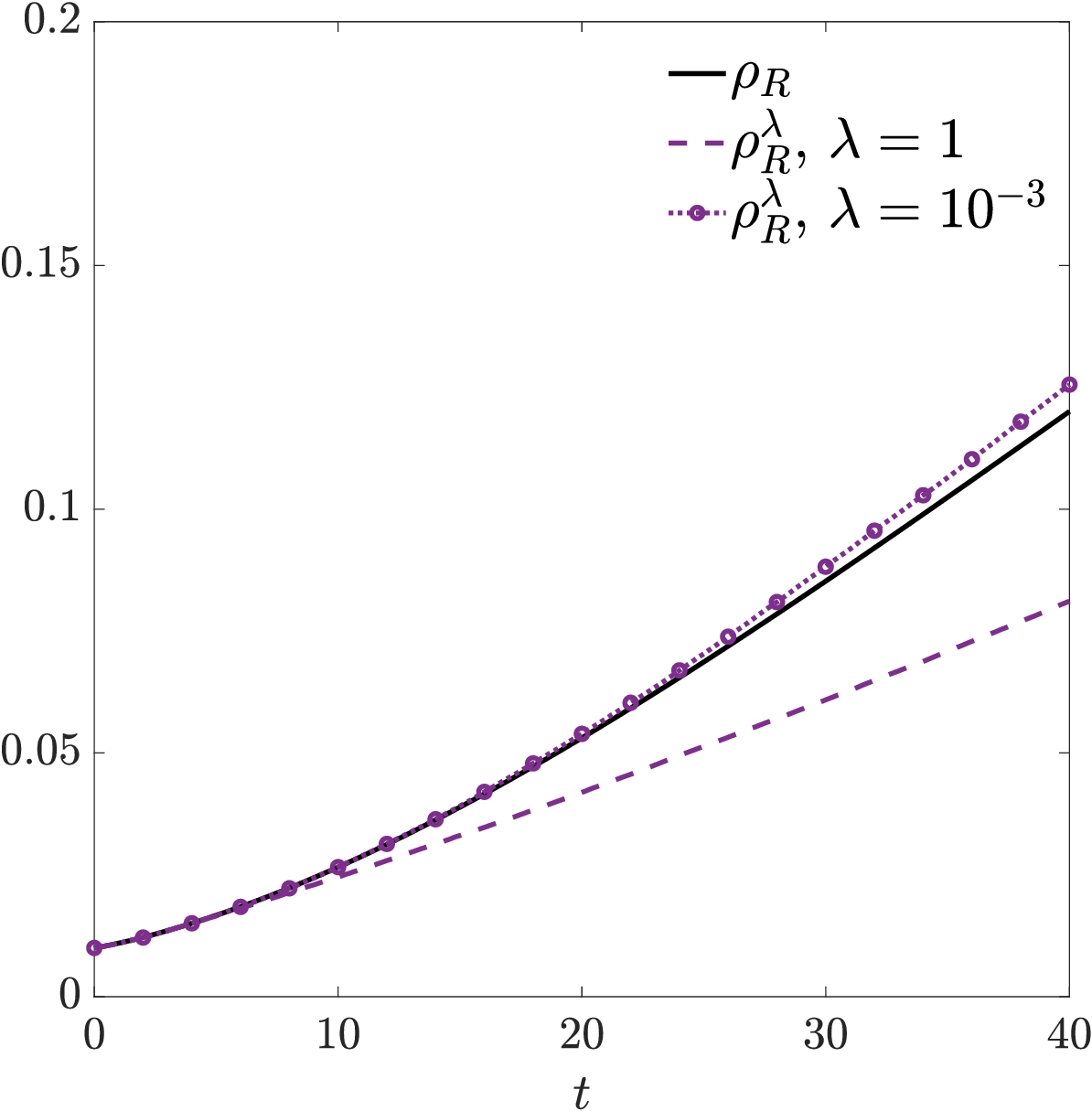}
	\centering
	\includegraphics[width=0.22\textwidth]{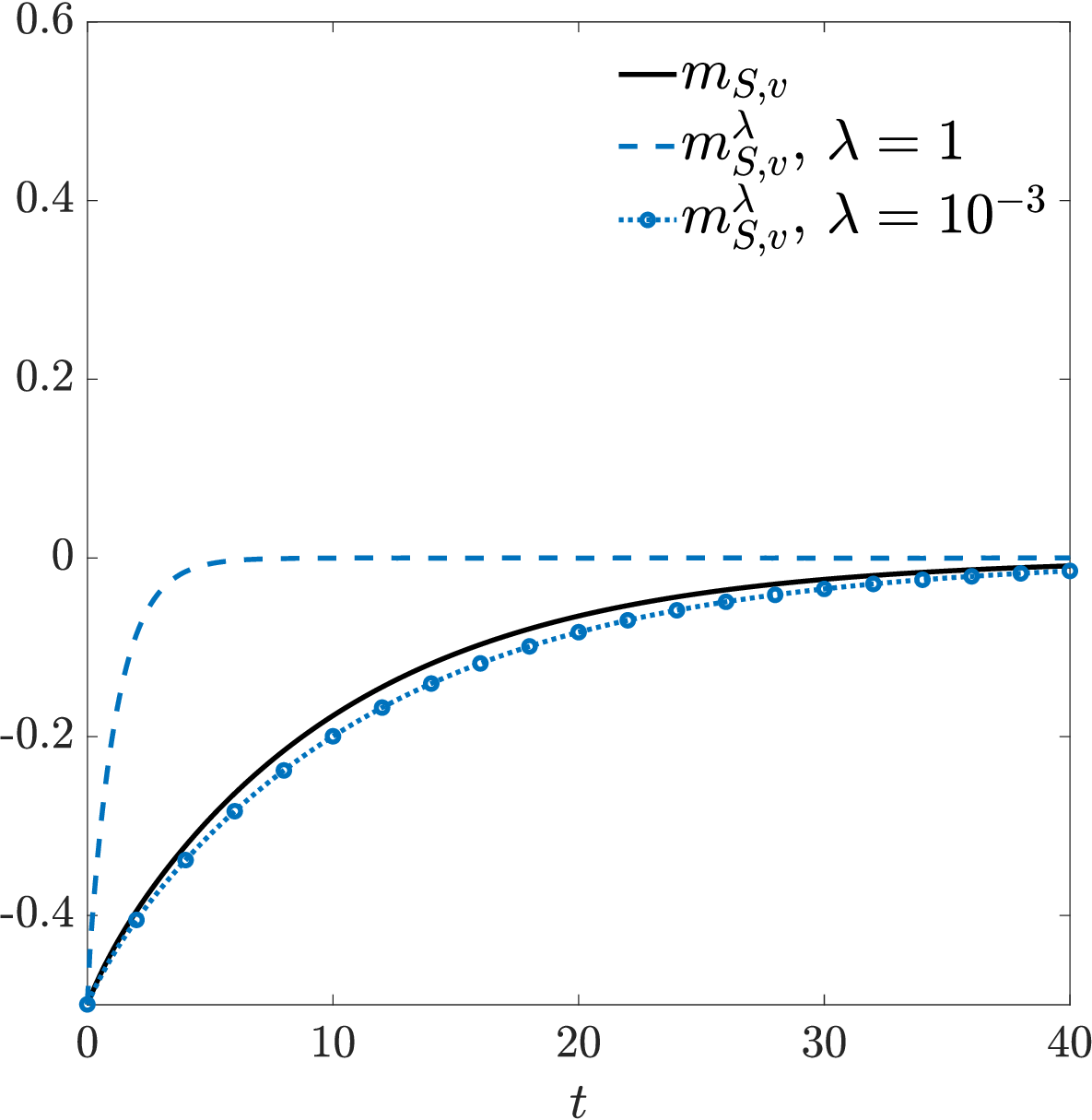}\quad
	\includegraphics[width=0.22\textwidth]{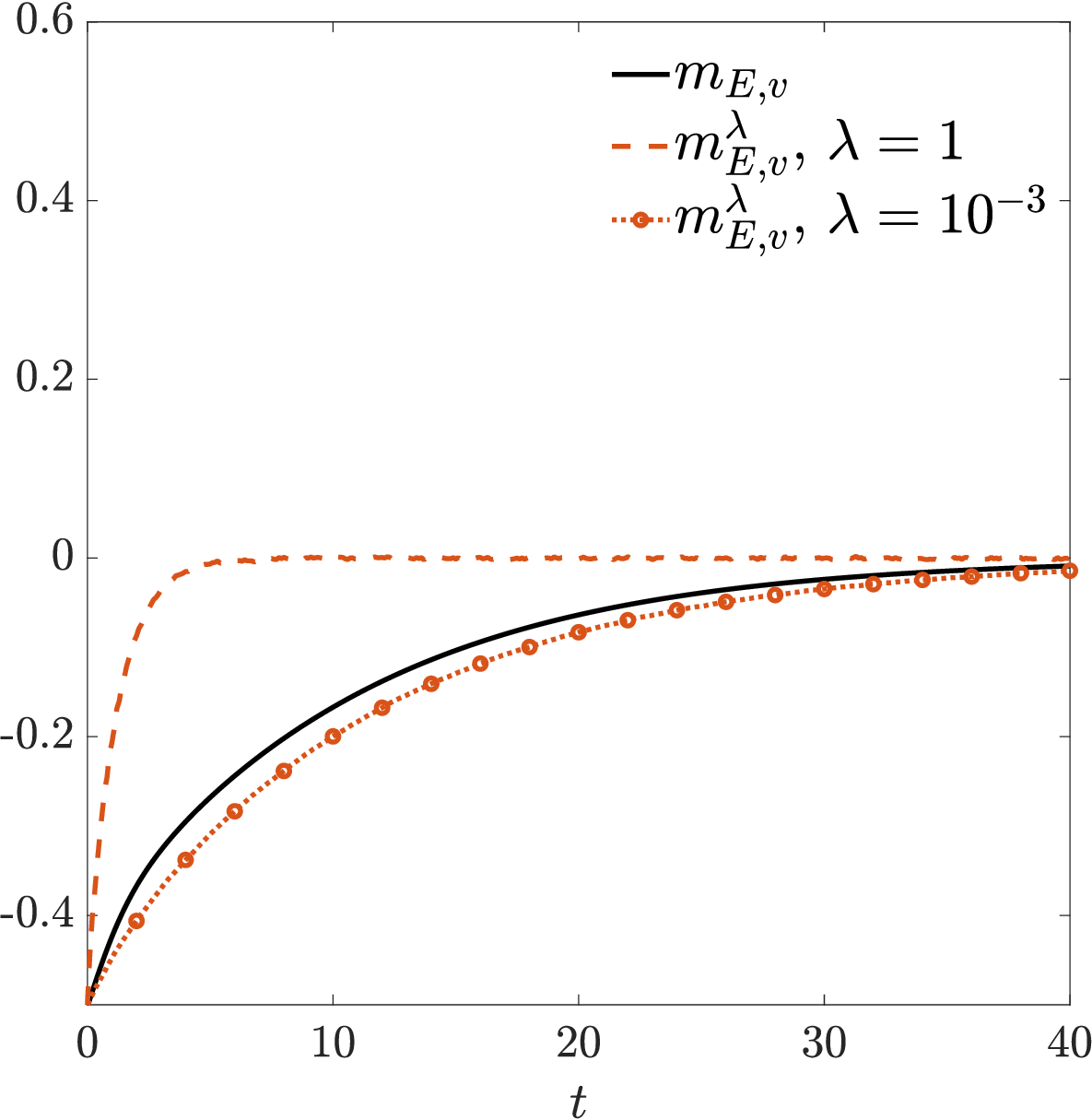}
	\quad \includegraphics[width=0.22\textwidth]{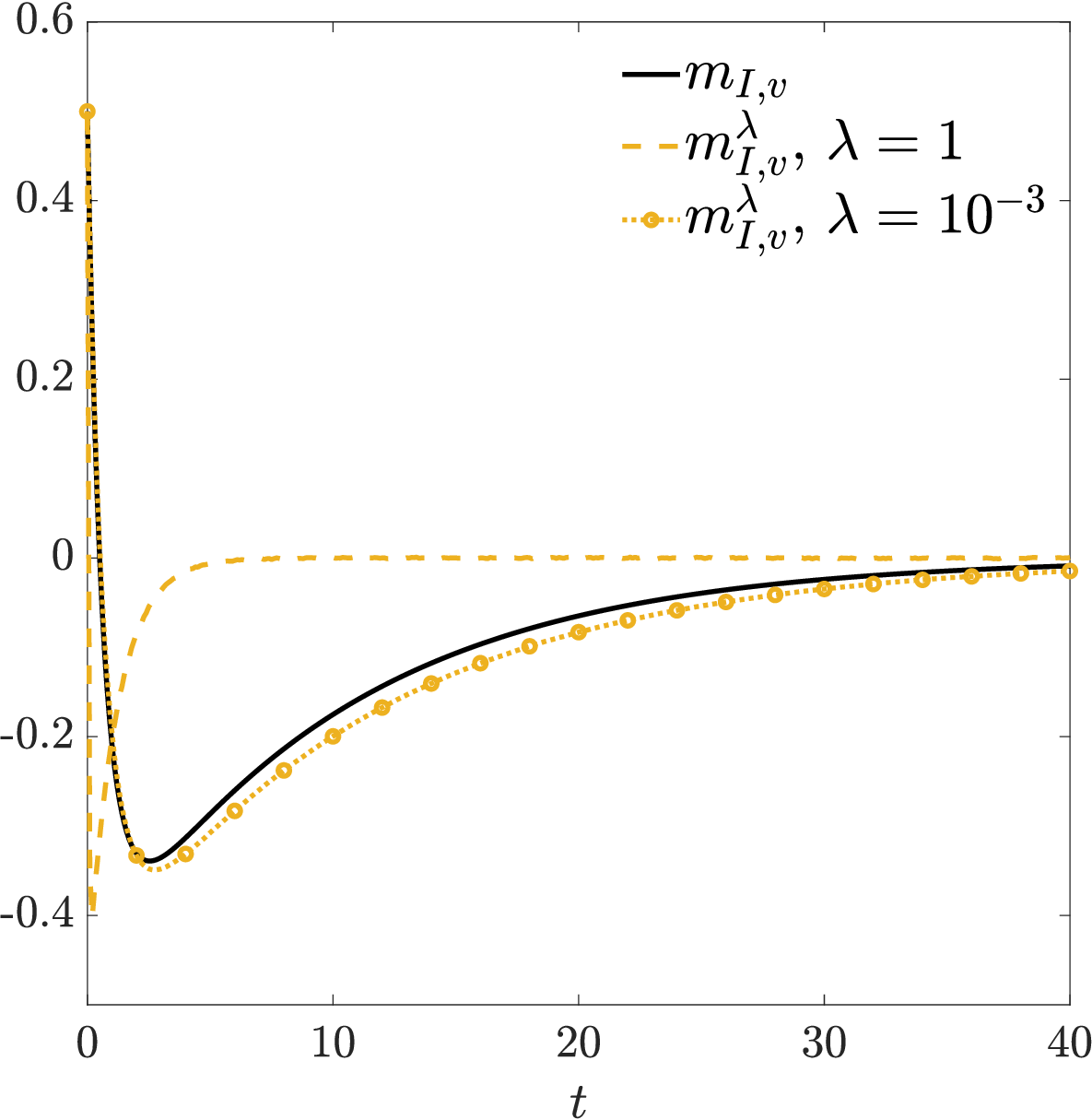}
	\quad \includegraphics[width=0.22\textwidth]{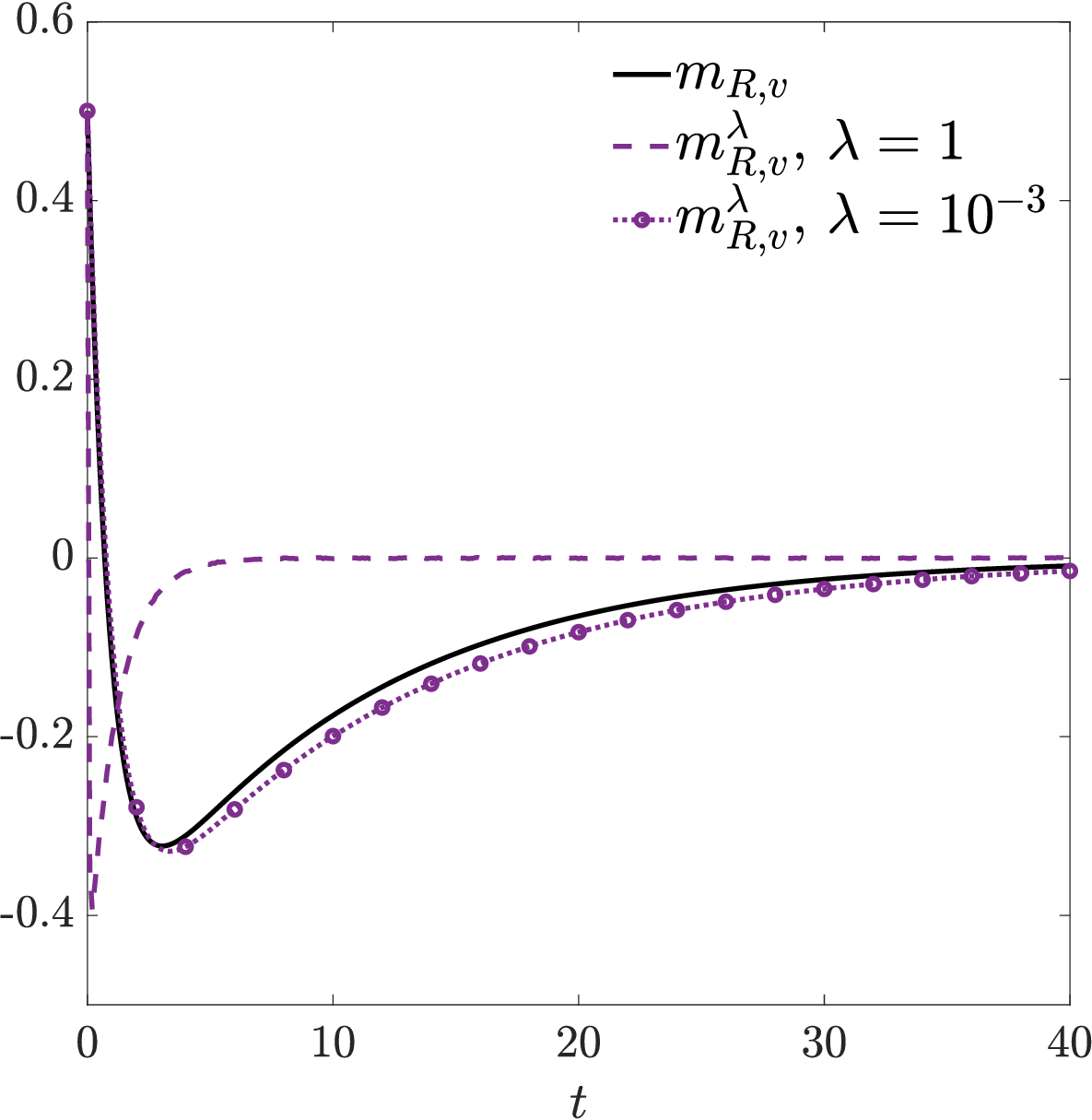}
	\caption{\textit{Test 2.} Evolution in time of the mass (upper row) and of the first moment with respect to the opinions (bottom row) of each compartment solving the macroscopic system \eqref{eq:seiropicont_rho}-\eqref{eq:seiropicont_mw} with $\tau_1 = 1$ and $\tau_2 = 10$, compared to the ones reconstructed from the solution of the kinetic system \eqref{eq:seiropicont} with scaling \eqref{eq:scaling_lambda} for the same values on $\tau_1, \tau_2$ and  $\lambda = 1,10^{-3}$.}
	\label{fig:test_macro}
\end{figure}

\subsection{Test 3: Impact of influent agents on the spread of the disease}\label{sec:test2}
In this second test, we consider the coupled process of epidemics and opinion dynamics. In particular we are interested in studying the presence of agents with higher influence and differing opinions toward protective measures and treatment. Our aim is to show how influent agents can reduce (or increase) the spread of the epidemics and to investigate how the speed of opinions diffusion can alter the course of the contagion.

For the opinion dynamics \eqref{eq:trules_scaled} and \eqref{eq:opinion_infected_scaled} governed by the operators $Q_J^1$ and $Q_J^2$ in \eqref{eq:step_1}, with $J\in \mathcal{C}$,  we retain the same $\alpha$ and functions $P,D$ and $\tilde D$ chosen in Test 1 and described in \eqref{eq:kern_T1a} and  \eqref{eq:kern_T1b}, with the same parameter $\Delta = 0.1$ and standard deviations $\sigma_J^2= \theta_J^2 = 10^{-3}$ for all $J \in \mathcal{C}$.
%
The main difference consists in the choice of the function $\tilde P$, defined as \eqref{eq:kernel_Ptilde} with 
\begin{equation}\label{eq:kernel2}
	\psi(\rho_I) = \rho_I,\quad	\tilde W(c,\overline c) =  \frac{\bar c}{\max\{c, \bar c\}},
\end{equation} 
and $G$, chosen as in \eqref{eq:Gfunction} with $\overline \rho$ set equal to $0.05$, meaning that the target opinion $G(\rho_I)$ is equal to $-1$ if the percentage of infected is under $5\%$, otherwise it is equal to $1$. We set $\tilde \alpha_J = 0.5$ for all $J\in\mathcal{C}$. 
With the choices made in \eqref{eq:kernel2} the interaction kernel $\tilde P$ models the fact that the higher is the percentage of population infected, the greater is the tendency to move towards opinion $G$ expressed in \eqref{eq:Gfunction}, but the greater is the number of contacts, the less the agent is prone to be impacted by the advancement of the disease (assuming that the so-called \emph{influencers} tend to retain a form of credibility and ``brand name" that is tied to their opinions, and therefore cannot be drastically affected by the trend of contagions). Also, $\tilde P$ is small for agents with opinions close to the extremal values $-1$ and $1$, modeling the fact that extremists usually tend to only slightly modify their opinions, no matter what happens in their environment. 
We performed  simulations comparing two different setting for different initial data. 

{\em Setting 1}. (Majority of the {``influencers''} against of protective measures). In this first case we consider the initial densities of the various compartments defined as
\begin{equation}
	\begin{aligned}\label{eq:init_1}
		f_S(v,c,0) &= \rho_S(0)g_1(v,c;\Theta)h(c), \quad f_E(v,c,0) = \rho_E(0)g_1(v,c;\Theta)h(c), \cr
		f_I(v,c,0) &= \rho_I(0)g_2(v,c;\Theta)h(c), \quad f_R(v,c,0) = \rho_R(0)g_2(v,c;\Theta)h(c),
	\end{aligned}
\end{equation}
where $\rho_E(0) = \rho_I(0) = \rho_R(0) = 10^{-2}$, while $\rho_S(0) = 1- (\rho_E(0) + \rho_I(0) + \rho_R(0))$, $h(c)$ is the contact distribution \eqref{eq:logn}, and $g_1,g_2$ defined according to
\eqref{eq:initdata_g} with $\Theta = 3/4$, $\Omega_1 = [-1,0]$, and $\Omega_2 = [0,1]$, and the remaining parameters the same as Test 1.
Thus, the contacts are stationary, and the agents' opinions depend on their popularity: for the ones with less than $10\bar c$ contacts, if they are in compartments $S$ or $E$ then they have opinions on the protective measures uniformly distributed in $[-1,0]$, while the ones in $I$ and $R$ have opinions uniformly distributed in $[0,1]$. A fraction of $3/4$ of the agents with more than $10\bar c$ contacts is initially very strongly against the use of protective measures, while the remaining $1/4$ of them is strongly in favor.

{\em Setting 2}. (Majority of the {``influencers''} in favor of protective measures).  In this first case we consider the initial densities of the various compartments defined as \eqref{eq:init_1}
where $\rho_E(0) = \rho_I(0) = \rho_R(0) = 10^{-2}$, while $\rho_S(0) = 1- (\rho_E(0) + \rho_I(0) + \rho_R(0))$,  $h(c)$ is the contact distribution \eqref{eq:logn}, and $g_1,g_2$ defined according to
\eqref{eq:initdata_g} with $\Theta = 1/4$, $\Omega_1 = [-1,0]$, and $\Omega_2 = [0,1]$,  and the remaining parameters the same as Test 1.

	Again the contacts are distributed according to \eqref{eq:logn}, and for the agents with less than $10\bar c$ contacts the situation is the same as the one described in \eqref{eq:init_1}, but in this case $3/4$ of the ones with more than $10\bar c$ contacts are initially very strongly in favor of the use of protective measures and the remaining $1/4$ is strongly against.
	
	Furthermore, we consider various time scales for the spread of the disease and for the opinions diffusion, i.e. for different choices of $\tau_1$ and $\tau_2$. 
	In particular, we simulated the case $\tau_1=\tau_2 = 1$ (both processes of opinion formation have the same speed, comparable to the velocity of the spread of the disease in object) and $\tau_1 = 10^{-2}$, $\tau_2=10^{-3}$ (opinions spread faster than the disease, especially the ones influenced by the background). 
	The parameters for the epidemiological model in \eqref{eq:step_2} are set as $\beta_1 = 0.4$, $\sigma_E = 0.5$ and $\gamma = 1/12$, where $K$ is defined as in \eqref{eq:K}.
	
	Figure \ref{fig:test2_tau1} shows the results when $\tau_1=\tau_2=1$. The upper row refers to the simulation with initial condition given by \eqref{eq:init_1} with $\Theta = 3/4$ ({\em Setting 1}), while the bottom row refers to the ones starting from \eqref{eq:init_1} with $\Theta = 1/4$ ({\em Setting 2}). On the left, we depict the time evolution of the relative mass of the compartments $\rho_J$; in the center, the time evolution of the marginal density of opinions; on the right, we present the normalized marginal of opinions at final time $T$, both for the entire population and for the subset where $c>10 \overline c$. In this scenario, opinions evolve at the same time scale as the disease, leading to the emergence of three distinct epidemic waves, starting from \eqref{eq:init_1}. Notably, for {\em Setting 1} the opinions of the influential agents ultimately steer the rest of the population towards a negative stance on the use of protective measures.  
	Conversely, for {\em Setting 2}, it is evident the impact of these influential agents in guiding the population toward a positive opinion on the use of protective measures.
	
	Figure \ref{fig:test2_tausmall} reports the results when $\tau_1= 10^{-2}$ and $\tau_2 = 10^{-3}$. Similarly to Figure  \ref{fig:test2_tau1}, we report again on the upper row the simulation with initial condition given by {\em Setting 1}, while the bottom row refers to the ones starting from {\em Setting 2}. 
	In this case, we observe that the impact of influent agents is weaker. Indeed, for {\em Setting 1}  the evolution of the marginal opinions clearly shows that opinions are gradually steered towards the negative stance of these agents by the final time. Conversely, for {\em Setting 2} opinions tend to shift toward a neutral value.
	
	Finally, Figure \ref{fig:test2_confronto} shows a comparison between the evolution of $\rho_I$ for {\em Setting 1} and {\em Setting 2}, both with $\tau_1 = \tau_2 = 1$ (left) and  with $\tau_1= 10^{-2}$ and $\tau_2 = 10^{-3}$ (right). The difference in severity of the two epidemic waves is evident compared to the alert threshold  of $5\%$, i.e. $\overline \rho = 0.05$.

	\begin{figure}[h!]
		\centering
		\includegraphics[width=0.3\textwidth]{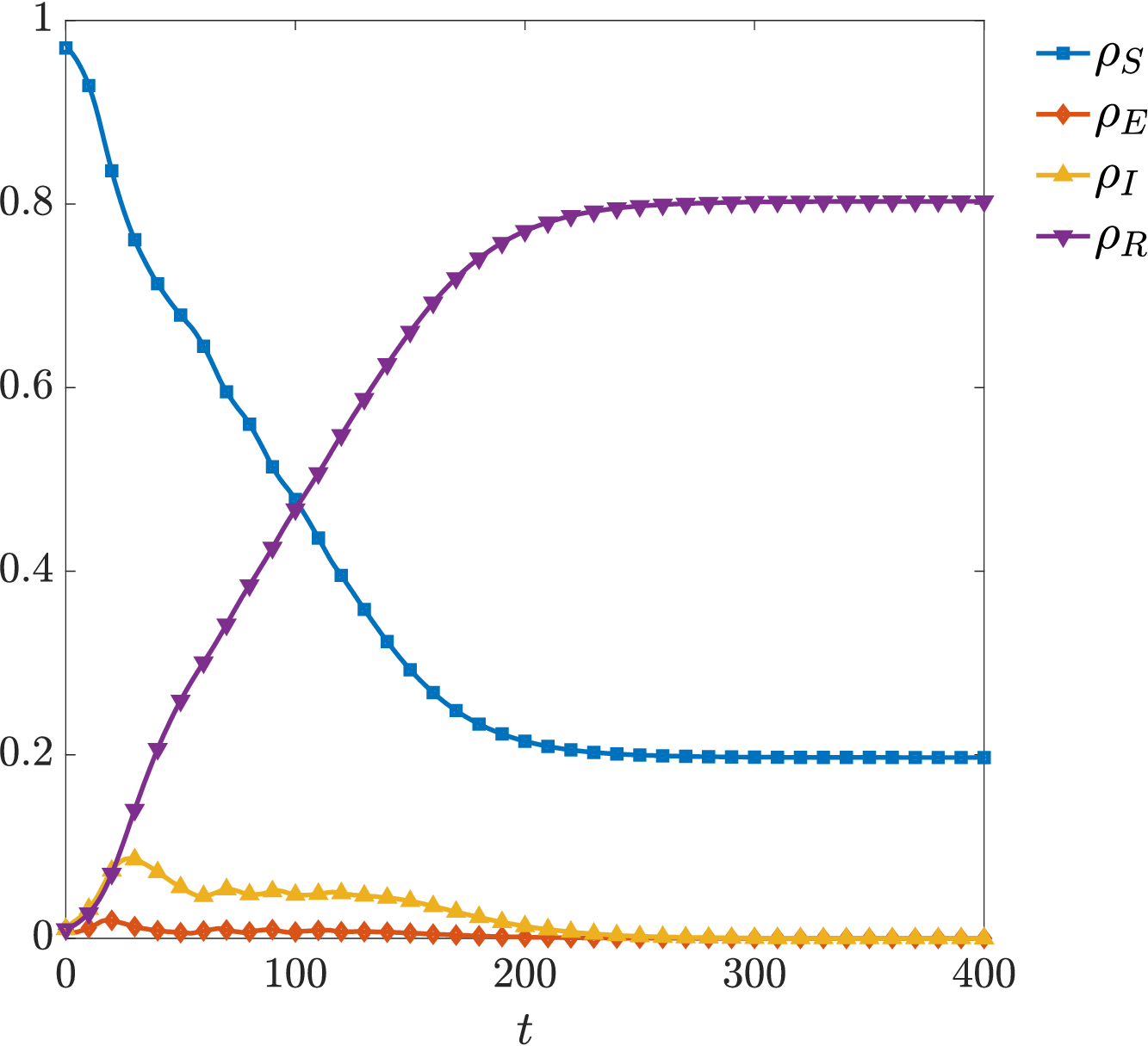}\quad\includegraphics[width=0.305\textwidth]{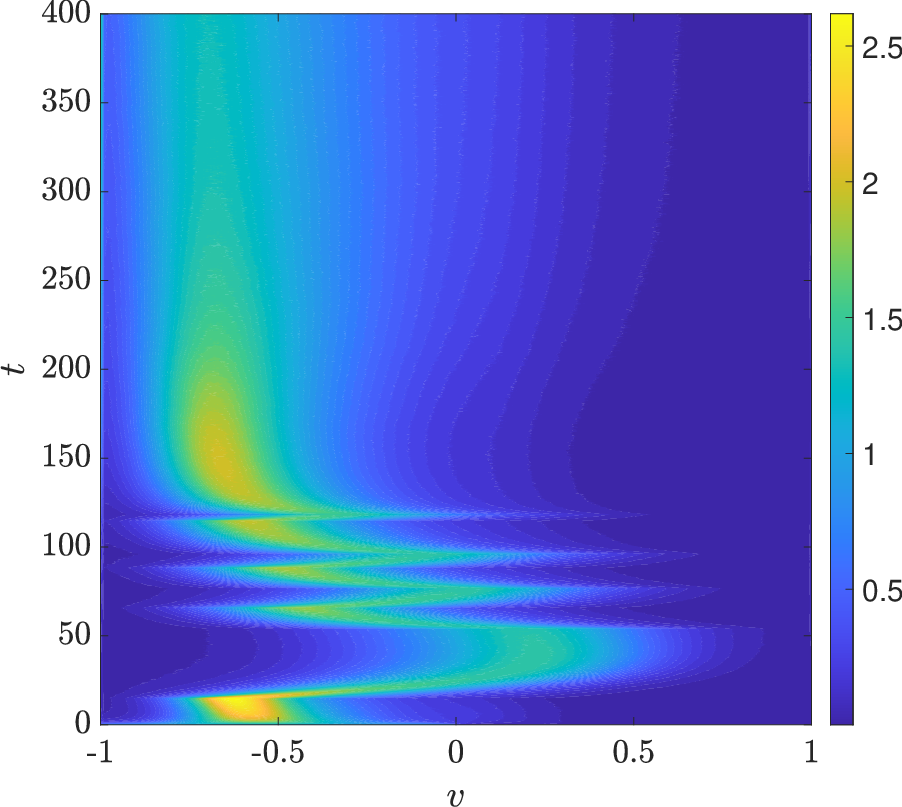}\quad\includegraphics[width=0.26\textwidth]{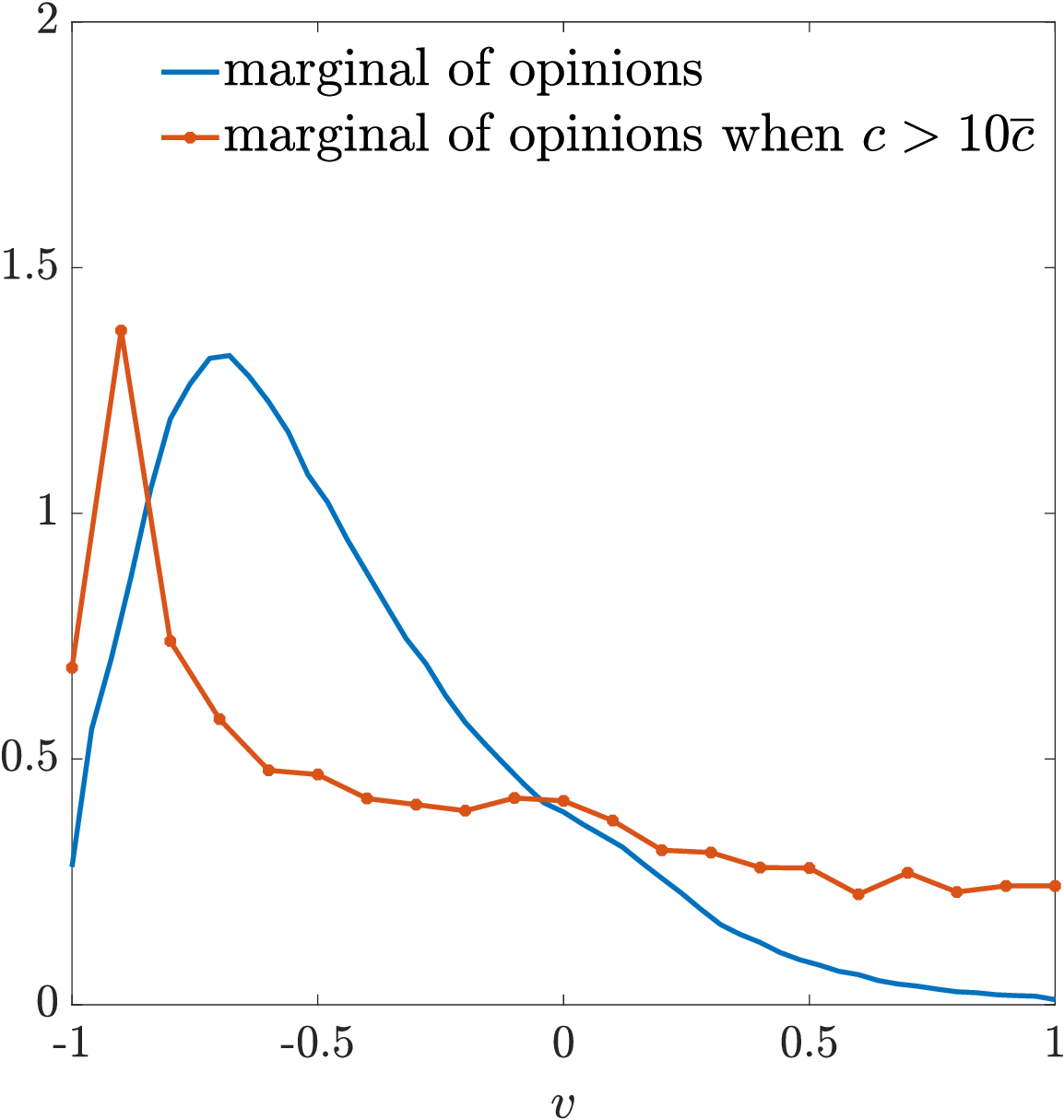}
		\includegraphics[width=0.295\textwidth]{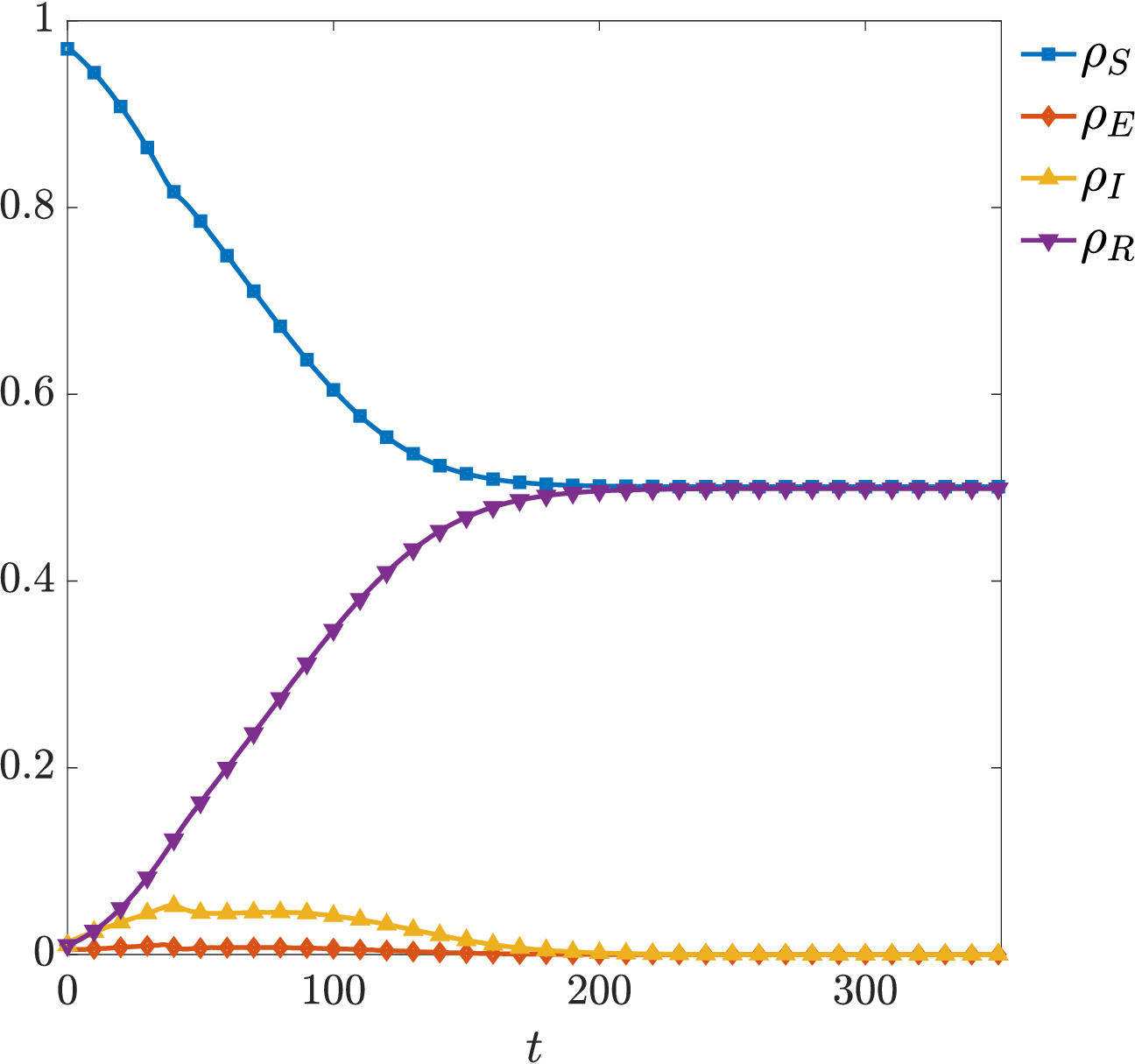}\quad\includegraphics[width=0.31\textwidth]{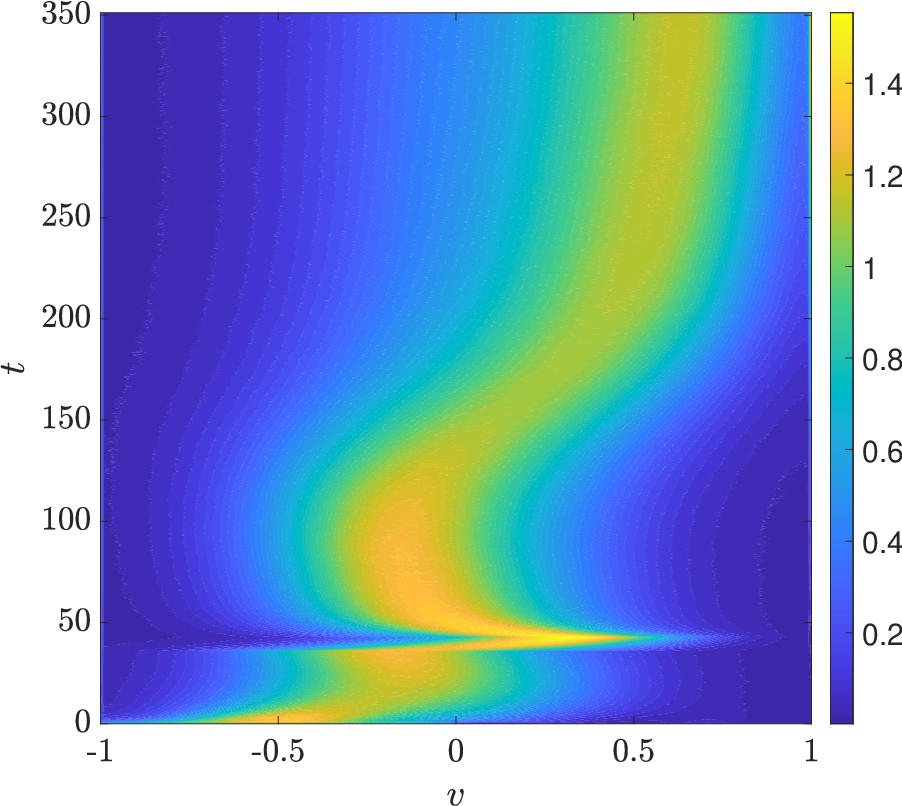}\quad\includegraphics[width=0.255\textwidth]{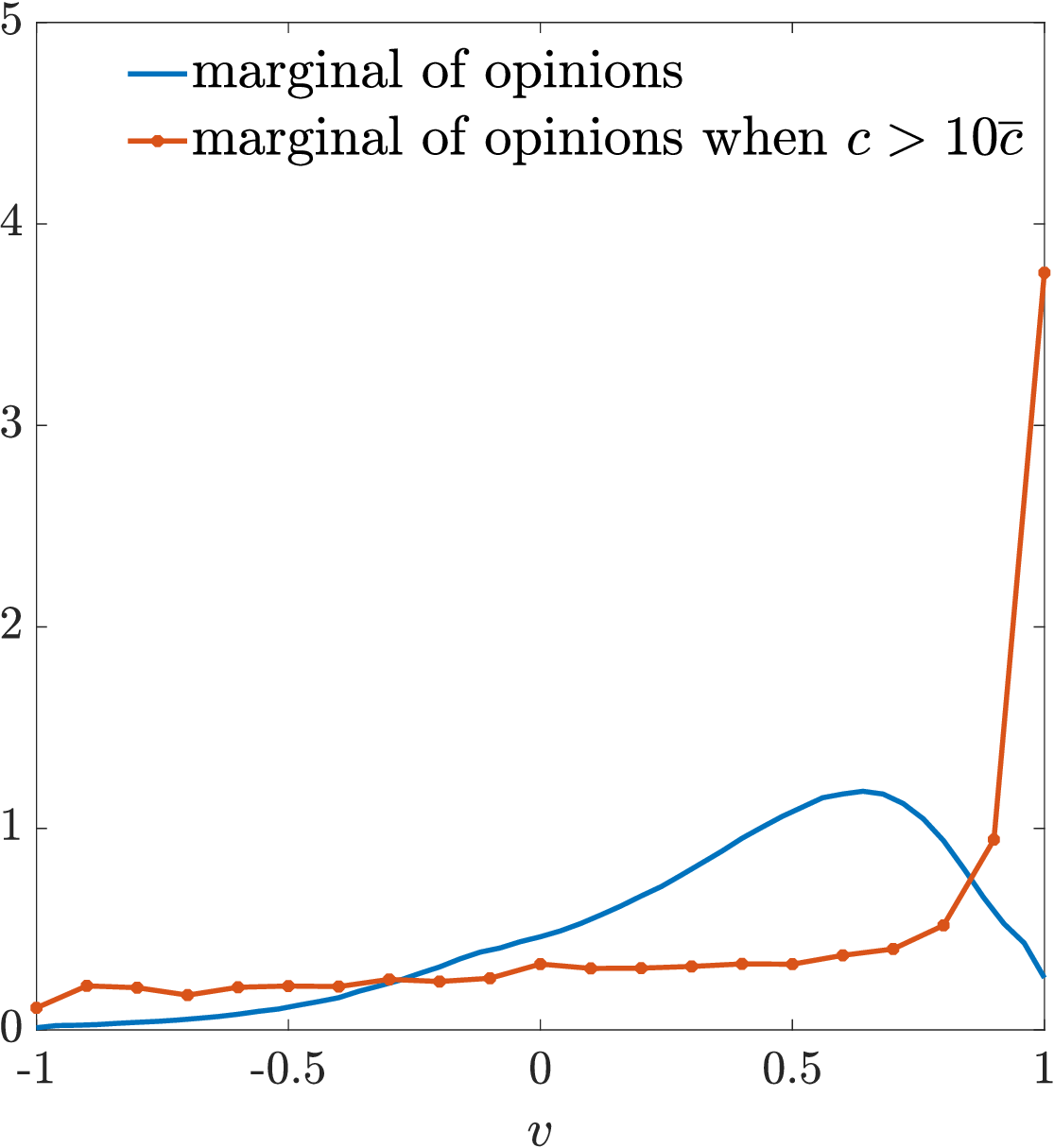}
		\caption{{\em Test 3.} ( $\tau_1 = \tau_2 = 1$). Time evolution of the marginal density of opinions (left) and comparison between the normalized terminal marginal of opinions both for the entire population and for $c>10 \overline c$ (right), for {\em Setting 1} (upper row) and  {\em Setting 2} (bottom row).}
		\label{fig:test2_tau1}
	\end{figure}

	\begin{figure}[h!]
		\centering
		\includegraphics[width=0.27\textwidth]{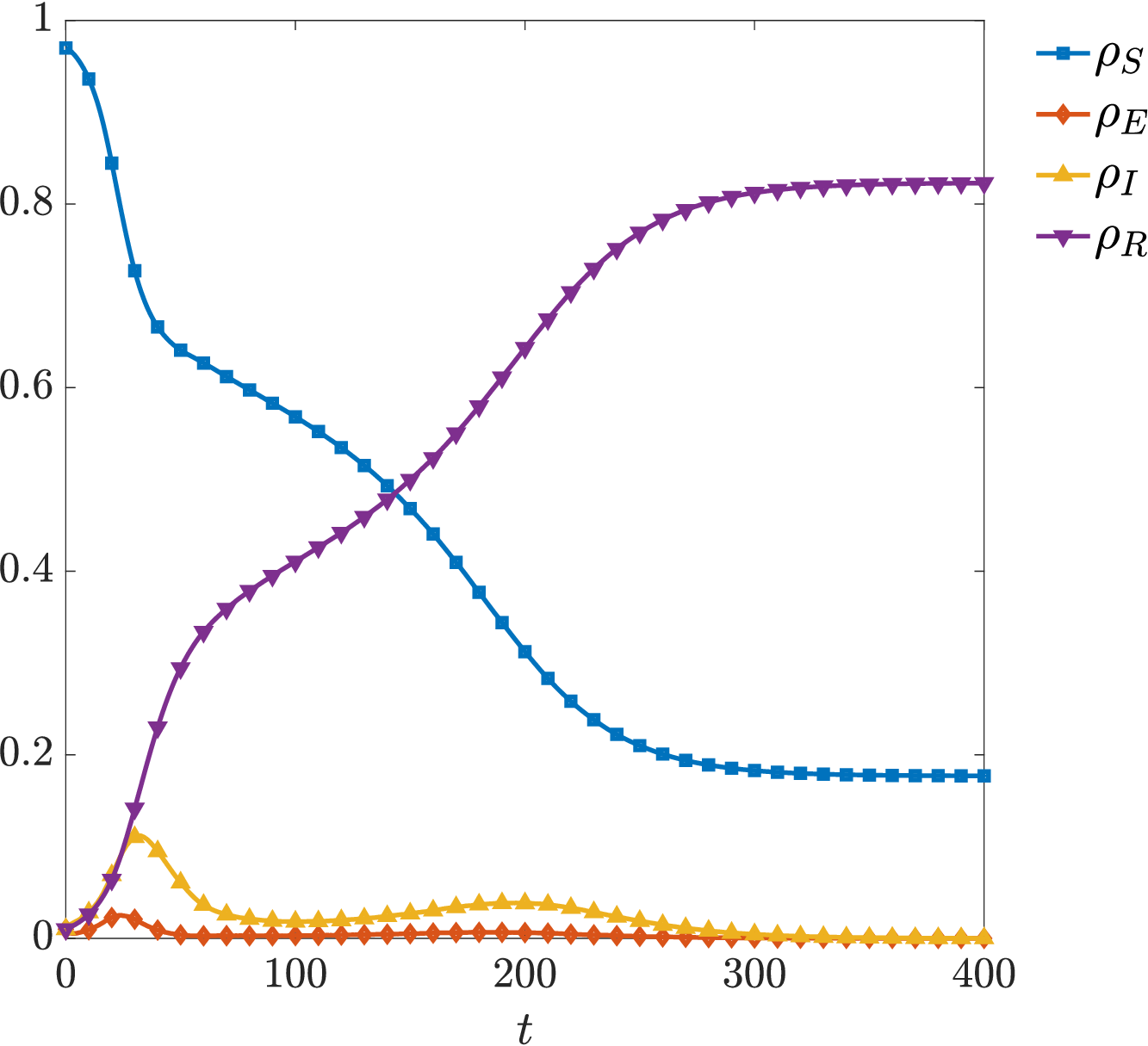}\quad\includegraphics[width=0.275\textwidth]{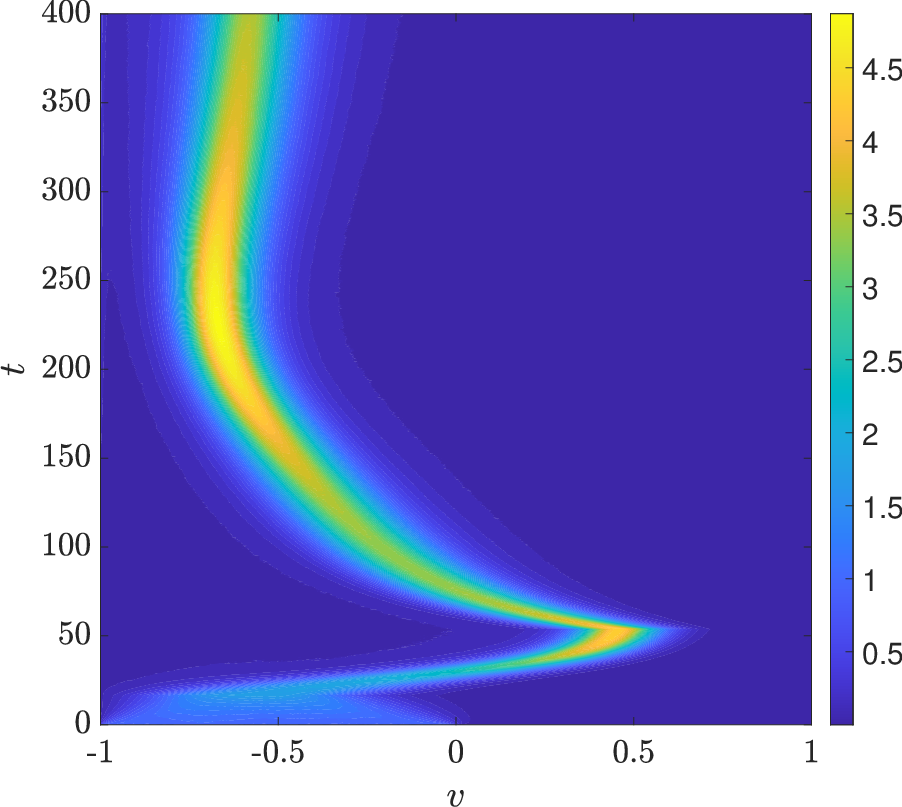}\quad\includegraphics[width=0.225\textwidth]{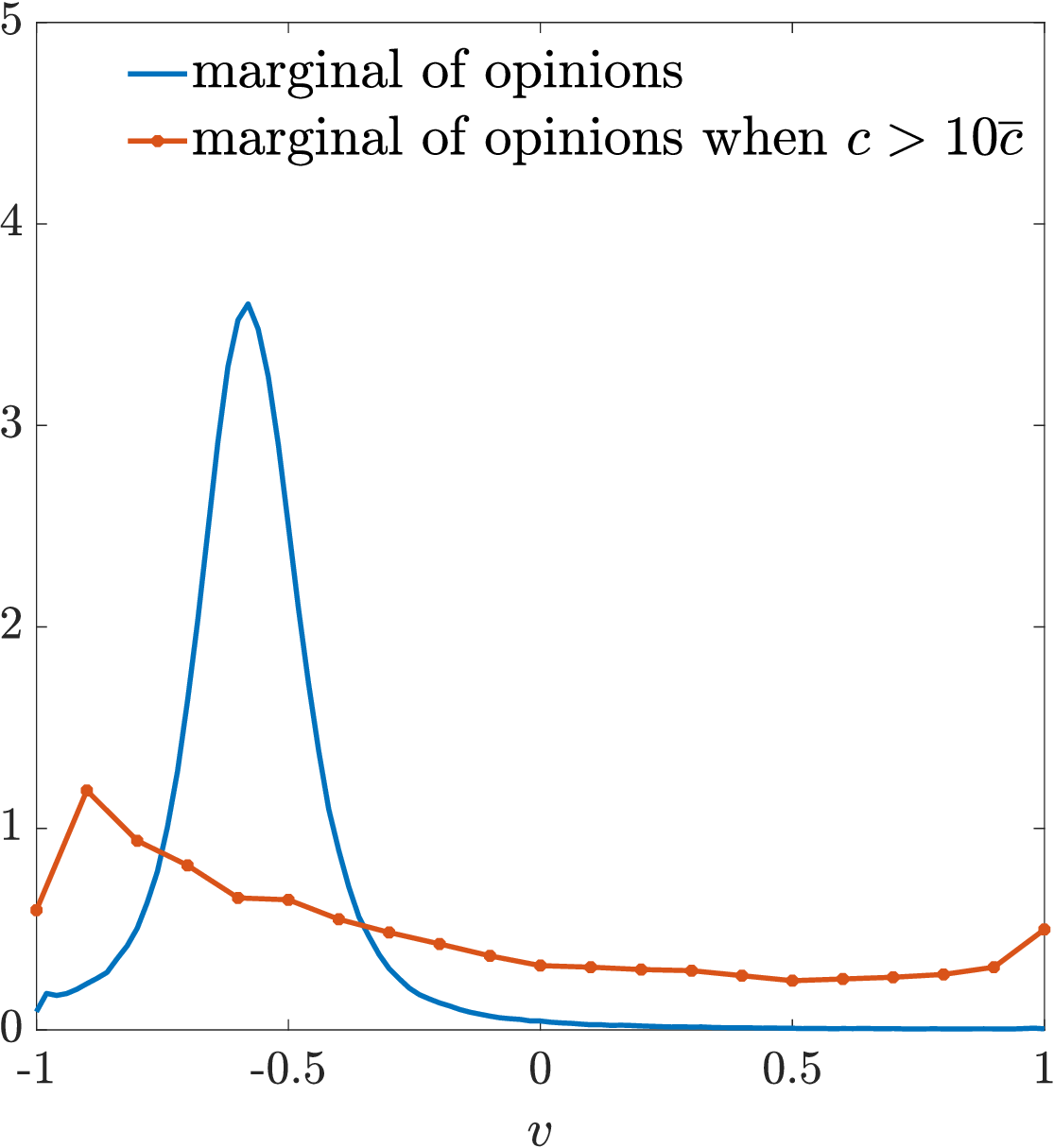}
		\includegraphics[width=0.27\textwidth]{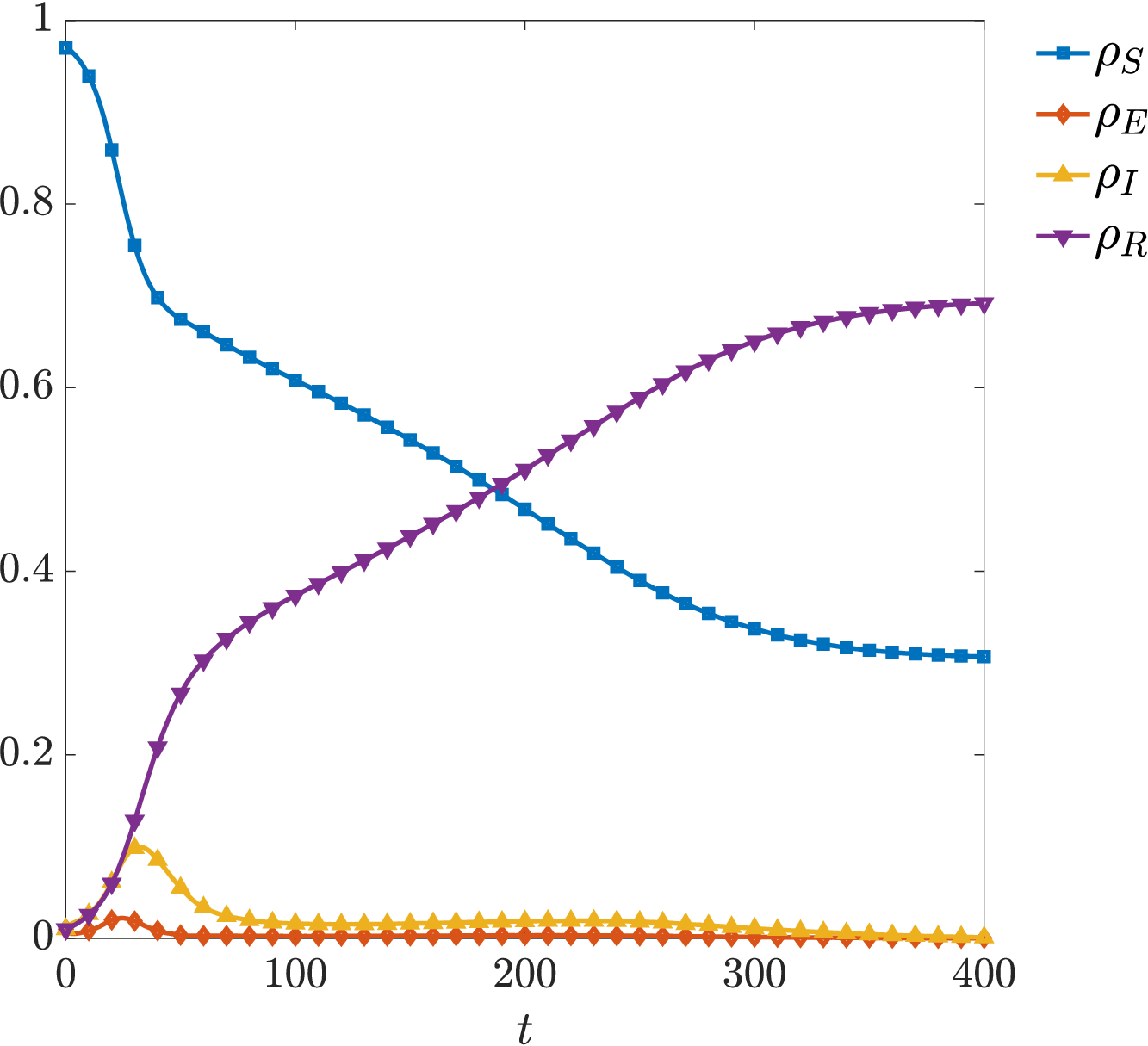}\quad\includegraphics[width=0.275\textwidth]{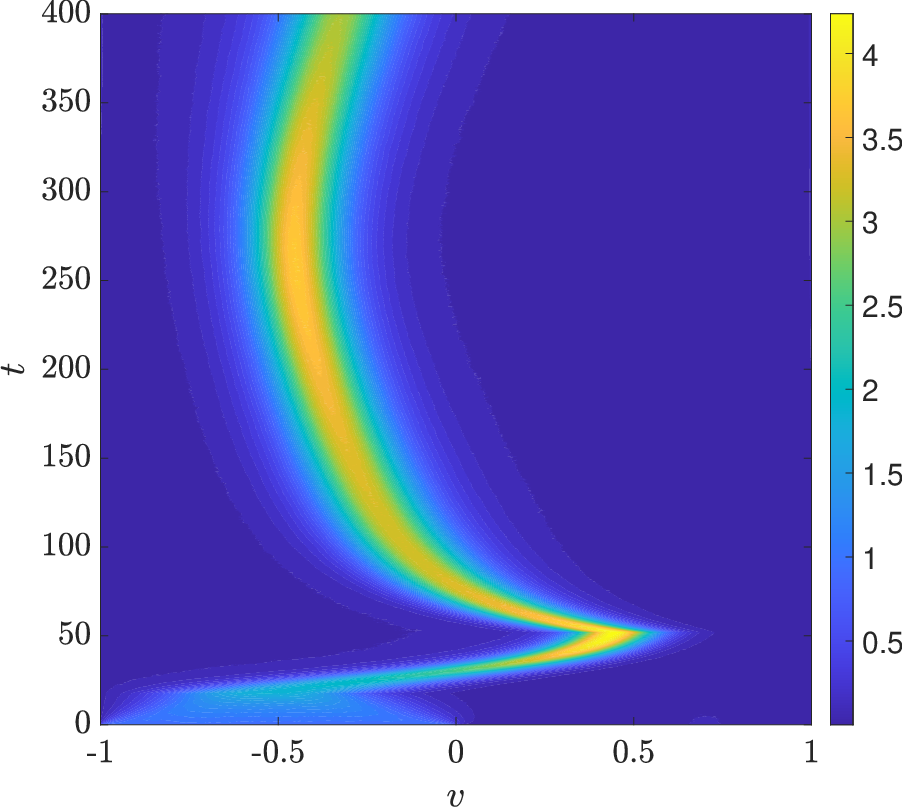}\quad\includegraphics[width=0.235\textwidth]{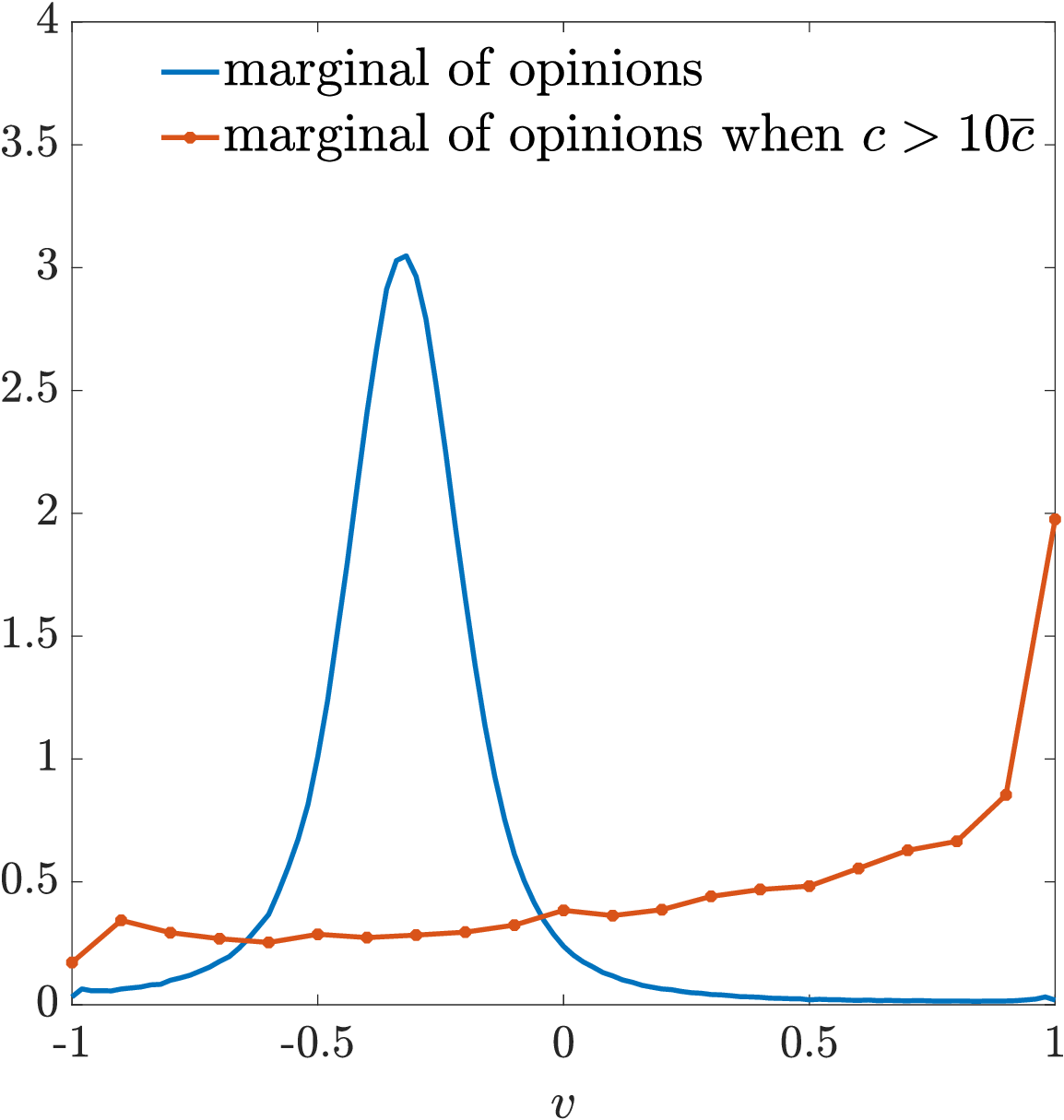}
		\caption{{\em Test 3.}  ($\tau_1 = 10^{-2}$ and $\tau_2 = 10^{-3}$).  Time evolution of the marginal density of opinions (left) and comparison between the normalized terminal marginal of opinions both for the entire population and for $c>10 \overline c$ (right). Upper row refers to {\em Setting 1}, bottom row refers to {\em Setting 2}.}
		\label{fig:test2_tausmall}
	\end{figure}
	
	\begin{figure}[h!]
		\centering
		\includegraphics[width=0.4\textwidth]{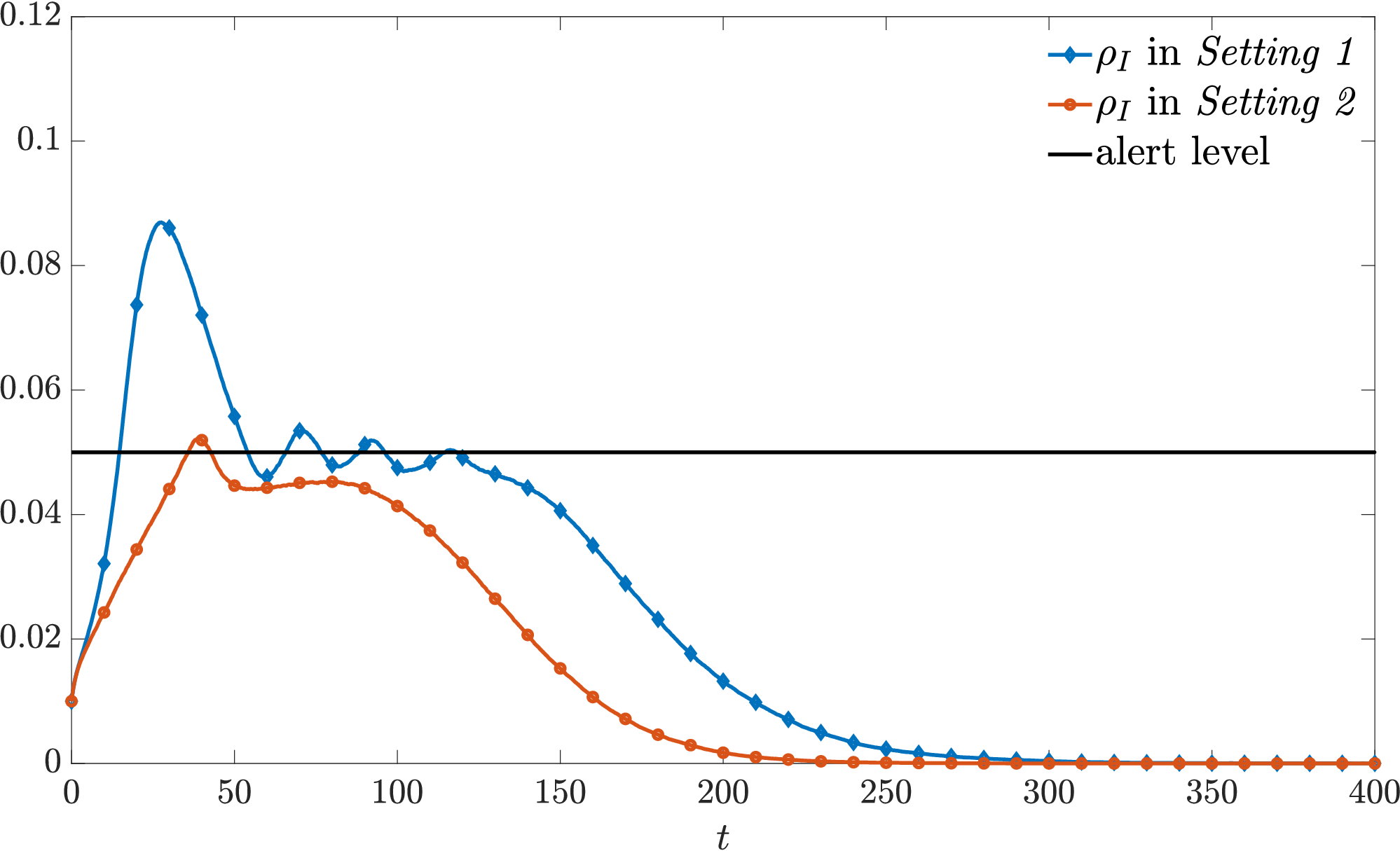}\quad\includegraphics[width=0.4\textwidth]{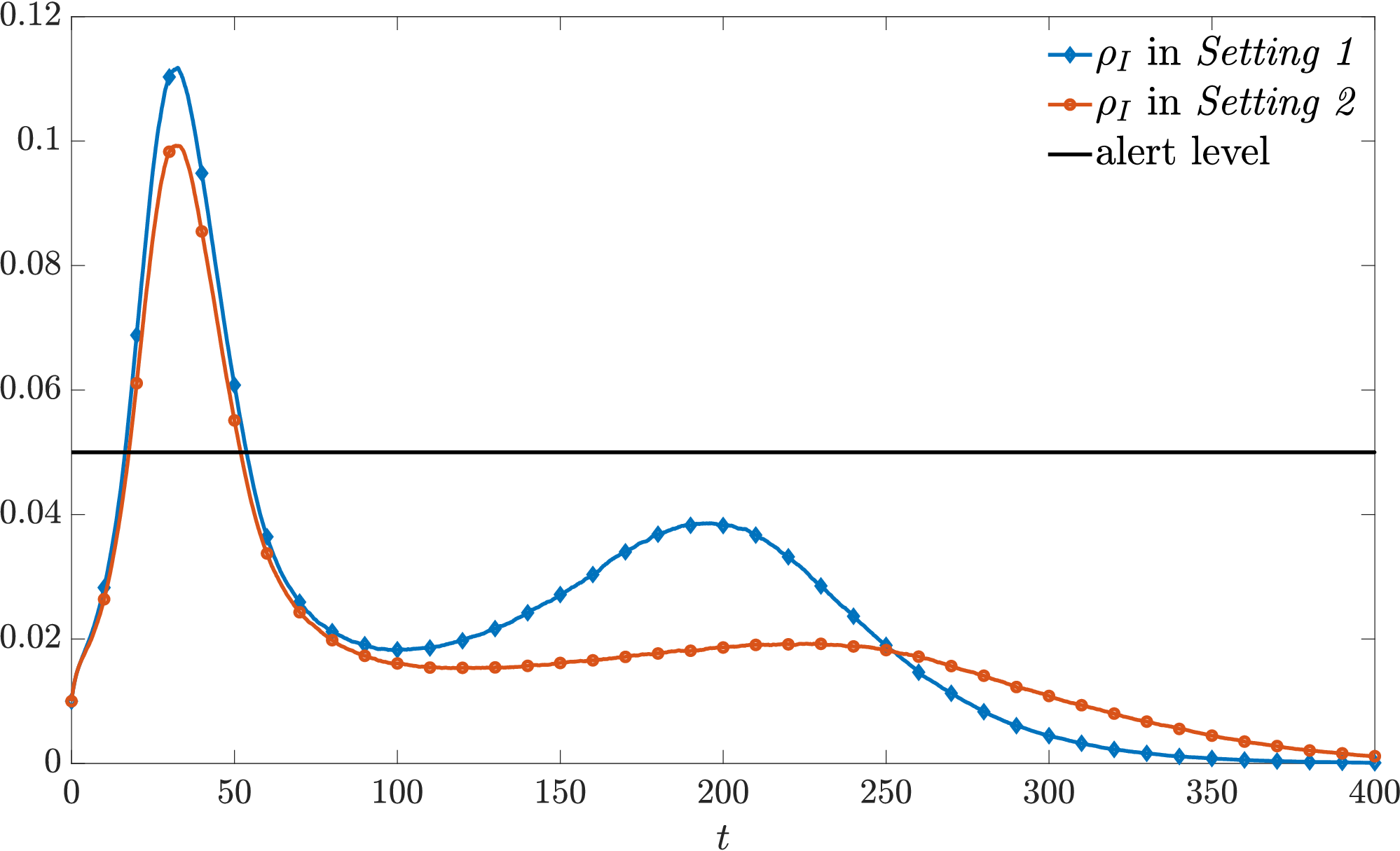}
		\caption{{\em Test 3.} Time evolution of the infected population  $\rho_I(t)$ considering {\em Setting 1}, and {\em Setting 2},  respectively when influencers are against and in favor of protective measures.The black solid line indicates the alert level. On the left, the case with $\tau_1=\tau_2=1$ and on the right, $\tau_1 = 10^{-2}$ and $\tau_2 = 10^{-3}$.}
		\label{fig:test2_confronto}
	\end{figure}
	
	\subsection{Test 4: opinions depending on the trend of the disease}\label{sec:test3}
	In this last set of simulations we want to explore possible effects of the disease status of the individuals on the evolution of the opinions' distribution and on the alert level. As in the previous section, we fix $\rho_E(0) = \rho_I(0) = \rho_R(0) = 10^{-2}$, while $\rho_S(0) = 1- (\rho_E(0) + \rho_I(0) + \rho_R(0))$. 
	
	\subsubsection{Test 4a. The case with $G_J$ dependent on the compartment}
	We assume that the interaction with the background in \eqref{eq:opinion_infected_scaled} depends on the disease status of the agents. In particular we choose $\tilde \alpha_J = 0.5$ for all $J\in \mathcal{C}$ and $\tilde P$ as in \eqref{eq:kernel2}, while the function $G$ is now defined as
	\begin{equation}\label{eq:gtest3a}
		G_J(\rho_I, \rho_R) =\begin{cases} 2\chi\left(\rho_I - \bar \rho> 0\right) - 1, &\mbox{if } J=S,E, \\
			0.7, &\mbox{if } J=I, \\
			-0.7\left(1 - e^{-10\int_0^t \rho_R(\tau) \mathrm{d}\tau}\right), &\mbox{if } J=R.\end{cases}
	\end{equation}
	with alert level $\bar \rho = 0.05$. Here, the choice of $G_J$ reflects the following assumptions: susceptible and exposed individuals behave according to the disease status in terms of number of infected $\rho_I$; infected individuals are assumed to have in general a positive opinion towards protective measures; removed individuals are assumed to have increasingly negative opinions towards protective measures as the disease evolves in time.
	
	As in Section \ref{sec:test2}, we consider {\em Setting 1}, and {\em Setting 2}, and  we simulate  these scenario with parameters $\tau_1 = 10^{-2}$ and $\tau_2 = 10^{-3}$.
	
	Figure \ref{fig:test3a_1} shows the time evolution of the masses of the compartments (left), marginal of the opinions (center) and the normalized terminal marginal of opinions both for the entire population and for $c>10 \overline c$ (right) both for \eqref{eq:init_1} $\Theta = 3/4$ (top row) and  $\Theta = 1/4$ (bottom row). Figure \ref{fig:test3a_confronto} (left) shows the comparison between the evolution of $\rho_I$ for both the initial conditions in {\em Setting 1} and {\em Setting 2}.
	\begin{figure}[h!]
		\centering
		\includegraphics[width=0.27\textwidth]{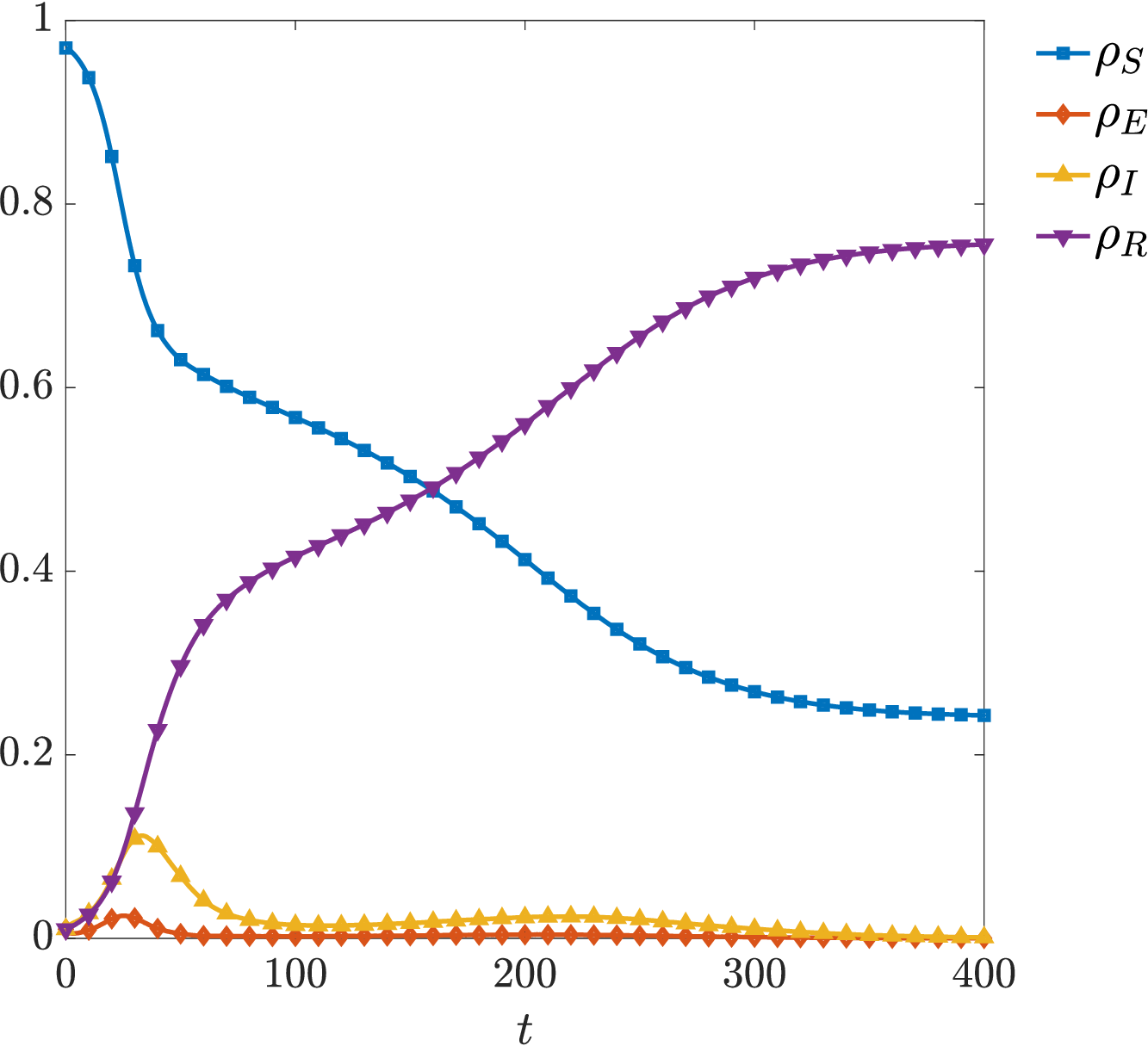}\quad\includegraphics[width=0.275\textwidth]{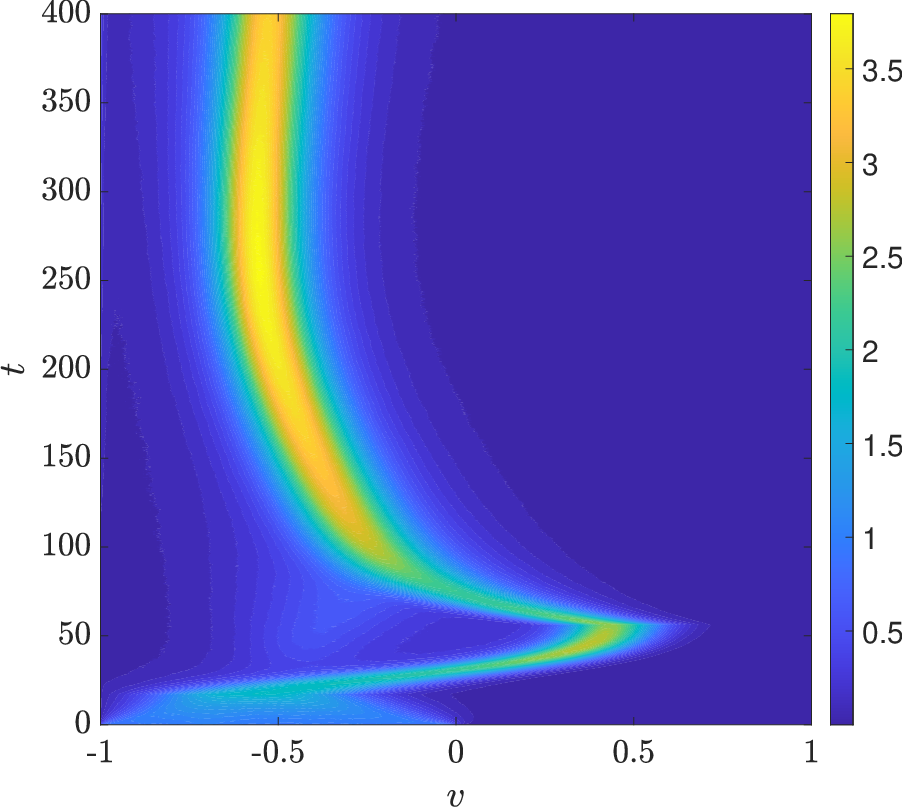}\quad\includegraphics[width=0.235\textwidth]{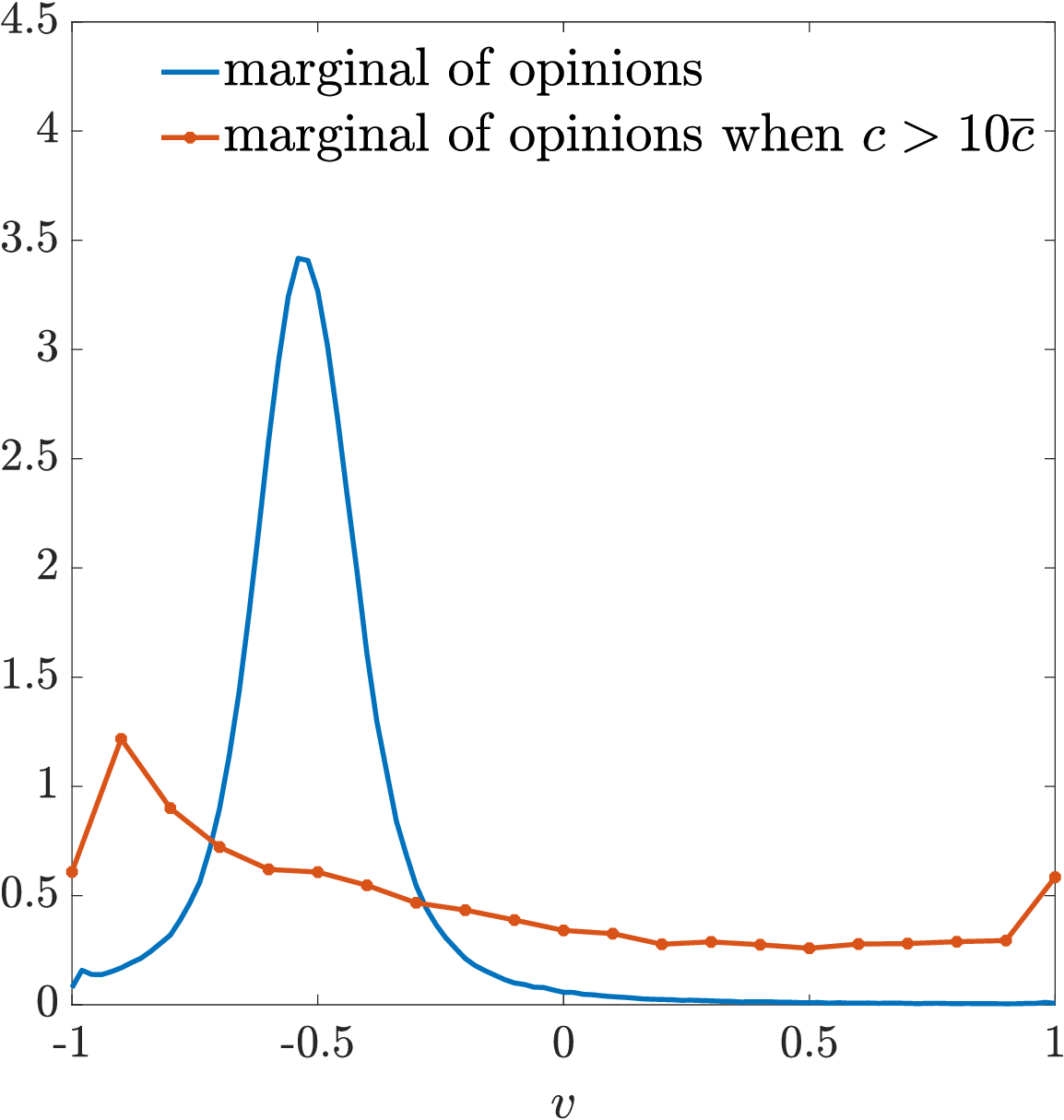}
		\includegraphics[width=0.27\textwidth]{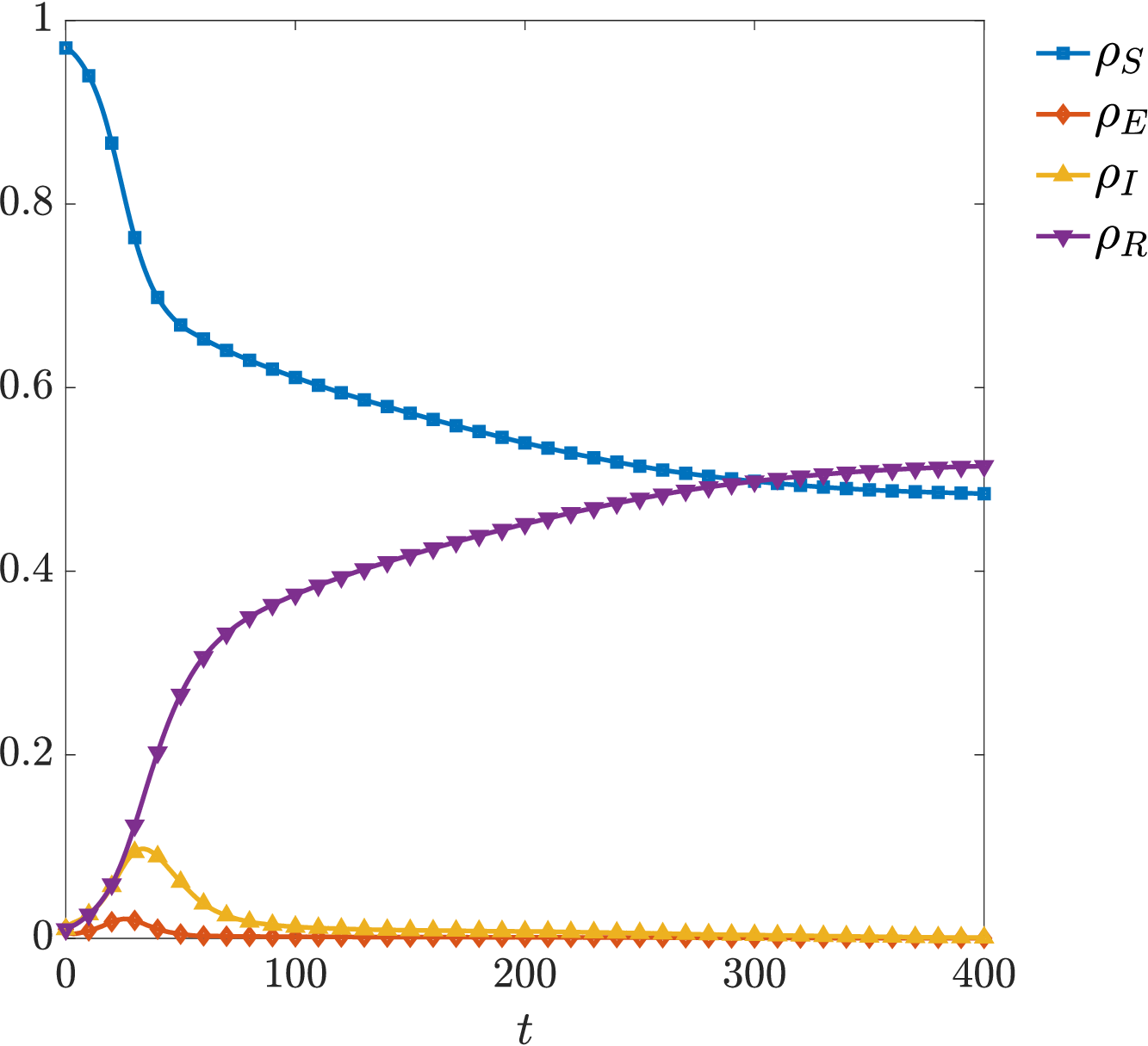}\quad\includegraphics[width=0.275\textwidth]{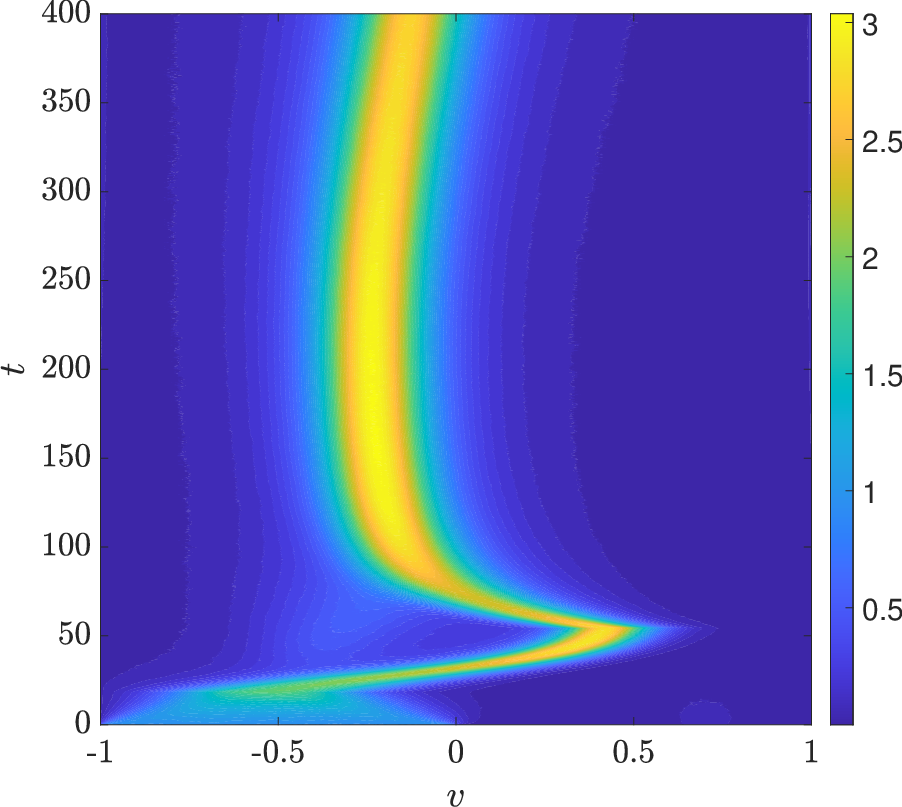}\quad\includegraphics[width=0.235\textwidth]{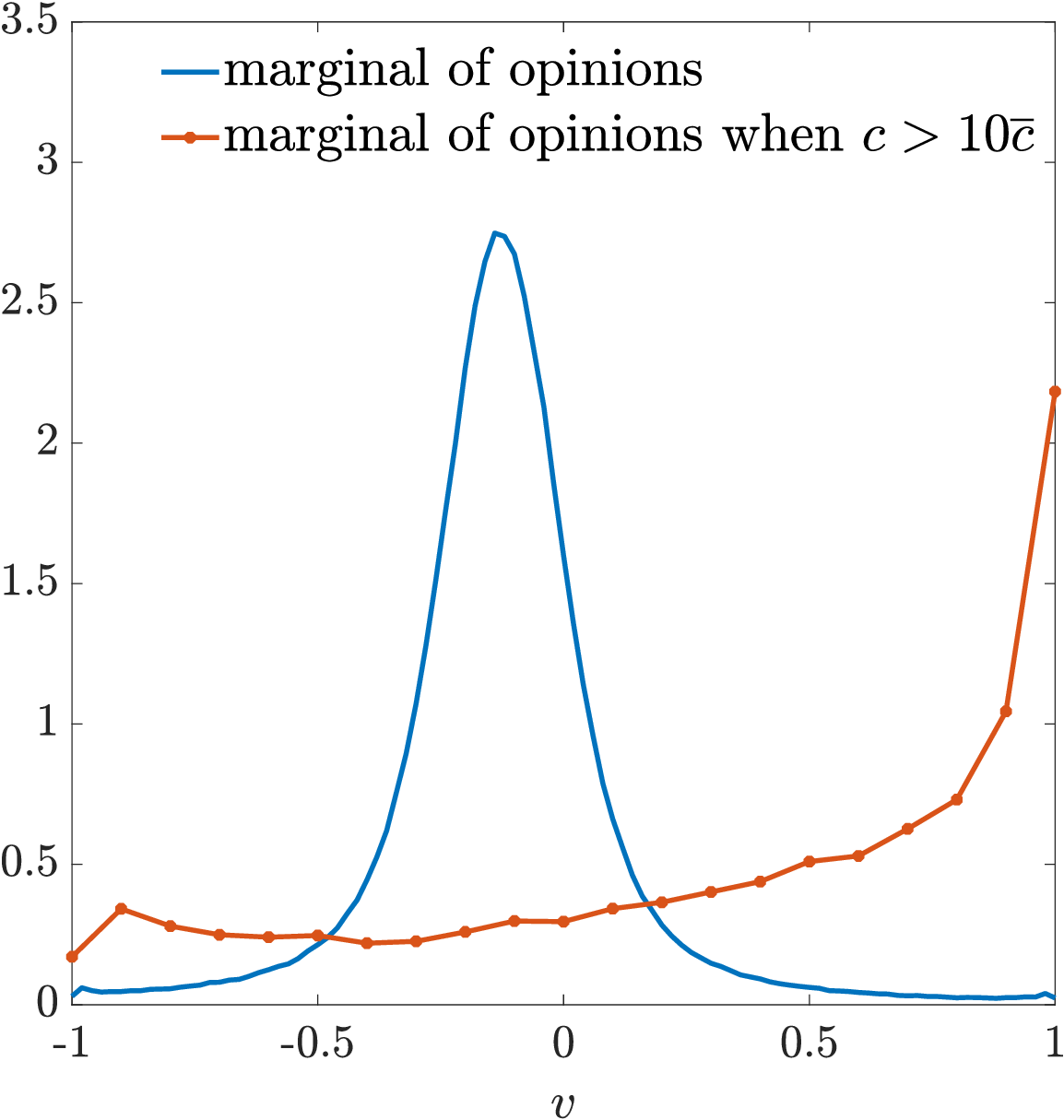}
		\caption{{\em Test 4a.} Time evolution of the masses $\rho_J$ (left), marginal density of opinions (center) and comparison between the normalized terminal marginal of opinions both for the entire population and for $c>10 \overline c$ (right). {\em Setting 1} (upper row), and {\em Setting 2} (bottom row). Here we set $\tau_1 = 10^{-2}$ and $\tau_2 = 10^{-3}$.}
		\label{fig:test3a_1}
	\end{figure}

	\subsubsection{Test 4b. Alert level $\bar \rho$ disease-responsive}
	We now consider an interaction with the background $G$ similar to the one presented in Section \ref{sec:test2}, additionally introducing the time
	\[
	\bar t = \min\{t\in(0,T] \mbox{ s.t. } \rho_I(t) > \tilde \rho\},
	\]
	setting $\tilde \rho = 0.05$ and defining the time dependent alert level $\bar \rho(t)$ 
	\begin{equation}\label{eq:barrhot}
		\overline \rho(t) = \begin{cases} \tilde \rho &\mbox{if } t < \bar t,\\
			\displaystyle \frac{\tilde \rho}{2} &\mbox{if } t \geq \bar t.
		\end{cases}
	\end{equation}
	Hence, the alert level is lowered once the initial threshold is exceeded, in order to mimic a different perception of risk after the disease has spread significantly. The function $G$ is now defined as in \eqref{eq:Gfunction} but with $\overline \rho (t)$ as in \eqref{eq:barrhot} instead of a constant in time $\overline \rho$, 
	and we choose $\tilde \alpha_J = 0.5$ for all $J\in \mathcal{C}$ and $\tilde P$ as in \eqref{eq:kernel2}. 
	As in Section \ref{sec:test2}, we consider {\em Setting 1}, and {\em Setting 2}, and  we simulate  these scenario with parameters $\tau_1 = 10^{-2}$ and $\tau_2 = 10^{-3}$.
	
	Figure \ref{fig:test3b_1} shows the time evolution of the masses of the compartments (left), marginal of the opinions (center) and the normalized terminal marginal of opinions both for the entire population and for $c>10 \overline c$ (right) for {\em Setting 1} (top row) and {\em Setting 2} (bottom row).  Figure \ref{fig:test3a_confronto} (right) shows the comparison between the evolution of $\rho_I$ for both the initial conditions {\em Setting 1} and {\em Setting 2}. At the final time $T=800$, for initial condition \eqref{eq:init_1}, we can see a resurgence of the epidemics.
	\begin{figure}[h!]
		\centering
		\includegraphics[width=0.27\textwidth]{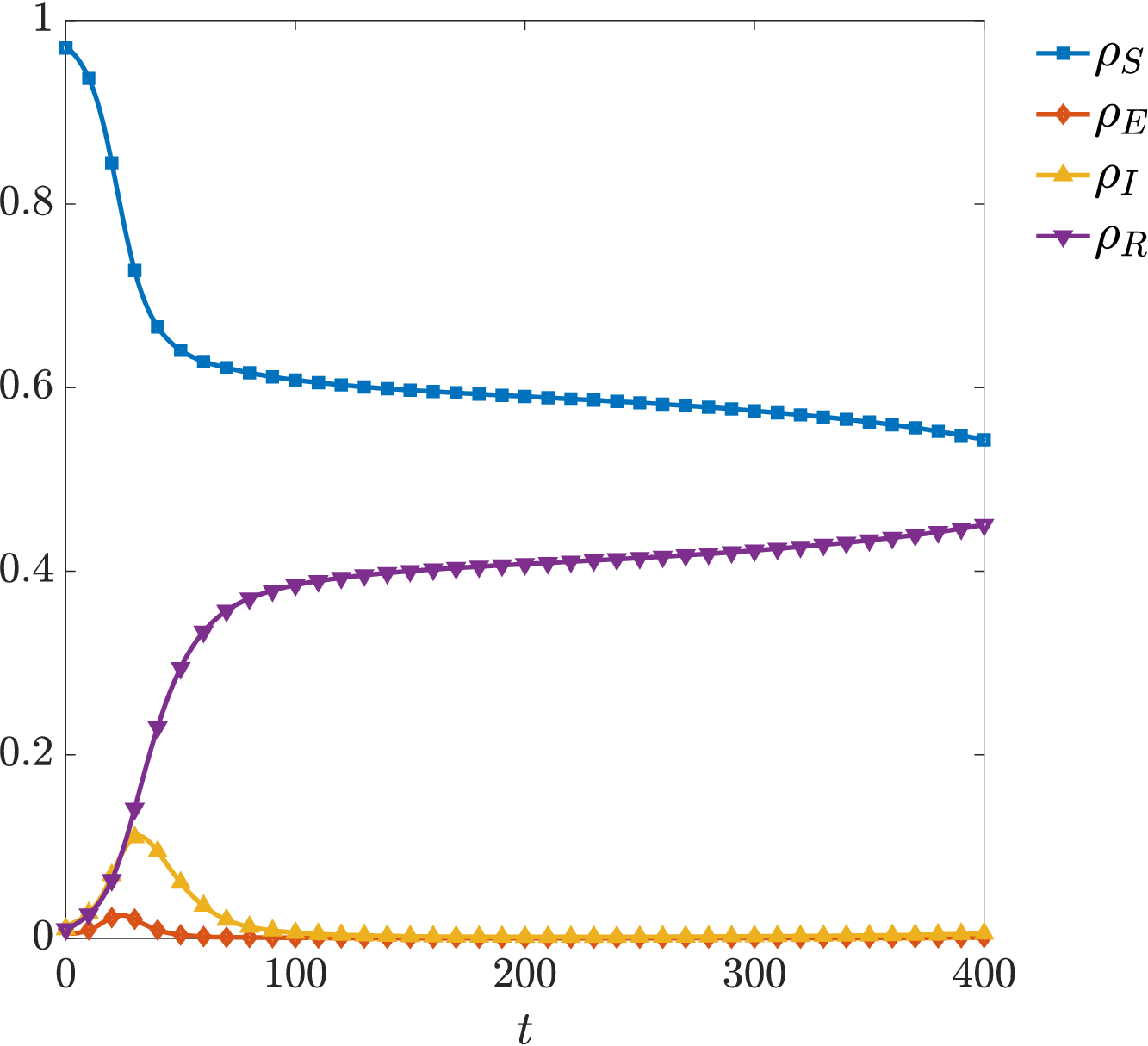}\quad\includegraphics[width=0.275\textwidth]{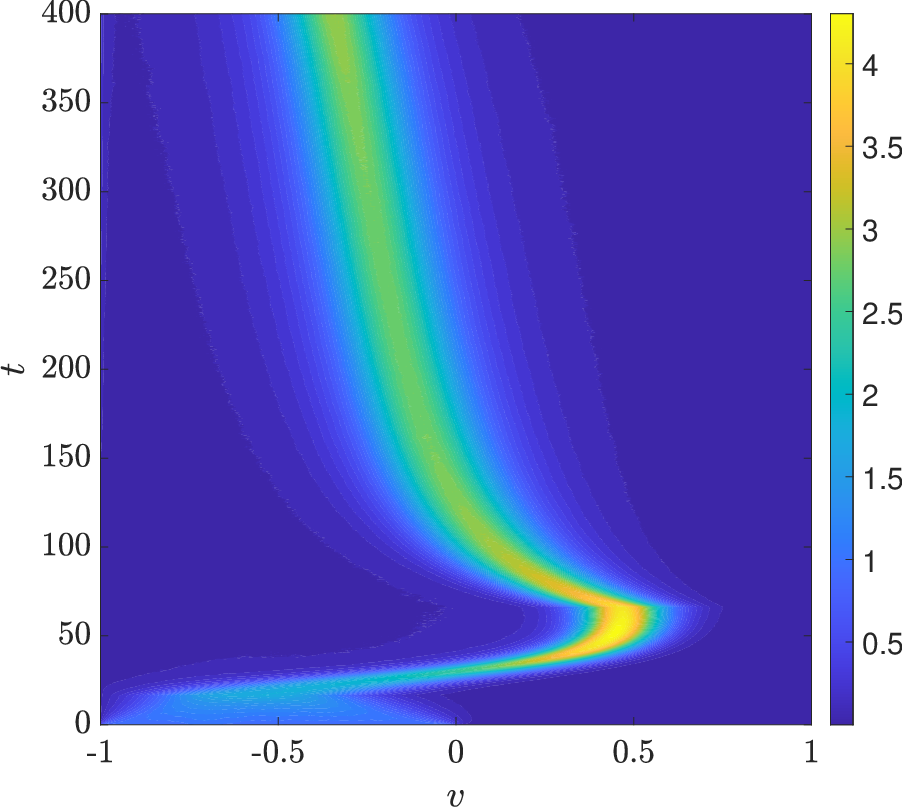}\quad\includegraphics[width=0.235\textwidth]{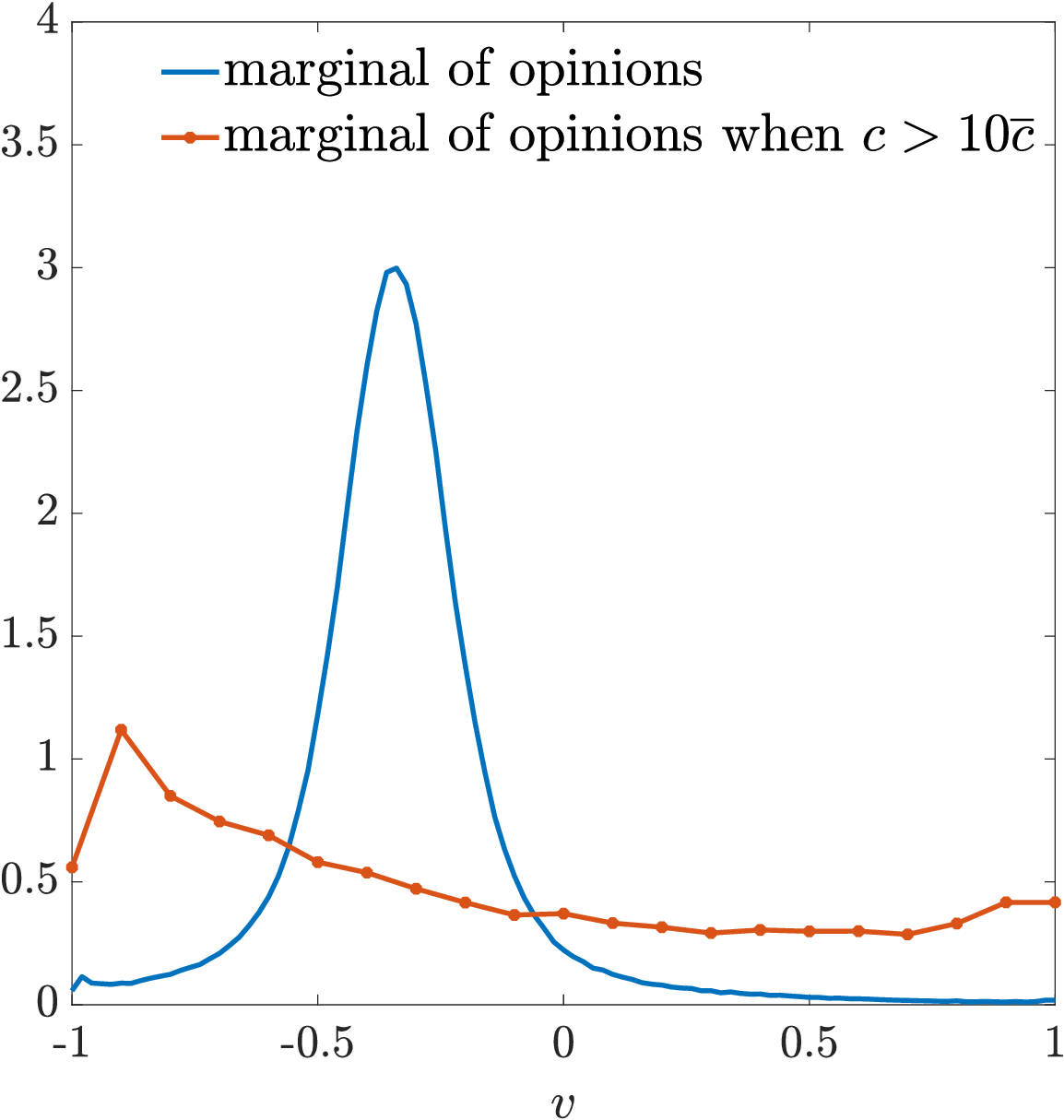}
		\includegraphics[width=0.27\textwidth]{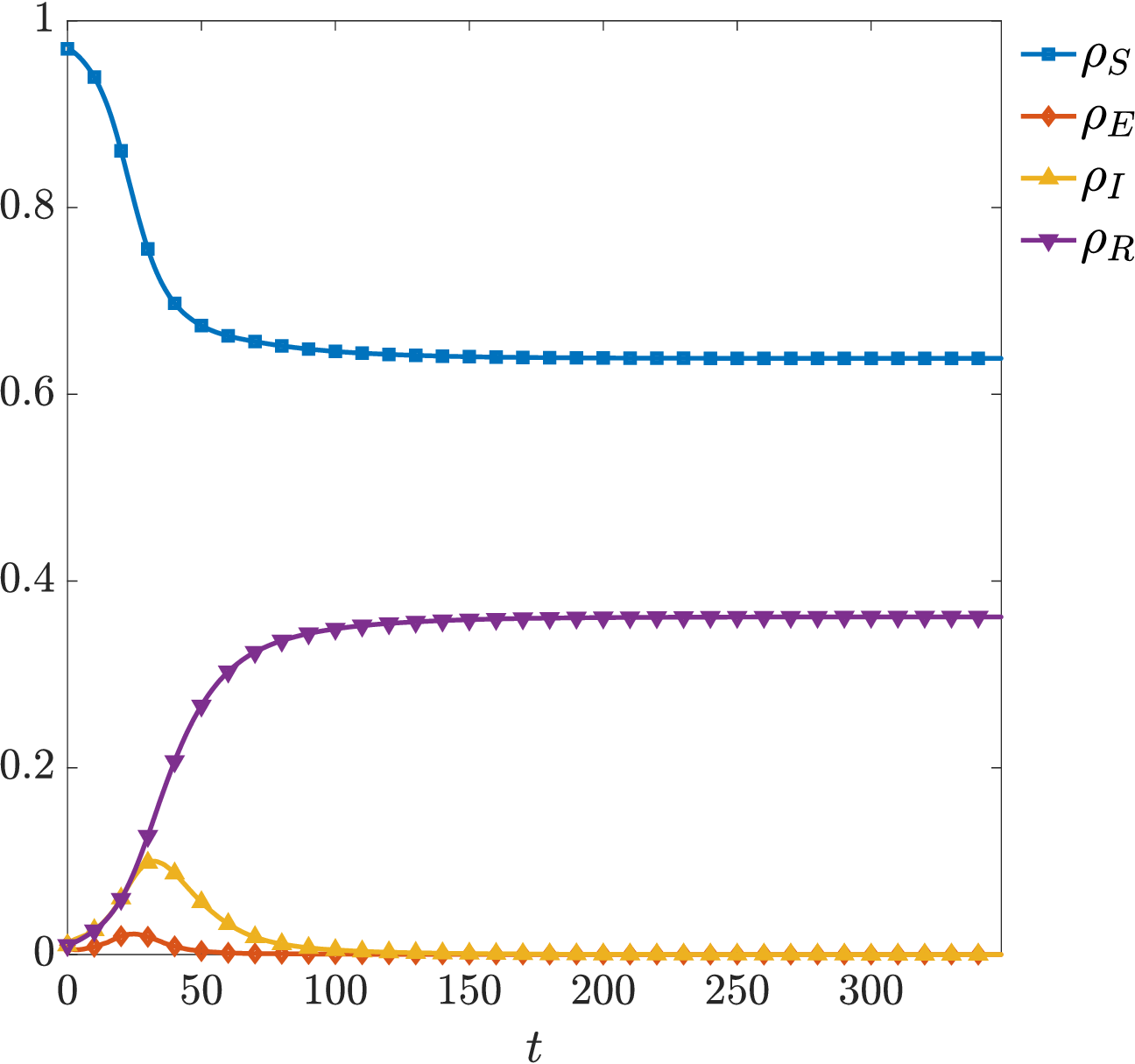}\quad\includegraphics[width=0.275\textwidth]{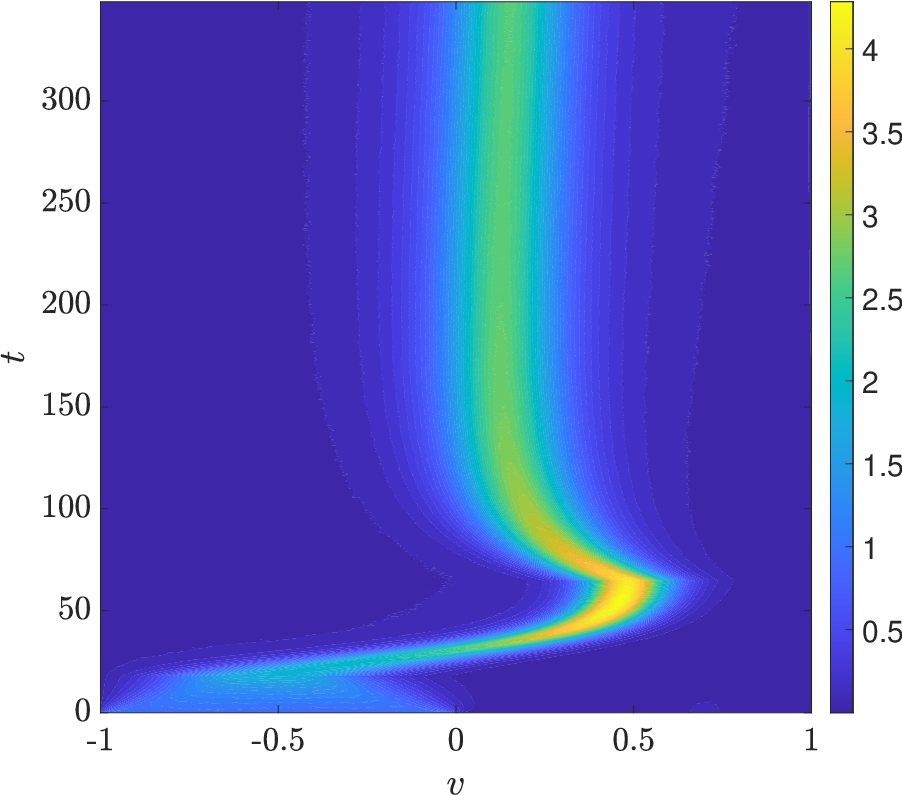}\quad\includegraphics[width=0.235\textwidth]{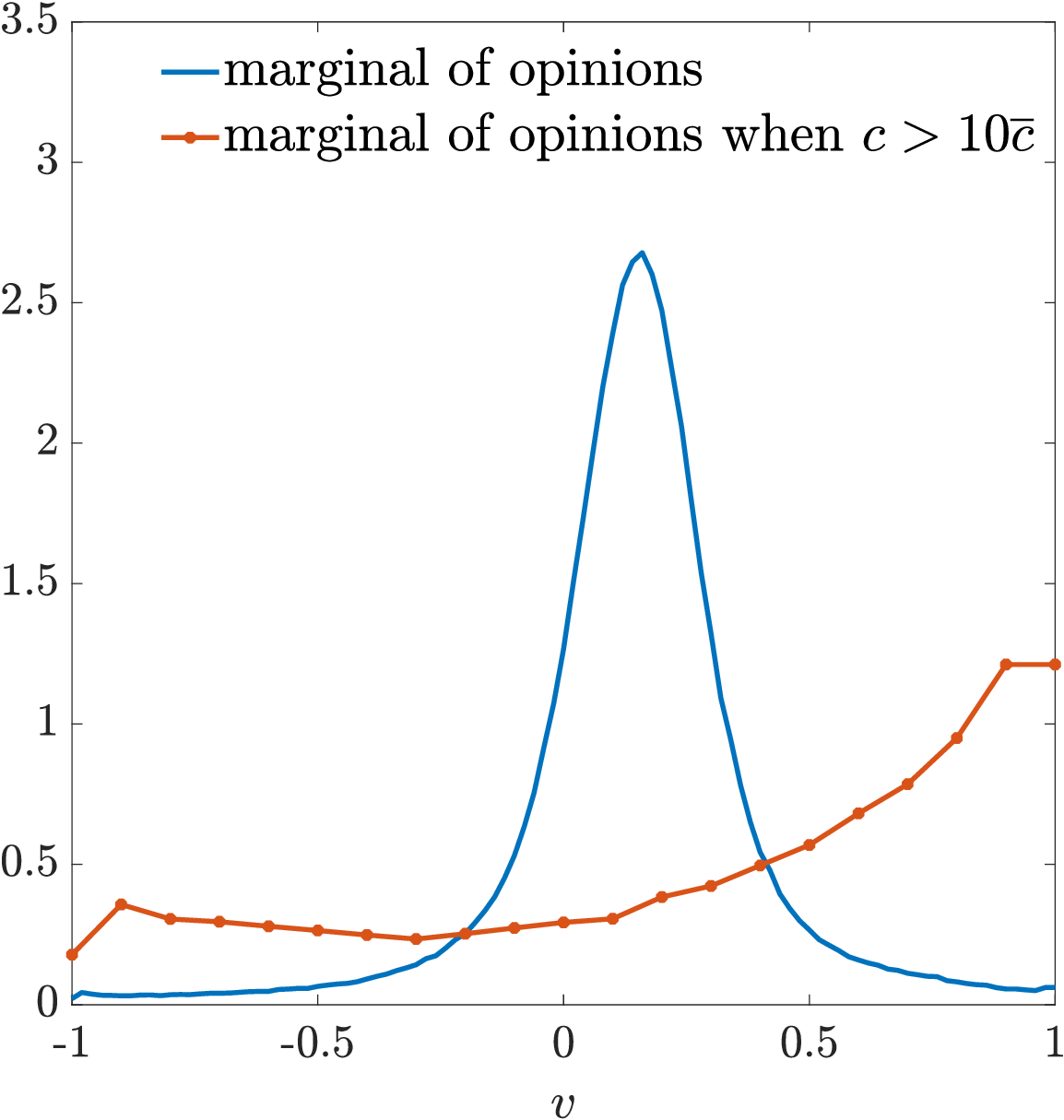}
		\caption{{\em Test 4b}.Time evolution of the masses $\rho_J$ (left), marginal density of opinions (center) and comparison between the normalized terminal marginal of opinions both for the entire population and for $c>10 \overline c$ (right),  for {\em Setting 1} (upper row) and {\em Setting 2} (bottom row). Here we set $\tau_1 = 10^{-2}$ and $\tau_2 = 10^{-3}$.}
		\label{fig:test3b_1}
	\end{figure}
	
	\begin{figure}[h!]
		\centering
		\includegraphics[width=0.4\textwidth]{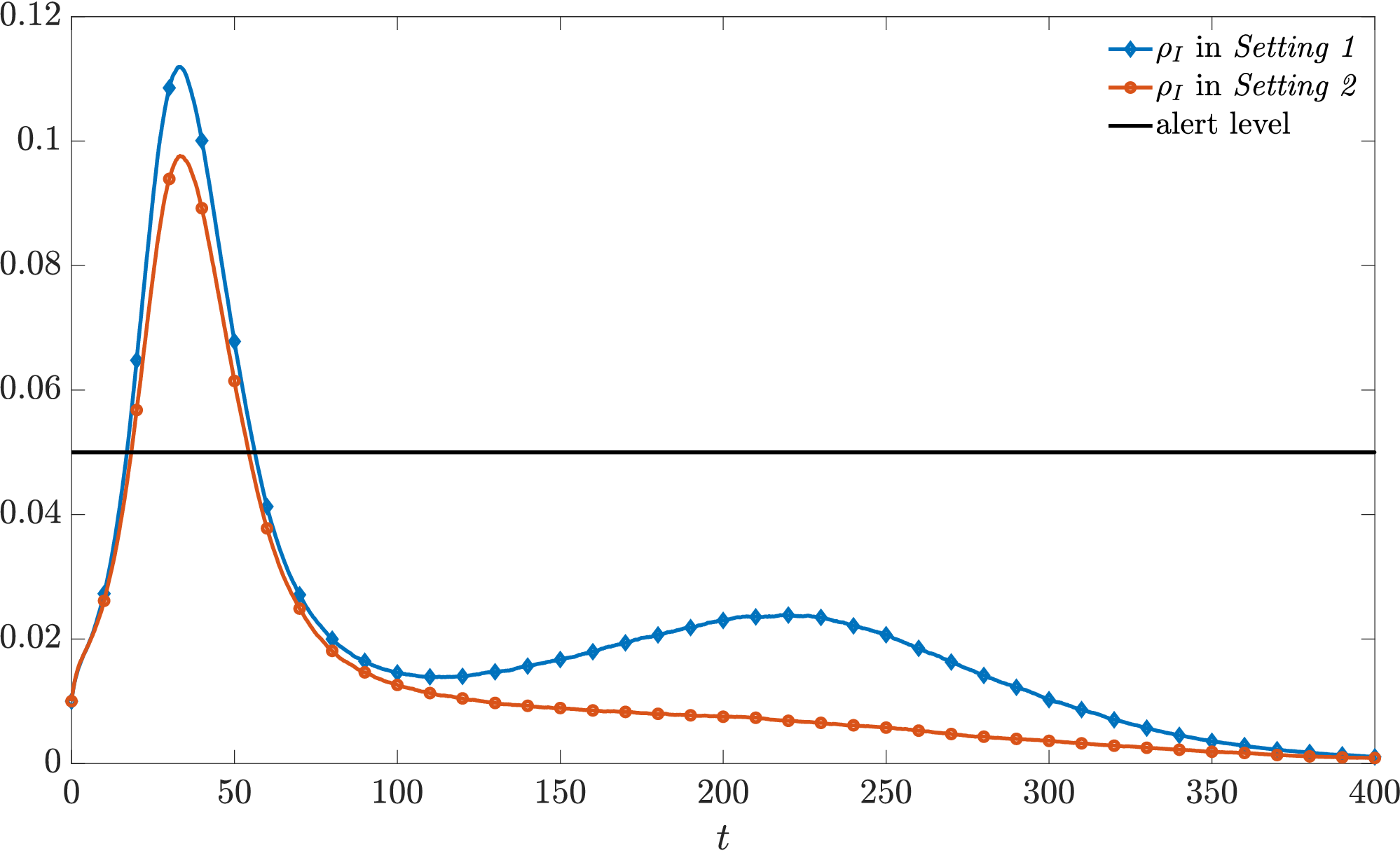} \quad
		\includegraphics[width=0.4\textwidth]{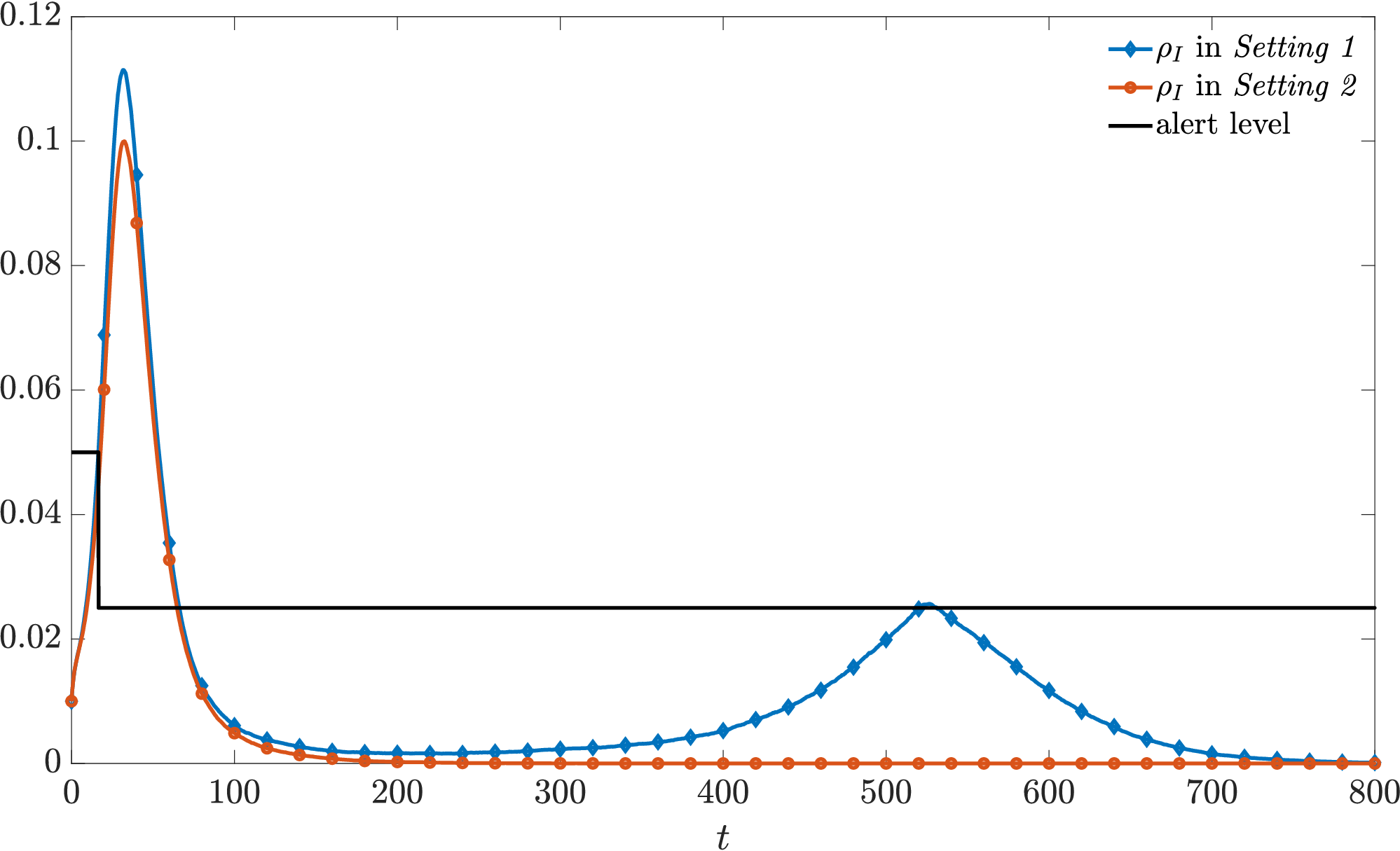}
		\caption{{\em Test 4a (left) and Test 4b (right)}. Time evolution of the infected population  $\rho_I(t)$ considering {\em Setting 1}, and {\em Setting 2}, respectively when influencers are against and in favor of protective measures. Here we set $\tau_1 = 10^{-2}$ and $\tau_2 = 10^{-3}$. The black solid line shows the value of the alert level. 
		}
		\label{fig:test3a_confronto}
	\end{figure}

	\section*{Conclusions}
	In this work, we have introduced a novel approach that integrates traditional epidemiological models with social behavior and opinion dynamics. This integration allows for a more comprehensive understanding of how diseases spread in the context of modern information dissemination in the presence of a static contact structure. More in details, we have extended a classical compartmental SEIR (Susceptible-Exposed-Infectious-Recovered) model to include opinions on protective measures influenced by social media and influential individuals (influencers). The new proposed model incorporates a kinetic approach to describe how opinions about protective measures form and evolve within a population. This includes the influence of social networks and key figures on public perception and behavior.
	
	In the second part, the numerical results demonstrate that social media opinions and influencers can significantly alter the trajectory of an epidemic. Positive influence can enhance the adoption of protective measures, while negative influence can lead to resistance and increased spread of the disease. This new model captures the occurrence of infection waves, which are influenced by the dynamics of social opinions and behaviors. This reflects real-world phenomena observed for example during the COVID-19 pandemic.
	
	This integrated model provides insights that can help policymakers and public health officials design more effective interventions. By understanding the role of social behavior, strategies can be tailored to improve public compliance with protective measures. In the future, we would like to explore additional factors, such as the role of physical contacts and  real social network structures extrapolated, to further enhance the model's accuracy and effectivity.

	\section*{Acknowledgments}
	This work has been written within the activities of the GNCS and GNFM groups of INdAM. G.A. and E.C. acknowledge the support of the Italian Ministry of University and Research (MUR) through the MUR-PRIN Project 2022 PNRR (No. P2022JC95T) “Data-driven discovery and control of multi-scale interacting artificial agent systems”, financed by the European Union - Next Generation EU. G.A.  acknowledges MUR-PRIN Project 2022 No.
	2022N9BM3N “Efficient numerical schemes and optimal control methods for time-dependent
	partial differential equations” financed by the European Union - Next Generation EU. G.D. and M.Z. acknowledge the support of MUR-PRIN 2020 project (No. 2020JLWP23) “Integrated Mathematical Approaches to Socio–Epidemiological Dynamics”. M.Z. acknowledges the support of the ICSC – Centro Nazionale di Ricerca in High Performance Computing, Big Data and Quantum Computing, funded by European Union - NextGeneration EU.

\bibliographystyle{abbrv}
\bibliography{biblio_SEIR_new}

\end{document}